\newcommand{\Al}{|S|}
\newcommand{\QQ}{\mathbb{Q}}
\newcommand{\Ac}{\mathcal{A}}
\newcommand{\Cc}{\mathcal{C}}
\renewcommand{\mod}{\mbox{mod }}
\newcommand{\R}{\mathbb{R}}
\newcommand{\C}{\mathbb{C}}
\newcommand{\Q}{\mathbb{Q}}
\newcommand{\ds}{\displaystyle}
\newcommand{\qa}[3]{({#1},{#2})_{#3}}
\newcommand{\mm}{\mathfrak{m}}
\def\ov#1{{\overline{#1}}}
\newcommand{\func}[4]{\begin{aligned}\newline #1&\longrightarrow #2 \cr \newline #3 &\longmapsto #4\end{aligned}}
\newcommand{\Nrd}{\mathrm{Nrd}}
\newcommand{\rr}{\mathbb{R}}
\newcommand{\SH}{skew-Hermitian}
\newcommand{\MO}{mutually orthogonal}
\newcommand{\mbfs}{\mathbf{s} }
\newcommand{\mbfv}{\mathbf{v} }
\newcommand{\mbfw}{\mathbf{w} }
\newcommand{\vecC}{\text{Vec}_\C}
\newcommand{\vecR}{\text{Vec}_\R}
\newtheorem{proposition}{Proposition}
\newtheorem{lemma}[proposition]{Lemma}
\newtheorem{corollary}[proposition]{Corollary}
\newtheorem{theorem}{Theorem}
\newtheorem{defi}{Definition}
\theoremstyle{remark}
\newtheorem{remark}{Remark}
\newtheorem{example}{Example}
\begin{document}

\title{Bounds on fast decodability of space-time block codes, skew-Hermitian matrices, and Azumaya algebras}  
\date\today

\author{Gr\'egory Berhuy, Nadya Markin, and B. A. Sethuraman
\thanks{Gr\'egory Berhuy is with Universit\'e Joseph Fourier, Institut Fourier, 100 rue des maths, BP 74, F-38402 Saint Martin d'H\'eres Cedex, France.
E-mail: Gregory.Berhuy@ujf-grenoble.fr} 
\thanks{Nadya Markin is with Division of Mathematical Sciences, School of Physical and Mathematical Sciences, Nanyang Technological University, Singapore.
E-mail: NMarkin@ntu.edu} 
\thanks{B.A. Sethuraman is with Department of Mathematics, California State University Northridge,  Northridge, CA 91330, USA. 
E-mail: al.sethuraman@csun.edu} 

}

\maketitle

\begin{abstract} We study fast lattice decodability of space-time block codes for $n$ transmit and receive antennas, written very generally as a linear combination $\sum_{i=1}^{2l} s_i A_i$, where the $s_i$ are real information symbols and the $A_i$ are $n\times n$ $\R$-linearly independent complex valued matrices.  We show that 
the mutual orthogonality condition $A_iA_j^* + A_jA_i^*=0$ for distinct basis matrices is not only sufficient but also necessary for
fast decodability.  We build on this to show that for full-rate ($l = n^2$) transmission, the decoding complexity can be no better than $\Al^{n^2+1}$, where $\Al$ is the size of the effective real signal constellation. We also show that for full-rate transmission, $g$-group decodability, as defined in \cite {JR}, is impossible for any $g\ge 2$. We then use the theory of Azumaya algebras to derive bounds on the maximum number of groups into which the basis matrices can be partitioned so that the matrices in different groups are mutually orthogonal---a key measure of fast decodability.  We show that in general, this maximum number is of the order of only the $2$-adic value of $n$.  In  the case where the matrices $A_i$ arise from a division algebra, which is most desirable for diversity, we show that the maximum number of groups is only $4$. As a result, the decoding complexity for this case is no better than $\Al^{\lceil l/2 \rceil}$ for any rate $l$.
\end{abstract}

\begin{keywords} 
Fast Decodability, Full Diversity, Full Rate, Space-Time Code, Division Algebra, Azumaya Algebra.
\end{keywords}

%*************************************************************************************************************************%
%
% INTRO
%
%*************************************************************************************************************************%
\section{Introduction}\label{introsection}

Space-time block codes for multiple input multiple output communications with $n$ transmit and receive antennas and delay $n$ and where the channel is known to the receiver consist of $n\times n$ matrices $X=X(x_1, \dots, x_l)$, $l \le n^2$, where the symbols $x_i$ arise from a finite subset $S$ of the nonzero complex numbers.  The matrices are generally assumed to be linear in the $x_i$, so splitting each $x_i$ into its real and imaginary parts, we may write $X = \displaystyle\sum_{i=1}^{2l} s_i A_i$, where the $s_i$ are real valued drawn from the effective real signal constellation $S$, and the $A_i$ are fixed $\R$-linearly independent complex valued matrices.  The transmission process may then be modeled as one where points from a $2l$-dimensional lattice in $\R^{2n^2}$ are transmitted (with the lattice changing every time the channel parameters change), and the decoding modeled as a closest lattice-point search.

Since closest lattice-point searches are notoriously difficult in general (although approximate  decoding methods like sphere decoding  \cite{VB} exist, which, by restricting the  search points to a small region around the received point, speed up the process in small dimensions), much attention has been paid lately on selecting the matrices $A_i$ above so that the resulting lattice breaks off as nearly as possible into an orthogonal direct sum of smaller dimensional lattices generated by some subsets of the canonical basis vectors, \textit{no matter what the channel parameters} (see Remark \ref{rem_fd_sublattice} ahead for the interpretation of the previous work in terms of orthogonal sublattices).   
This then reduces the complexity of decoding from the worst case complexity $\Al^{2l}$ which arises from a brute-force checking of all ${2l}$-tuples from $S$, 
to the order of $\Al^{l'}$ for some $l' < 2l$,  where $l'$ depends on the dimensions of the orthogonal summands. Some examples of recent work on fast decoding include  \cite{BHV}, \cite {RenEtAl}, \cite{JR}, \cite{SR}, \cite{MO}, \cite{OVH}, \cite{LO}, \cite{SRGAg}, \cite{NR}. 
Many codes have been shown to have reduced decoding complexity; for instance, it is known that the Silver code has a  decoding complexity that is no higher than $\Al^5$ (instead of the possible $\Al^8$) \cite[Example 5]{JR}, considered in Example \ref{ex:bound_is_strict} ahead. 

By \textit{decoding complexity} we will mean throughout the complexity of the worst case decoding process whereby, upon possibly conditioning some variables, a brute-force check of the decoding metric is performed for all tuples from the remaining variables, possibly in parallel if the lattice has orthogonal direct summands.  This is to be contrasted with other decoding processes that may exist that avoid brute force checking of the metric for all tuples, such as the GDL decoder described in \cite{NR3}.

In this paper, we analyze the conditions on the basis matrices $A_i$ needed for reduced decoding complexity of space-time block codes arising from the phenomenon described above: the presence of orthogonal direct sums of smaller dimensional lattices generated by some subsets of the basis vectors of the transmitted lattice, no matter what the channel parameters.  
We show that the condition $A_iA_j^* + A_jA_i^*=0$ for various distinct basis matrices $A_i$ and $A_j$, previously considered in the literature primarily as a sufficient condition (\cite{JR} or \cite{SR} for instance, see also \cite{RenEtAl}), is actually a \textit{necessary} condition (although, this result had indeed been proven before \cite{YuenEtAl} using different techniques than ours, a fact we were unaware of: see Remark \ref{sufficiency_only}  ahead as well).    We analyze this condition further, using just some elementary facts about skew-Hermitian and Hermitian matrices, and show that for a full-rate code (i.e., where $l=n^2$), the decoding complexity cannot be improved below $\Al^{n^2+1}$. We also show that for a full-rate code, the transmitted lattice cannot be decomposed entirely as an orthogonal direct sum of smaller dimensional lattices generated by the  basis vectors (a condition referred to as $g$-group decodability by previous authors, for instance \cite{JR}.) 

We then drop the assumption of full rate and turn to the maximum number of orthogonal sublattices generated by basis vectors that is possible in the transmitted lattice;
the dimension of the various sublattices then controls the fast-decodability.  We use the theory of Azumaya algebras to show that the number of such summands is bounded above by $2 v_2(n)+4$  in general (where $v_2(n)$ is the $2$-adic value of $n$, i.e., the highest power of $2$ in the prime factorization of $n$).  In the process, we generalize the classical Radon-Hurwitz-Eckmann bound \cite{Ecm} on the number of unitary matrices of square $-1$ that skew commute.
Our method allows us to consider not just the general case but the special cases where the matrices $A_i$ arise from embeddings of matrices over division algebras, where the bound on the number of summands becomes even smaller.  In the case where the $A_i$ come from the embedding of a division algebra, which is of most interest since codes from division algebras satisfy the full diversity criterion,  we show that the maximum number of possible summands is very low: just $4$ in fact.  This then shows that the decoding complexity of a code arising from a division algebra cannot be made better than $\Al^{\lceil l/2 \rceil}$.

The paper is organized as follows: After some preliminary background on vectorizations of matrices and on Hermitian and skew-Hermitian matrices in Section \ref {secn_prelim}, we describe the system model and maximum likelihood decoding in  Section \ref{secn_sys_model}. We then discuss fast decodability in Section \ref{secn_fast_decod} and derive the equivalence of fast decodability to the mutual orthogonality of subsets of the basis matrices. In Section \ref{secn_mo_shmo} we analyze the mutual orthogonality condition using properties of skew-Hermitian and Hermitian matrices, and derive our lower bounds on the decoding complexity of full-rate codes. In Section \ref{AzAlg}, we use the theory of Azumaya Algebras to derive the bound on the number of orthogonal sublattices generated by basis vectors.  Necessary background from commutative algebra and Azumaya algebras is collected in the appendices.

\par\noindent\textit{Acknowledgements:} N. Markin was supported by the Singapore National Research Foundation under Research Grant NRF-RF2009-07. B.A. Sethuraman was supported by a U.S. National Science Foundation grant CCF-1318260. G. Berhuy and B.A. Sethuraman wish to thank Prof. Frederique Oggier and Nanyang Technological University, Singapore, for hosting their visit during which the ideas for this paper germinated. Portions of this paper were presented at the ISIT 2014 conference \cite{ISIT}.

%*************************************************************************************************************************%
%
% preliminaries
%
%*************************************************************************************************************************%
\section{Preliminaries}
\label {secn_prelim}

For any vector $\mbfv \in \C^n$, we let 
$$\vecR(\mbfv) = (Re(v_1), Im(v_1), \ldots, Re(v_n), Im(v_n))^t$$ be the vector in $\R^{2n}$ whose $2i-1^{th}$ coordinate is the real part of $\mbfv_i$ and whose $2i$-th coordinate is the imaginary part of $\mbfv_i$.  For any matrix $A\in M_n(\C)$, we will write $\vecC(A)$ for the vector in $\C^{n^2}$ obtained by stacking the entries of $A$ in some fixed order (e.g. column $1$ then column $2$, etc.).  To simplify notation, for a matrix $A$ in $M_n(\C)$, we will directly write $\vecR(A)$ for the vector $\vecR(\vecC(A))$ in $\R^{2n^2}$.

For two vectors $\mbfv$ and $\mbfw$ in $\C^n$, we write $\langle\mbfv,\mbfw\rangle_\C$ for the usual Hermitian product in $\C^n$, namely, $\langle\mbfv,\mbfw\rangle_\C = \mbfv \cdot \mbfw^* = \mbfv\cdot \overline{\mbfw}^t$ (where the superscript $t$ stands for transpose).  For two vectors $\mbfv$ and $\mbfw$ in $\R^n$, $\mbfv \cdot \mbfw$ will denote the dot product of the two vectors. For any matrix $A \in M_n(\C)$, we will write $A^*$ for the conjugate transpose of $A$, i.e., $A^* = \overline{A}^t$.
Also, we will write $\text{Tr}$ for the trace of a matrix, $\text{Re}$ for the real part of a complex number.

The following are elementary:

\begin{lemma} \label{Lemma_TrAB} For two matrices $A$ and $B$ in $M_n(\C)$, $\langle \vecC(A), \vecC(B)\rangle_\C = \text{Tr}(AB^*)$.
\end{lemma}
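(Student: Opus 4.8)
The plan is to unwind both sides into explicit sums over matrix entries and check that they coincide. Write $A=(a_{ij})$ and $B=(b_{ij})$. Because $\vecC$ stacks the entries of a matrix in a fixed order, the coordinates of $\vecC(A)$ are precisely the scalars $a_{ij}$, each occurring once, and the coordinates of $\vecC(B)$ are the $b_{ij}$ indexed in the \emph{same} way; hence $\langle \vecC(A),\vecC(B)\rangle_\C = \sum_{i,j} a_{ij}\overline{b_{ij}}$, a value that does not depend on which fixed ordering was chosen. On the other side, $(B^*)_{jk}=\overline{b_{kj}}$, so $(AB^*)_{ik}=\sum_j a_{ij}\overline{b_{kj}}$ and therefore $\Tr(AB^*)=\sum_i\sum_j a_{ij}\overline{b_{ij}}$. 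The two expressions are identical, which is the assertion.

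Alternatively --- and this is the version I would actually record, since it avoids any bookkeeping about the stacking order --- I would observe that both sides of the claimed identity are sesquilinear in the pair $(A,B)$ (linear in $A$, conjugate-linear in $B$), so it is enough to verify the equality when $A=E_{ij}$ and $B=E_{kl}$ run over the standard matrix units. Then $\vecC(E_{ij})$ and $\vecC(E_{kl})$ are standard basis vectors of $\C^{n^2}$, so their Hermitian product is $\delta_{ik}\delta_{jl}$; and since $E_{kl}^*=E_{lk}$ we get $E_{ij}E_{kl}^*=E_{ij}E_{lk}=\delta_{jl}E_{ik}$, whose trace is $\delta_{jl}\delta_{ik}$. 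The two sides agree on every pair of matrix units, hence on all $A,B$ by sesquilinearity.

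There is essentially no obstacle; the only points that deserve a moment's attention are that $\langle\cdot,\cdot\rangle_\C$ is conjugate-linear in its second argument, so the complex conjugation must be applied to $B$ on both sides of the equation, and that $\vecC$ must use one and the same ordering of the $n^2$ entries for $A$ and for $B$ --- but since the final identity is a sum over all index pairs $(i,j)$, the particular choice of ordering is immaterial to the result.
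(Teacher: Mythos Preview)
Your proof is correct; both the direct entrywise computation and the sesquilinearity-plus-matrix-units argument establish the identity cleanly. The paper itself does not supply a proof, simply declaring the lemma elementary, so your explicit verification is exactly the kind of routine check the authors had in mind.
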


\begin{lemma} \label{Lemma_RealDotProd} For two vectors $\mbfv$ and $\mbfw$ in $\C^n$, $\vecR(\mbfv) \cdot \vecR(\mbfw) = \text{Re}(\langle\mbfv,\mbfw\rangle_\C)$.
\end{lemma}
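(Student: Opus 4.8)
The plan is to reduce the identity to a one-dimensional computation and then expand directly. Observe first that both sides are sums over the coordinate index. By the definition of $\vecR$, the left-hand side equals $\sum_{k=1}^n\bigl(\text{Re}(v_k)\text{Re}(w_k) + \text{Im}(v_k)\text{Im}(w_k)\bigr)$; and since $\text{Re}(\cdot)$ is $\R$-linear and $\langle\mbfv,\mbfw\rangle_\C = \sum_{k=1}^n v_k\overline{w_k}$, the right-hand side equals $\sum_{k=1}^n \text{Re}(v_k\overline{w_k})$. So it suffices to prove, for each index $k$, the scalar identity $\text{Re}(v_k\overline{w_k}) = \text{Re}(v_k)\text{Re}(w_k) + \text{Im}(v_k)\text{Im}(w_k)$.

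For that, I would fix $k$, write $v_k = a + b\sqrt{-1}$ and $w_k = c + d\sqrt{-1}$ with $a,b,c,d\in\R$, and compute $v_k\overline{w_k} = (a+b\sqrt{-1})(c-d\sqrt{-1}) = (ac+bd) + (bc-ad)\sqrt{-1}$, whose real part is $ac+bd$. Since $a = \text{Re}(v_k)$, $b = \text{Im}(v_k)$, $c = \text{Re}(w_k)$, $d = \text{Im}(w_k)$, this is precisely the claimed scalar identity; summing over $k$ gives the lemma.

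There is no real obstacle here: the argument is a direct expansion. The only point worth a line of care is the placement of the conjugation, which the convention above fixes on the second argument of $\langle\,,\,\rangle_\C$; reversing it would flip the sign of the imaginary part of $v_k\overline{w_k}$ but leave its real part unchanged, so the stated identity is in fact insensitive to that choice. I would also note in passing that the same computation, combined with Lemma \ref{Lemma_TrAB}, yields $\vecR(A)\cdot\vecR(B) = \text{Re}(\text{Tr}(AB^*))$ for $A,B\in M_n(\C)$ (applying the lemma to $\mbfv = \vecC(A)$ and $\mbfw = \vecC(B)$), which is the form in which this result will be used in the later sections.
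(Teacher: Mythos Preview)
Your proof is correct; it is exactly the direct coordinate expansion one would write down for this statement, and the paper itself omits the proof entirely, calling the lemma elementary. Your closing observation that combining this with Lemma~\ref{Lemma_TrAB} yields $\vecR(A)\cdot\vecR(B) = \text{Re}(\text{Tr}(AB^*))$ is also how the paper proceeds (Corollary~\ref{vec_mat_dot_prod}).
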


We immediately get the following corollary: 

\begin{corollary} \label{vec_mat_dot_prod} For two matrices $A$ and $B$ in $M_n(\C)$, we have $\vecR(A) \cdot \vecR(B) = \text{Re}\left(\text{Tr}(AB^*) \right)$. In particular, for matrices $A$ and $B$, $\vecR(A)$ and $\vecR(B)$ are orthogonal in $\R^{2n^2}$ if and only if $\text{Re}\left(\text{Tr}(AB^*) \right)=0$.

\end{corollary}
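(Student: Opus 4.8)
The plan is to obtain the identity immediately by composing the two preceding lemmas. First I would apply Lemma \ref{Lemma_RealDotProd} to the vectors $\mbfv = \vecC(A)$ and $\mbfw = \vecC(B)$, which live in $\C^{n^2}$ (so the relevant real space is $\R^{2n^2}$, not $\R^{2n}$); this gives
\[
\vecR(\vecC(A)) \cdot \vecR(\vecC(B)) = \text{Re}\big(\langle \vecC(A), \vecC(B)\rangle_\C\big).
\]
By the notational convention fixed in Section \ref{secn_prelim}, whereby $\vecR(A)$ abbreviates $\vecR(\vecC(A))$, the left-hand side is precisely $\vecR(A) \cdot \vecR(B)$.

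Next I would rewrite the Hermitian product on the right using Lemma \ref{Lemma_TrAB}, which identifies $\langle \vecC(A), \vecC(B)\rangle_\C$ with $\text{Tr}(AB^*)$. Substituting this in yields the claimed identity $\vecR(A) \cdot \vecR(B) = \text{Re}\big(\text{Tr}(AB^*)\big)$. The ``in particular'' clause is then immediate, since two real vectors are orthogonal exactly when their dot product vanishes, so orthogonality of $\vecR(A)$ and $\vecR(B)$ in $\R^{2n^2}$ amounts to $\text{Re}(\text{Tr}(AB^*)) = 0$.

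There is essentially no obstacle in this argument; the only point requiring a moment's care is the dimension bookkeeping --- applying Lemma \ref{Lemma_RealDotProd} to vectors of length $n^2$ rather than $n$ --- together with unwinding the abbreviation $\vecR(A) = \vecR(\vecC(A))$ so that the two lemmas chain together cleanly.
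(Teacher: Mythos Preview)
Your proposal is correct and is exactly the intended argument: the paper presents this result as an immediate corollary of Lemmas \ref{Lemma_TrAB} and \ref{Lemma_RealDotProd}, and your composition of the two lemmas (with the dimension adjustment from $n$ to $n^2$ and the unwinding of the convention $\vecR(A)=\vecR(\vecC(A))$) is precisely how they chain together.
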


We recall that a matrix $A \in M_n(\C)$ is Hermitian if $A^* = A$, and skew-Hermitian if $A^* = -A$. The matrix $\imath I_n$ (where $\imath$ is a square root of $-1$ and $I_n$ is the identity $n\times n$ matrix) is \SH.  The set  $H_n$ of all Hermitian matrices and the set $SH_n$ of all \SH \ matrices in $M_n(\C)$ each forms a vector space over $\R$, each of dimension $n^2$.  Moreover, for any Hermitian matrix $A$, $\imath A$ is \SH, and for every \SH \ matrix $B$, $\imath B$ is Hermitian. Every matrix can be written uniquely as a sum of a Hermitian and a \SH \ matrix, i.e., $M_n(\C) \cong H_n \oplus SH_n$ as $\R$-vector spaces. We will need to use these facts in the paper.

%*************************************************************************************************************************%
%
% System Model and Maximum Likelihood Decoding
%
%*************************************************************************************************************************%

\section{System Model and Maximum Likelihood Decoding }\label{secn_sys_model}
We consider transmission over a quasi-static Rayleigh fading channel with perfect channel state information at the receiver. We assume that the number of receive antennas and the number of transmit antennas are the same, namely $n$, and we assume the the block length, i.e., the number of times we transmit through the channel before processing, is also $n$. The codewords are $n\times n$ complex valued matrices
$X=X(x_1, \dots, x_l)$, $l \le n^2$, where the symbols $x_i$ arise from a finite subset of the nonzero complex numbers.  The matrices $X$ are assumed to be linear in the $x_i$, so splitting each $x_i$ into its real and imaginary parts, we may write $X = \displaystyle\sum_{i=1}^{2l} s_i A_i$, where the $s_i$ are real symbols arising from the effective real alphabet $S$, and the $A_i$ are fixed $\R$-linearly independent complex valued matrices. We will assume throughout the paper that the $A_i$ are invertible, which is not a significant constraint, since  invertible matrices form a dense subset of $n \times n$ complex matrices; besides, when the space-time code is fully diverse (which is the desirable situation), the matrices $A_i$ are necessarily constrained to be invertible.

The received codeword is given by \begin{equation} \label {system_model}
Y = HX + N
\end{equation}
where $H \in M_n(\C)$ is the channel matrix and $N \in M_n(\C)$ is the noise matrix.  It is assumed that the entries of $H$ are i.i.d. circularly symmetric complex Gaussian with zero mean and variance $1$, and the entries of $N$ are i.i.d. complex Gaussian with zero mean and variance $N_0$.

The statistics of $N$ shows that Maximum-likelihood (ML) decoding amounts to finding the information symbols $s_1, \dots, s_{2l}$ that result in a codeword 
$X =  \displaystyle\sum_{i=1}^{2l} s_i A_i$  which minimizes the squared Frobenius norm
\begin{equation}
\label{frob-min}
||Y-HX||_F^2.
\end{equation}

The transmission process may be modeled as  one where points from a $2l$-dimensional lattice in $\R^{2n^2}$ are transmitted, with the lattice changing every time the channel matrix $H$ changes, and the decoding modeled as a closest lattice-point search in $\R^{2n^2}$. We do this  as follows: We convert the matrices appearing in Equation \ref{system_model}  to vectors in complex space and then further split the complex entries into their real and imaginary parts: 
$$
\vecR(  Y ) = \displaystyle\sum_{i=1}^{2l} s_i \vecR(HA_i) + \vecR(N).
$$

We define $T=T(H)$ to be the $2n^2 \times {2l}$ matrix over $\R$ whose $i$-th column is $\vecR(HA_i)$. 
Then we have 
$$ 
 \displaystyle\sum_{i=1}^{2l} s_i \vecR(HA_i)= T(s_1, \dots, s_{2l})^t $$

Thus, $T = T(H)$ is the basis matrix for the $2l$-dimensional lattice in $\R^{2n^2}$ from which points are  transmitted. Writing $\mbfs$ for the vector $(s_1, \dots, s_{2l})^t$, the decoding problem now becomes to 
find a maximum likelihood estimate for the symbols $s_1$, $\dots$, $s_{2l}$ from the linear system of equations in $\R^{2n^2}$
\begin{equation} \label{basic_system}
\vecR(  Y ) = T\cdot \mbfs+ \vecR(N), 
\end{equation} where the entries of $\vecR(N)$ are i.i.d. real Gaussian. In other words, the decoding problem is to find an information vector $\mbfs=(s_1, \dots, s_{2l})^t$ which minimizes the Euclidean distance  
\begin{equation} \label{basic_system_min}
|\vecR(  Y ) - T\mbfs|
\end{equation}
of vectors in $\R^{2n^2}$.

Note that the transmitted lattice matrix $T=T(H)$ in Equation \ref{basic_system} above depends on the channel matrix $H$.  

%*************************************************************************************************************************%
%
% Fast Lattice Decodability
%
%*************************************************************************************************************************%

\section{Fast Lattice Decodability} \label{secn_fast_decod}

Several authors (\cite{BHV}, \cite{JR}) studied fast lattice decodability of space-time codes by considering a $QR$ decomposition of the transmitted lattice matrix $T$  in Equation \ref{basic_system} above (as in the sphere decoder), and rewriting Equation \ref{basic_system} as
\begin{equation}
Q^* \vecR(Y) = R\cdot \mbfs+ Q^*\vecR(N).
\end{equation}
Since $Q^*$ is unitary, the new noise vector $ Q^*\vecR(N)$ is still i.i.d. real Gaussian, so the maximum likelihood estimate for $\mbfs$ is given by minimizing $|Q^* \vecR(Y) -R\cdot \mbfs|$. Fast lattice decodability as defined in \cite{BHV}, \cite{JR} involves choosing the basis matrices $A_i$ so that for all $H$, the matrix $R$ (which depends on $T(H)$ and hence on $H$), has zeros in certain convenient places (see Equation (\ref{fast_decod_matrix})
 ahead in the statement of Theorem \ref{fd_means_nice_T}, for instance). These places are such that  decoding can proceed, after fixing certain $s_i$ if necessary, as parallel decoding of smaller sets of variables, enabling thereby a reduction in complexity.  We will study this process in this section, 
and prove the main result that enables us in the remaining sections to analyze bounds on fast decodability: the equivalence of fast decodability to mutual orthogonality of subsets of the basis matrices $A_i$ (Theorem \ref{thm:FDMOequiv}).

\begin{defi} \label{defn_MO}
We say that two complex matrices, $A, B$ are {\emph{mutually orthogonal}} if 
$AB^*+BA^*=0.$
\end{defi}

We chose this term because, as we  show in  Theorem \ref{orthog_means} below, two basis matrices $A_i$ and $A_j$ satisfy the relation $A_iA_j^* + A_jA_i^* = 0$ if and only if the $i$-th and $j$-th columns of $T$ are mutually orthogonal as vectors in $\R^{2l}$. (Although our proof is new, see Remark \ref{sufficiency_only}  ahead.) The following lemma shows that mutually orthogonal matrices are necessarily $\R$-linearly independent:

\begin{lemma} \label {lem:mo_implies_li}
If $A_1$, $\dots$, $A_n$ are pairwise mutually orthogonal invertible matrices in $M_n(\C)$, then they are $\R$-linearly independent.
\end{lemma}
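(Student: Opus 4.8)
The plan is to argue by contradiction, reducing a hypothetical linear dependence to a statement that contradicts positive-definiteness of $A_iA_i^*$. So suppose there exist real scalars $c_1, \dots, c_n$, not all zero, with $\sum_{i=1}^{n} c_i A_i = 0$.

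The first step is to right-multiply this relation by its own conjugate transpose. Since the $c_i$ are real, $\left(\sum_{k=1}^n c_k A_k\right)^* = \sum_{k=1}^n c_k A_k^*$, so we obtain
$$
0 = \left(\sum_{i=1}^n c_i A_i\right)\left(\sum_{k=1}^n c_k A_k^*\right) = \sum_{i,k} c_i c_k\, A_i A_k^*.
$$
The second step is to separate this double sum into its diagonal part $\sum_{i} c_i^2\, A_i A_i^*$ and its off-diagonal part $\sum_{i \neq k} c_i c_k\, A_i A_k^*$, and to observe that the off-diagonal part vanishes: grouping the terms into conjugate pairs $\{(i,k),(k,i)\}$ with $i<k$, each pair contributes $c_i c_k\,(A_i A_k^* + A_k A_i^*) = 0$ by the mutual orthogonality hypothesis. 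Hence $\sum_{i=1}^n c_i^2\, A_i A_i^* = 0$.

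For the final step, note that each $A_i$ is invertible by assumption, so $A_i A_i^*$ is Hermitian and positive definite, since $x^* A_i A_i^* x = |A_i^* x|^2 > 0$ for every nonzero $x$. As the coefficients $c_i^2$ are nonnegative and at least one is strictly positive, $\sum_{i=1}^n c_i^2\, A_i A_i^*$ is a positive definite matrix, in particular nonzero — contradicting what was just derived. Therefore no nontrivial real linear dependence exists, and $A_1, \dots, A_n$ are $\R$-linearly independent.

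I do not expect a genuine obstacle here; the only step that is not purely mechanical is recognizing that multiplying the dependence by its own conjugate transpose makes the cross terms cancel in conjugate pairs via the mutual orthogonality relations, after which positive-definiteness of $A_iA_i^*$ closes the argument immediately.
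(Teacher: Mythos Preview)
Your argument is correct. The only difference from the paper's proof is one of packaging: the paper fixes an index $i$, right-multiplies the relation $\sum_j r_j A_j = 0$ by $A_i^*$, left-multiplies the conjugate-transposed relation by $A_i$, and adds, so that mutual orthogonality kills all cross terms and leaves $2r_i A_iA_i^* = 0$ directly, whence $r_i = 0$ by invertibility alone. You instead multiply the relation by its own conjugate transpose in one shot, obtaining $\sum_i c_i^2 A_iA_i^* = 0$, and then invoke positive definiteness of each $A_iA_i^*$ to force every $c_i$ to vanish. Both routes exploit the same cancellation of cross terms; the paper's version isolates each coefficient without needing the positive-definiteness observation, while yours trades that for a single global multiplication.
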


\begin {proof} Assume that $r_1 A_1 + \cdots + r_n A_n =0$. Multiplying this equation on the right by $A_i^*$, and multiplying the conjugate transpose form of this equation on the left by $A_i$, and then adding, we find $2 r_i A_iA_i^* = 0$.  Since the $A_i$ are invertible, we find $r_i = 0$.
\end{proof}

\begin{theorem} \label{orthog_means} The $i$-th and $j$-th columns of $T = T(H)$ are orthogonal as vectors in $\R^{2l}$ for all channel matrices $H$  if and only 
the basis matrices $A_i$ satisfy 
$A_iA_j^* + A_jA_i^* = 0$.\end{theorem}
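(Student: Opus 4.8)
The plan is to convert the column-orthogonality condition into a statement about traces using Corollary \ref{vec_mat_dot_prod}, and then to analyze that statement with the Hermitian/skew-Hermitian decomposition recalled in Section \ref{secn_prelim}. By Corollary \ref{vec_mat_dot_prod}, the $i$-th and $j$-th columns of $T(H)$, namely $\vecR(HA_i)$ and $\vecR(HA_j)$, are orthogonal if and only if $\text{Re}\big(\text{Tr}(HA_i(HA_j)^*)\big)=0$, that is, if and only if $\text{Re}\big(\text{Tr}(HMH^*)\big)=0$, where I set $M=A_iA_j^*$. Since $M^*=A_jA_i^*$, the condition $A_iA_j^*+A_jA_i^*=0$ is exactly the assertion that $M$ is skew-Hermitian, so the theorem reduces to proving: for $M\in M_n(\C)$, one has $\text{Re}\big(\text{Tr}(HMH^*)\big)=0$ for every $H\in M_n(\C)$ if and only if $M$ is skew-Hermitian. (It is immaterial whether $H$ ranges over all of $M_n(\C)$ or only over the invertible matrices, since the latter are dense and $H\mapsto\text{Re}(\text{Tr}(HMH^*))$ is continuous.)

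For the ``if'' direction I would note that when $M$ is skew-Hermitian, $(HMH^*)^*=HM^*H^*=-HMH^*$, so $HMH^*$ is skew-Hermitian; its diagonal entries are then purely imaginary, hence so is its trace, giving $\text{Re}\big(\text{Tr}(HMH^*)\big)=0$. For the converse, I would write $M=P+Q$ with $P=\frac12(M+M^*)$ Hermitian and $Q=\frac12(M-M^*)$ skew-Hermitian, using $M_n(\C)\cong H_n\oplus SH_n$. Then $HMH^*=HPH^*+HQH^*$ is the Hermitian/skew-Hermitian splitting of $HMH^*$, so by the ``if'' computation $\text{Re}\big(\text{Tr}(HMH^*)\big)=\text{Tr}(HPH^*)$. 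The hypothesis therefore reads $\text{Tr}(HPH^*)=0$ for all $H$, and it remains to conclude $P=0$, which yields $M=Q$ skew-Hermitian. Here I would cycle the trace, $\text{Tr}(HPH^*)=\text{Tr}(P\,H^*H)$, and specialize $H$ to the matrix whose first row is $\mbfv^*$ and whose other rows vanish; then $H^*H=\mbfv\mbfv^*$, so the hypothesis gives $\mbfv^*P\mbfv=\text{Tr}(P\mbfv\mbfv^*)=0$ for every $\mbfv\in\C^n$, whence the Hermitian matrix $P$ is zero.

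The only step that is not routine is this final specialization: recognizing that $\text{Tr}(HPH^*)$ is the pairing of $P$ against the positive semidefinite matrix $H^*H$, and that testing against the rank-one matrices $\mbfv\mbfv^*$ already suffices to force a Hermitian $P$ to vanish. Everything else --- the trace identities, the fact that a skew-Hermitian matrix has purely imaginary trace, the decomposition $M_n(\C)\cong H_n\oplus SH_n$, and the density reduction to invertible $H$ --- is elementary and already available in Sections \ref{secn_prelim} and \ref{secn_fast_decod}.
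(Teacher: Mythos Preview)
Your argument is correct. Both you and the paper begin identically: reduce to showing that $\text{Re}\big(\text{Tr}(HMH^*)\big)=0$ for all $H$ if and only if $M=A_iA_j^*$ is skew-Hermitian, and handle the ``if'' direction by observing that a skew-Hermitian matrix (in your case $HMH^*$, in the paper's case the product of $M$ with the Hermitian matrix $H^*H$) has purely imaginary trace.

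The converses differ. The paper works entrywise: it cycles to $\text{Re}\big(\text{Tr}(M\,H^*H)\big)=0$ and then plugs in specific matrices $H$ (namely $E_{k,k}$, $E_{k,k}+E_{k,l}$, and $E_{k,k}-\imath E_{k,l}$) whose products $H^*H$ isolate individual entries of $M$, concluding directly that the diagonal of $M$ is purely imaginary and that $m_{l,k}=-\overline{m_{k,l}}$. Your route is more structural: you split $M=P+Q$ into Hermitian plus skew-Hermitian parts, note that conjugation by $H$ preserves this splitting, reduce to $\text{Tr}(HPH^*)=0$ for all $H$, and finish by testing against rank-one $H^*H=\mbfv\mbfv^*$ to get $\mbfv^*P\mbfv=0$, forcing the Hermitian $P$ to vanish. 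Your version is cleaner and exposes the underlying bilinear-form content (the pairing $(P,S)\mapsto\text{Tr}(PS)$ on Hermitian matrices is nondegenerate, and the positive semidefinite cone spans $H_n$); the paper's version is more hands-on but requires no outside facts beyond matrix multiplication.
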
 
\begin{proof}
We have already noted (Corollary \ref{vec_mat_dot_prod} applied to the definition of the matrix $T$) that the orthogonality of the $i$-th and $j$-th columns of $T$ 
is equivalent to the condition $\text{Re}\left(\text{Tr}((HA_i)(HA_j)^*) \right) = 0$. Also, 
note that $\text{Tr}((HA_i)(HA_j)^*) = \text{Tr}(HA_i A_j^* H^*) = \text{Tr}((A_iA_j^*)(H^*H))$, where the second equality is because $\text{Tr}(XY) = \text{Tr}(YX)$ for two matrices $X$ and $Y$.  

Now assume that $A_iA_j^* + A_jA_i^* = 0$ for $i\neq j$.  Then $A_iA_j^*$ is skew-Hermitian, while $H^*H$ is of course Hermitian.  If $M$ is skew-Hermitian and $P$ is Hermitian, then note that $(MP)^* = P^*M^* = -PM$.  Since for any matrix $X$ we have $\text{Re}(\text{Tr}(X)) = \text{Re}(\text{Tr}(X^*))$, we find that for $X=MP$,  $\text{Re}(\text{Tr}(MP)) = \text{Re}(\text{Tr}((MP)^*)) = \text{Re}(\text{Tr}(-PM)) = -\text{Re}(\text{Tr}(PM)) = -\text{Re}(\text{Tr}(MP))$.  It follows that $\text{Re}(\text{Tr}(MP)) = 0$.  In particular, for $M=A_iA_j^*$ and $P = H^*H$, we find $0=\text{Re}(\text{Tr}(A_iA_j^*)(H^*H)) = \text{Re}(\text{Tr}(HA_i)(A_j^*H^*)) =\text{Re}(\text{Tr}(HA_i)(HA_j)^*)$.

Now assume that the trace condition holds. We write this as $\text{Re}\left(\text{Tr}((A_iA_j^*)(H^*H))\right) = 0$ for all matrices $H$. Write $M$ for $A_iA_j^*$.  We wish to show that $M$ is skew-Hermitian. The matrix $E_{k,k}$ that has $1$ in the $(k,k)$ slot and zeros elsewhere satisfies $E_{k,k}^* E_{k,k} = E_{k,k}$. Choosing $H = E_{k,k}$, we find that the matrix $MH^*H = ME_{k,k}$ will have the $k$-th column of $M$ in the $k$-th column, and zeros elsewhere.  The trace condition now shows that the $(k,k)$ element of $M$ is purely imaginary.
We next need to show that $m_{l,k} = - \overline{m_{k,l}}$ for $k\neq l$, where we have written $m_{i,j}$ for the $(i,j)$-th entry of $M$.
Computing directly, we find the following relations hold (where $E_{i,j}$ has $1$ in the $(i,j)$ slot and zeros everywhere else):

\begin{eqnarray*}
E_{k,k} + E_{k,l} + E_{l,k} + E_{l,l} &=& (E_{k,k} + E_{l,k})\cdot (E_{k,k} + E_{k,l}) \\
E_{k,k} - \imath E_{k,l} + \imath E_{l,k} + E_{l,l} &=& (E_{k,k} + \imath E_{l,k})\cdot (E_{k,k} -\imath E_{k,l}) \\
\end{eqnarray*}
Thus, each of the matrices on the left sides of the two equations above can be written as $H^*H$ for suitable matrices $H$.
Again computing directly, we find  that $M\cdot(E_{k,k} + E_{k,l} + E_{l,k} + E_{l,l})$ has 
$m_{k,k} + m_{k,l}$ in the $(k,k)$ slot and $m_{l,k} + m_{l,l}$ in the $(l,l)$ slot, and zeros elsewhere in the diagonal. Hence, 
 $\text{Re}(\text{Tr} (M\cdot(E_{k,k} + E_{k,l} + E_{l,k} + E_{l,l}))) = \text{Re}( m_{k,k} + m_{k,l} + m_{l,k} + m_{l,l} )$.  Since we have already seen that the diagonal elements of $M$ are purely imaginary, we find $\text{Re}(m_{k,l} + m_{l,k}) = 0$. Similarly, we find 
$\text{Re}(\text{Tr} (M\cdot(E_{k,k} - \imath E_{k,l} + \imath E_{l,k} + E_{l,l} ))) = \text{Re}( m_{k,k} + \imath m_{k,l} -\imath m_{l,k} + m_{l,l} )$. Once again, because the diagonal elements of $M$ are purely imaginary, we find $\text{Im}(m_{k,l} -m_{l,k}) = 0$. These two together show that $m_{l,k} = - \overline{m_{k,l}}$ for $k\neq l$.  Together with the fact that the diagonal elements of $M $ are purely imaginary, we find $M = A_iA_j^*$ is skew-Hermitian, as desired.

\end{proof}

\begin{remark} \label{sufficiency_only} As mentioned in Section \ref{introsection}, the sufficiency of the condition $A_iA_j^* + A_jA_i^* = 0$ for orthogonality of the columns of $T$ and hence for fast decodability was already considered before (\cite[Theorem 2]{SR}, \cite[Theorem 1]{RenEtAl}). What is new here is the necessity of the condition. It is the consequences of the necessity that enables us to analyze lower bounds on fast decodability in the sections ahead by studying the consequences of the condition $A_iA_j^* + A_jA_i^* = 0$.  We should remark, however, that we noticed 
after we proved our results, that the authors of the paper \cite{RenEtAl} also mention the necessity of this condition.  However, they do not give a proof of the necessity in that paper. Tracking this further, we discovered that the authors of \cite{YuenEtAl} have actually provided a proof of this result.  Their proof is  by an explicit computation. Indeed, they write down the entries of $T(H)$, blockwise, in terms of the matrices $H$ and $A_i$, and compute $T(H)^*T(H)$. From the derived block structure of  $T(H)^*T(H)$ they read off the necessity of the mutual orthogonality. This is of course very different from our approach.

\end{remark}

The theorem above allows us to define fast-decodability of a code in terms of its generating matrices, independently of the channel matrix $H$.

\begin{defi}\label {deffd} [See  e.g., \cite[Definition 5]{JR}] 
We will say that the space-time  block code defined by the matrices $X = \sum_{i=1}^{2l} s_i A_i$ admits fast (lattice) decodability if for $g \geq 2$ there exist disjoint 
subsets $\Gamma_1$, $\dots$, $\Gamma_g, \Gamma_{g+1}$, with $\Gamma_{g+1}$ possibly empty, of cardinalities $n_1$, $\dots$, $n_g$, $n_{g+1}$ respectively, whose union is $\{1,\ldots, 2l\}$, 
such that for all $u\in \Gamma_i$ and $v\in \Gamma_j$ ($1 \le i < j \le g $), the generating matrices $A_u, A_v$ are mutually orthogonal. 

\end{defi}

\begin{remark} 
Given a code that admits fast (lattice) decodability, 
 we can define a permutation  
$$\pi:  \{1, \dots, 2l\}\rightarrow \Gamma_1 \cup \ldots \cup \Gamma_g \cup \Gamma_{g+1} ,$$ 
which sends 
 the first $n_1$ elements $\{1, \ldots, n_1\}$ to $\Gamma_1$,
 the next $n_2$ elements $\{n_1+1, \ldots, n_1+n_2\}$ to $\Gamma_2$ 
 and so on, 
where, as in Defintion \ref{deffd}, $n_i  = |\Gamma_i|$ for $i = 1, \ldots, g+1$. Given such permutation $\pi$, we write $T_\pi $ (or $ T_\pi(H) $ for emphasized  dependence on $H$) for the matrix 
{whose $i$-th column is the $\pi(i)$-th column of $T(H)$, namely,} 
$
\vecR(HA_{\pi(i)})$. Similarly, given the vector  $\mbfs = (s_1, \dots, s_{2l})^t$, we write $\mbfs_\pi$ for the vector whose $i$-th component is the $\pi(i)$-th component of $\mbfs$.
\label{rem:deffd}.\end{remark}

{We are now able to link Definition \ref{deffd} of fast-decodability to that given in \cite[Definition 4]{JR}. While the latter definition invokes the channel matrix $H$, the two definitions are actually equivalent, for we have the following result: 
}
\begin{theorem} \label{fd_means_nice_T}  The space-time block  code $X = \sum_{i=1}^{2l} s_i A_i$ admits fast (lattice) decodability as per Definition \ref{deffd} if and only if  there exists a permutation $\pi$ of the index set  $\{1, \dots, 2l\}$, integers $g\ge 2$, $n_i \ge 1$ ($i=1, \dots, g$), and $n_{g+1} \ge 0$, with  $n_1 + \cdots + n_{g+1} = 2l$, such that for all channel matrices $H$, the matrix $R$ obtained by doing a $QR$ decomposition on $T_\pi = T_\pi(H)$ by doing a Gram-Schmidt orthogonalization in the order first column, then second column, and so on, has the special block form below:

\begin{equation}\label {fast_decod_matrix}
\left(
\begin{array}{ccccccc}
 B_1 &   &  & &  & N_1\\
  &  B_2 &     & &  & N_2\\
  &   & \ddots    & & & N_3 \\
    & &   & & B_g & N_g\\
      & &   & &  & N_{g+1}\\
      & &   & &  & 
\end{array}
\right)
\end{equation} for some matrices $B_1$, $\dots$, $B_g$, and $N_1$, $\dots$, $N_{g+1}$.
Here, all empty spaces are filled by zeros, the $B_i$ are of size $n_i \times n_i$ and the $N_{i}$ are of size $n_{i} \times n_{g+1}$.

\end{theorem}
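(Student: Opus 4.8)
The plan is to prove the two directions of the equivalence by analyzing the Gram–Schmidt process applied to the columns of $T_\pi(H)$ and keeping careful track of which inner products vanish identically in $H$. The crucial input is Theorem \ref{orthog_means}: the $u$-th and $v$-th columns of $T(H)$ are orthogonal for \emph{all} $H$ precisely when $A_uA_v^* + A_vA_u^* = 0$. So the combinatorial pattern of vanishing entries in $R$ is governed entirely by the mutual-orthogonality pattern among the $A_i$, and is independent of $H$.

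\medskip

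\noindent\textbf{Forward direction.} Suppose the code admits fast decodability as in Definition \ref{deffd}, with groups $\Gamma_1,\dots,\Gamma_g,\Gamma_{g+1}$ and the permutation $\pi$ of Remark \ref{rem:deffd} that lists $\Gamma_1$ first, then $\Gamma_2$, and so on. First I would recall the standard fact that if $T_\pi$ has full column rank (which holds generically in $H$, and in the cases of interest, since the $A_i$ are $\R$-linearly independent and invertible --- one may also argue the zero pattern persists by continuity), then $R$ is the upper-triangular matrix whose $(i,j)$ entry for $i \le j$ is $\langle q_i, t_j\rangle$, where $t_j$ is the $j$-th column of $T_\pi$ and $q_i$ is the $i$-th Gram–Schmidt vector, which lies in the span of $t_1,\dots,t_i$. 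The key step is then: if $\pi(i) \in \Gamma_a$ and $\pi(j)\in \Gamma_b$ with $a < b \le g$, then $R_{i,j} = 0$. I would prove this by induction on $i$: since $q_i$ is an $\R$-linear combination of $t_1,\dots,t_i$, all of which are columns $t_k$ with $\pi(k)$ in groups $\Gamma_1,\dots,\Gamma_a$, it suffices to show each such $t_k$ is orthogonal to $t_j$; but $\pi(k)$ and $\pi(j)$ lie in distinct groups among $\Gamma_1,\dots,\Gamma_g$, so mutual orthogonality of $A_{\pi(k)}$ and $A_{\pi(j)}$ together with Theorem \ref{orthog_means} gives $\langle t_k, t_j\rangle = 0$, hence $\langle q_i, t_j\rangle = 0$. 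This forces exactly the block structure of \eqref{fast_decod_matrix}: the columns indexed by $\Gamma_{g+1}$ (the last $n_{g+1}$ columns under $\pi$) are unconstrained and form the last block-column of $N_i$'s, while for two genuine groups $a<b$ the corresponding off-diagonal block is zero, leaving only the diagonal blocks $B_1,\dots,B_g$.

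\medskip

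\noindent\textbf{Converse direction.} Suppose such a $\pi$, such integers, and such a block form for $R$ exist for all $H$. Define $\Gamma_a$ to be the set of indices $\pi(i)$ as $i$ ranges over the block of rows/columns corresponding to $B_a$, for $a=1,\dots,g$, and $\Gamma_{g+1}$ the indices corresponding to the last $n_{g+1}$ columns. I must show that for $u\in\Gamma_a$, $v\in\Gamma_b$ with $a<b\le g$, we have $A_uA_v^* + A_vA_u^* = 0$; by Theorem \ref{orthog_means} it is enough to show the corresponding columns $t_k$ and $t_j$ of $T_\pi$ (where $\pi(k)=u$, $\pi(j)=v$, and $k<j$) are orthogonal for all $H$. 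Here I recover the columns of $T_\pi$ from $Q$ and $R$ via $t_j = \sum_{i\le j} R_{i,j}\, q_i$, and similarly $t_k = \sum_{i\le k} R_{i,k}\, q_i$; since the $q_i$ are orthonormal, $\langle t_k, t_j\rangle = \sum_{i\le k} R_{i,k} R_{i,j}$. Now for $i \le k$, the index $i$ lies in one of the first $a$ blocks, so $i < j$ and $i$ is not in block $b$; the assumed zero pattern of $R$ says $R_{i,j} = 0$ whenever $i$ is in a strictly earlier diagonal block than $j$ (the off-diagonal diagonal-block entries all vanish), so every term in the sum is zero and $\langle t_k,t_j\rangle = 0$. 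This holds for all $H$, so Theorem \ref{orthog_means} yields mutual orthogonality, i.e. Definition \ref{deffd} is satisfied.

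\medskip

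\noindent\textbf{Main obstacle.} The routine linear algebra of $QR$ is not the difficulty; the delicate point is the rank/genericity issue. When $T_\pi(H)$ drops rank the $QR$ decomposition is not unique and the ``$R_{i,j} = \langle q_i, t_j\rangle$'' formula needs care. I would handle this by observing that the columns of $T_\pi(H)$ have pairwise inner products that are polynomial (indeed, real-bilinear) functions of the entries of $H$ and $\bar H$, so the identically-vanishing inner products --- exactly those coming from mutually orthogonal pairs --- vanish on all of $M_n(\C)$, and the Gram–Schmidt coefficients depend continuously on $H$ off a proper closed subset; the block structure, being a closed condition, then holds everywhere by continuity. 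The other point worth stating cleanly is that the permutation $\pi$ and the partition into blocks in the hypothesis of the converse are genuinely the same data as the $\Gamma_i$'s in Definition \ref{deffd}, which is just bookkeeping once the sizes $n_i$ are matched up.
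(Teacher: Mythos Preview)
Your proposal is correct and follows essentially the same route as the paper: both directions hinge on Theorem \ref{orthog_means}, with the forward direction using that Gram--Schmidt vectors $q_i$ lie in the span of earlier columns (so inherit their orthogonality to later groups), and the converse using that orthogonality of columns of $R$ transfers to $T_\pi = QR$ because $Q$ has orthonormal columns --- your computation $\langle t_k,t_j\rangle = \sum_{i\le k} R_{i,k}R_{i,j}$ is exactly the inner product of the $k$-th and $j$-th columns of $R$. Your added discussion of the rank/genericity issue is more careful than the paper, which simply asserts the $QR$ structure for all $H$ without comment.
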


Before we prove this, we remark in more detail why previous authors have been interested in the special form of $R$ above: On applying the permutation $\pi$ to Equation \ref {basic_system}, we get
$\vecR(  Y ) = T_\pi\cdot \mbfs_\pi+ \vecR(N)$, and then, as in the beginning of this section, premultiplying by $Q^*$ we find $Q^*\vecR(  Y ) = R\cdot \mbfs_\pi+ Q^*\vecR(N)$.
It is clear from the block structure of the matrix $R$ that after fixing the values of the last $n_{g+1}$ variables in $\mbfs_\pi$, the remaining variables can be decoded in $g$ parallel steps, the $i$-th step involving $n_i$ variables.  The decoding complexity for this system is then of the order of $\Al^{n_{g+1}+\max{n_i}}$, where $\Al$ is the size of the effective real constellation $S$. This is in contrast to the complexity of $\Al^{2l}$ if the matrix $R$ has no special structure.

\begin{proof} If $X$ is fast decodable as per Definition \ref{deffd}, then as described in Remark \ref{rem:deffd}, the subsets $\Gamma_1, \ldots, \Gamma_g, \Gamma_{g+1}$  provide a permutation $\pi$ of $\{1, \dots, 2l\}$, and integers $g\ge 2$, $n_1$, $\dots, n_g, n_{g+1}$ with the properties described. 

Definition \ref{deffd} and Theorem \ref{orthog_means} also tell us that  every column of $T_\pi$ indexed by elements of 
$\pi^{-1}(\Gamma_i)$ is orthogonal to every column indexed by the elements of  $\pi^{-1}(\Gamma_j)$ ($1 \le i < j \le g$).
 It follows immediately that on applying a QR decomposition to $T_\pi$ in the order first column, then second column, etc., that the $R$ matrix, which results from the Gram-Schmidt orthogonalizations of the columns of $T_\pi$ in this order, will have the property that the columns indexed by $\pi^{-1}(\Gamma_i)$ will be perpendicular to those indexed by $\pi^{-1}(\Gamma_j)$.  This can be seen easily from how the Gram-Schmidt process works, but this can also be checked from the explicit form of the matrix $R$ obtained from this Gram-Schmidt orthogonalization, described for instance in \cite[Section III]{BHV} or \cite[Section VI] {SR}.

As for the other direction, assume that there is a {permutation} $\pi$ of $\{1, \dots, 2l\}$ and 
integers $g\ge 2$, $n_i \ge 1$ ($i=1, \dots, g$), and $n_{g+1} \ge 0$,
with  $n_1 + \cdots + n_{g+1} = 2l$,  such that for all $H$, $T_\pi(H) = QR$, where $Q$ is unitary and $R$ has the form as in Equation (\ref {fast_decod_matrix}) above.  Define the sets $\Gamma_i$ in terms of the integers $n_i$ as in Remark \ref{rem:deffd}, namely $\Gamma_1 = \pi(\{1, \ldots, n_1\})$ is the image of the first $n_1$ elements $\{1, \ldots, n_1\}$, $\Gamma_2$ is the image of the next $n_2$ elements, and so on.  It is clear from the block form of $R$ that for any {$u\in \Gamma_i$ and $v\in \Gamma_j$ ($1 \le i < j \le 2l$), the $\pi^{-1}(u)$-th and $\pi^{-1}(v)$-th columns of $R$} are orthogonal as vectors in $\R^{2n^2}$.  Since $Q$ is unitary, the same holds for the matrix $T_\pi(H)$. {Equivalently, the $u$-th and $v$-th columns of $T$ are orthogonal for all $H$. }
Thus, by Theorem \ref{orthog_means}, $A_{u}$ and $A_{v}$ are mutually orthogonal, so $X$ is fast decodable as per Definition \ref{deffd}.
\end{proof}

We summarize what we have shown in the next corollary: 

{
\begin{corollary} The following are equivalent for disjoint subsets  $\Gamma_i , \Gamma_j \subset \{1, \ldots , 2 l\}$: 
\begin{itemize}
\item for all $u\in \Gamma_i$ and $v\in \Gamma_j$ 
$$  A_uA_v^* + A_vA_u^* = 0. $$
\item for all $u\in \Gamma_i$ and $v\in \Gamma_j$, the $u$-th and $v$-th columns of $T=T(H)$ are orthogonal as real vectors for any $H$.
\item there exists a permutation $\pi$ on the index set $\{1, \ldots, 2l\}$ so that 
such that the matrix $R$ 
arising as in the statement of Theorem \ref{fd_means_nice_T} has a zero block in the entries 
$(\pi^{-1}(\Gamma_i), \pi^{-1}(\Gamma_j))$ and 
$(\pi^{-1}(\Gamma_j), \pi^{-1}(\Gamma_i))$. 
\end{itemize}
\label {thm:FDMOequiv}
\end{corollary}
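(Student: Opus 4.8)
The plan is to recognize that all three bullets are repackagings of Theorem~\ref{orthog_means} together with (the proof of) Theorem~\ref{fd_means_nice_T}, and to verify the cycle of implications accordingly. At the outset I would dispose of the degenerate case in which $\Gamma_i$ or $\Gamma_j$ is empty, where all three statements hold vacuously; so assume $n_i := |\Gamma_i| \ge 1$ and $n_j := |\Gamma_j| \ge 1$. The equivalence of the first two bullets is then immediate: Theorem~\ref{orthog_means} says that for a single pair of indices $u,v$ the $u$-th and $v$-th columns of $T(H)$ are orthogonal for every $H$ precisely when $A_uA_v^* + A_vA_u^* = 0$, and quantifying this over all $u\in\Gamma_i$ and $v\in\Gamma_j$ gives the claimed equivalence with nothing further to check.

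Next I would prove that the second bullet implies the third. The point is to choose $\pi$ cleverly: take $\pi$ so that $\{1,\dots,n_i\}$ maps onto $\Gamma_i$, the next $n_j$ indices map onto $\Gamma_j$, and the remaining indices map onto the complement, so that $\pi^{-1}(\Gamma_i) = \{1,\dots,n_i\}$ and $\pi^{-1}(\Gamma_j) = \{n_i+1,\dots,n_i+n_j\}$ are consecutive initial blocks. Under the hypothesis of the second bullet, every column of $T_\pi$ indexed by $\pi^{-1}(\Gamma_i)$ is orthogonal to every column indexed by $\pi^{-1}(\Gamma_j)$. Running Gram--Schmidt on the columns of $T_\pi$ in order — exactly the process producing the matrix $R$ of Theorem~\ref{fd_means_nice_T} — the orthonormal vectors obtained while processing the first $n_i$ columns span the same subspace as those columns; hence when a later column belonging to $\pi^{-1}(\Gamma_j)$ is processed, its projections onto all of these already-produced vectors vanish, forcing the $(\pi^{-1}(\Gamma_i),\pi^{-1}(\Gamma_j))$ block of $R$ to be zero. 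The mirror block $(\pi^{-1}(\Gamma_j),\pi^{-1}(\Gamma_i))$ lies strictly below the diagonal of the upper-triangular $R$ and is therefore automatically zero. This is precisely the mechanism used in the forward direction of the proof of Theorem~\ref{fd_means_nice_T}, and can alternatively be read off from the explicit formula for $R$ in \cite[Section III]{BHV} or \cite[Section VI]{SR}.

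Finally I would close the loop by showing the third bullet implies the second. Given a permutation $\pi$ for which the matrix $R$ of the QR decomposition of $T_\pi$ has the block form of Theorem~\ref{fd_means_nice_T} with $\pi^{-1}(\Gamma_i)$ and $\pi^{-1}(\Gamma_j)$ among its diagonal blocks, I would observe that in that block form a column of $R$ lying in a diagonal block has nonzero entries only in the rows of that same block; hence a column indexed by $\pi^{-1}(\Gamma_i)$ and one indexed by $\pi^{-1}(\Gamma_j)$ have disjoint row-supports and are orthogonal. Writing $T_\pi = QR$ with $Q$ having orthonormal columns, this orthogonality transfers to the corresponding columns of $T_\pi$, i.e., the columns of $T$ indexed by $\Gamma_i$ are orthogonal to those indexed by $\Gamma_j$ for every $H$. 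That is the second bullet, and by Theorem~\ref{orthog_means} also the first, which completes the cycle.

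The main obstacle — in fact the only subtle point — is the bookkeeping around $\pi$ in the step from the second bullet to the third: one must arrange that $\pi^{-1}(\Gamma_i)$ and $\pi^{-1}(\Gamma_j)$ occupy consecutive initial index-blocks, so that the relevant off-diagonal block of the upper-triangular $R$ is genuinely forced to vanish by the Gram--Schmidt computation rather than merely being orthogonal to some other block. Everything else is a direct appeal to the two theorems already established.
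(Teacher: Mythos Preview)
Your proposal is correct and matches the paper's approach: the paper itself offers no separate argument for this corollary, merely stating ``We summarize what we have shown in the next corollary,'' so the intended proof is exactly the one you spell out --- Theorem~\ref{orthog_means} for the equivalence of the first two bullets, and the two directions of (the proof of) Theorem~\ref{fd_means_nice_T} for the link with the third. One small point of phrasing: in your $(3)\Rightarrow(2)$ step you describe $R$ as having ``the block form of Theorem~\ref{fd_means_nice_T},'' whereas the third bullet literally only asserts two zero blocks; your disjoint-row-support argument goes through cleanly once you note (as you did in the forward direction) that the relevant $\pi$ places $\pi^{-1}(\Gamma_i)$ and $\pi^{-1}(\Gamma_j)$ in the initial positions, so upper-triangularity plus the one nontrivial zero block already forces the disjoint supports.
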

}

\begin{corollary} Definition \ref{deffd} of fast decodability is equivalent to one given in \cite[Definition 4]{JR}. 
\end{corollary}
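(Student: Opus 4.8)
The plan is to show that Definition~\ref{deffd} and the definition of fast decodability given in \cite[Definition 4]{JR} describe the same class of codes, by passing through the intrinsic characterization already established in Theorem~\ref{fd_means_nice_T}. The key observation is that \cite[Definition 4]{JR} defines a code to be fast decodable precisely when, for some ordering of the variables, the $R$-factor of the $QR$-decomposition of the (channel-dependent) lattice generator matrix has the block shape displayed in Equation~(\ref{fast_decod_matrix}), for \emph{all} channel matrices $H$. Theorem~\ref{fd_means_nice_T} states exactly that a code satisfies Definition~\ref{deffd} if and only if there is a permutation $\pi$ of the index set so that the $R$-factor of a Gram--Schmidt $QR$-decomposition of $T_\pi(H)$ has this block form for all $H$. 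So the two notions coincide.

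The steps I would carry out, in order: first, recall verbatim the formulation of \cite[Definition 4]{JR}, translating its notation into the present notation---in particular, noting that what \cite{JR} calls the generator matrix of the lattice is our $T(H)$ (up to the harmless column permutation $\pi$ that reorders the real symbols into the groups $\Gamma_1,\dots,\Gamma_{g+1}$), and that the ``Hurwitz--Radon''-style zero pattern required there is exactly the block form of Equation~(\ref{fast_decod_matrix}). Second, invoke Theorem~\ref{fd_means_nice_T}: a code is fast decodable in the sense of Definition~\ref{deffd} iff such a $\pi$ exists making the $R$-factor of $T_\pi(H)$ have the displayed shape for all $H$. Third, observe that this last condition is \emph{literally} the requirement in \cite[Definition 4]{JR}, once one matches up $\pi$ with the reordering of symbols into groups. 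Hence the two definitions are equivalent. I would also remark that the block form is insensitive to the particular unitary completion $Q$ chosen, since the zero pattern is a statement about orthogonality of columns of $T_\pi(H)$, which by Corollary~\ref{thm:FDMOequiv} is in turn equivalent to the mutual-orthogonality relations $A_uA_v^*+A_vA_u^*=0$ among the basis matrices---a channel-independent, purely algebraic condition.

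The main (and really the only) obstacle is bookkeeping rather than mathematics: one must be careful that the grouping of variables in \cite[Definition 4]{JR} and the disjoint sets $\Gamma_1,\dots,\Gamma_g,\Gamma_{g+1}$ of Definition~\ref{deffd} are put in correspondence correctly, including the role of the ``conditioned'' group $\Gamma_{g+1}$ (the last $n_{g+1}$ columns, which contribute the right-hand column-block of $N_i$'s in Equation~(\ref{fast_decod_matrix}) and are allowed to interact with every other group). One should also make explicit that ``for all $H$'' appears on both sides, so no quantifier mismatch arises. Once these identifications are spelled out, the corollary is immediate from Theorem~\ref{fd_means_nice_T}, and indeed this is why it is stated as a corollary: essentially no new argument is needed beyond unwinding definitions and appealing to the theorem just proved.
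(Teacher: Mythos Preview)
Your proposal is correct and matches the paper's approach exactly: the paper presents this corollary without a separate proof, treating it as an immediate consequence of Theorem~\ref{fd_means_nice_T} (as signaled by the sentence preceding that theorem). Your plan---identify \cite[Definition 4]{JR} with the $R$-matrix block-form condition of Equation~(\ref{fast_decod_matrix}) holding for all $H$, then invoke Theorem~\ref{fd_means_nice_T}---is precisely the intended argument, and the bookkeeping remarks you flag (matching the $\Gamma_i$ with the symbol groups, the role of $\Gamma_{g+1}$, and the ``for all $H$'' quantifier) are the only things to check.
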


\begin{remark} \label{rem_fd_sublattice} In the notation of Definition \ref{deffd}, let $L$ be the lattice in $\R^{2n^2}$ generated by the columns of $T = T(H)$, and let $L_i$  ($i=1, \dots, g$) be the sublattices generated by the basis vectors of $L$ coming from the columns in $\Gamma_i$ (of the permuted matrix $T_\pi$). Fast-decodability can clearly be rephrased as the presence of sublattices $L_i$ ($g \ge 2$) generated by subsets of the  basis vectors that are orthogonal to one another in $\R^{2n^2}$. Indeed, previous work on fast decodability can be described in this language: seeking large numbers of sublattices generated by basis vectors that are orthogonal to one another.
\end{remark}

\begin{defi} \label{defn_g_grp_decod} We say that the fast decodable code $X = \displaystyle\sum_{i=1}^{2l} s_i A_i$ is $g$-group decodable if it is fast (lattice) decodable and if {$\Gamma_{g+1}$  in Definition \ref{deffd} is empty, so the matrix  $R$ of Theorem \ref{fd_means_nice_T} has a block-diagonal form. }
\end{defi}
\begin{remark} \label {rem_g_grp_orthog_sum} As in the proof of Theorem \ref{fd_means_nice_T}, the block-diagonal structure of $R$ of a $g$-decodable code translates (via pre-multiplication by $Q$) to the partitioning of the columns of $T$ into $g$ groups, the columns from any one group being orthogonal to the columns in any other group. Since $T$ is the transmitted lattice matrix, we see that $g$-group decodability of the code is equivalent to the decomposition of the transmitted lattice into an orthogonal sum of smaller dimensional lattices generated by the basis vectors, no matter what the channel matrix $H$.
\end{remark}

%*************************************************************************************************************************%
%
% MO_SHMO: Bounds on decoding for full-rate codes
%
%*************************************************************************************************************************%
%
\section{Bounds on decoding complexity for full-rate codes} \label{secn_mo_shmo}

In this section, we will analyze the mutual orthogonality condition $A_iA_j^* + A_jA_i^* = 0$ of Theorem \ref{thm:FDMOequiv} and show that for full-rate codes, the best possible decoding complexity is not better than $\Al^{n^2+1}$ where $\Al$ is the size of the effective real constellation, and that $g$-group decoding is in fact not possible for full-rate codes. But first, we formalize the notion of decoding complexity:

\begin{defi} \label{def:decod_complexity}
The decoding complexity of the fast decodable space time code $X = \displaystyle\sum_{i=1}^{2l} s_i A_i$ is
defined to be $\Al^{n_{g+1} + \max_{1\le i \le g}{n_i}}$, where $n_i = |\Gamma_i|$, the $\Gamma_i$ as in Definition \ref{deffd}.
\end{defi}

Before delving into the main results of this section, we find it convenient to first gather a few lemmas concerning mutually orthogonal matrices that will be useful both here and in later sections.

\begin{lemma} \label{MO_under_matrix_mult} If matrices $A$ and $B$ are mutually orthogonal, so are $MA$ and $MB$ for any matrix $M$. If $M$ is invertible, then $A$ and $B$ are mutually orthogonal if and only if $MA$ and $MB$ are mutually orthogonal.
\end{lemma}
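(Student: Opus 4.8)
The plan is to prove the two assertions directly from the definition of mutual orthogonality, $AB^* + BA^* = 0$, by simple manipulation with the conjugate-transpose operation. First I would observe that $(MA)(MB)^* = MA B^* M^*$ and $(MB)(MA)^* = M B A^* M^*$, so that
\[
(MA)(MB)^* + (MB)(MA)^* = M\bigl(AB^* + BA^*\bigr)M^*.
\]
Hence if $A$ and $B$ are mutually orthogonal, the right-hand side is $M \cdot 0 \cdot M^* = 0$, giving that $MA$ and $MB$ are mutually orthogonal. This disposes of the first claim for arbitrary $M$.

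For the second claim, suppose $M$ is invertible. The forward direction is just the first claim. For the converse, assume $MA$ and $MB$ are mutually orthogonal, i.e. $M(AB^* + BA^*)M^* = 0$. Since $M$ is invertible, so is $M^*$ (its inverse is $(M^{-1})^*$), and left-multiplying by $M^{-1}$ and right-multiplying by $(M^*)^{-1}$ yields $AB^* + BA^* = 0$, so $A$ and $B$ are mutually orthogonal. Alternatively, one can simply apply the first claim to the matrices $MA$, $MB$ with multiplier $M^{-1}$, since $M^{-1}(MA) = A$ and $M^{-1}(MB) = B$.

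There is essentially no obstacle here: the only point requiring a line of justification is that $M^*$ is invertible when $M$ is, and that conjugate-transpose is multiplicative in the order-reversing sense, $(XY)^* = Y^* X^*$, both of which are standard. I would present the argument in three or four lines and move on.
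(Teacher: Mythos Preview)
Your proof is correct and is precisely the simple computation the paper has in mind; the paper's own proof consists of the single line ``This is a simple computation.'' You have spelled out exactly that computation, including the alternative route via applying the first part with $M^{-1}$, so there is nothing to add.
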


\begin{proof} This is a simple computation.
\end{proof}

\begin{lemma} \label{MO_yields_SH} If $A$ and $B$ are mutually orthogonal and $A$ is invertible, then $A^{-1}B$ is skew-Hermitian.
\end{lemma}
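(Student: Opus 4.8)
The plan is to unwind the definitions directly. We are given that $A$ and $B$ are mutually orthogonal, meaning $AB^* + BA^* = 0$, and that $A$ is invertible; we want to show $M := A^{-1}B$ is skew-Hermitian, i.e. $M^* = -M$, equivalently $M + M^* = 0$. First I would compute $M^* = (A^{-1}B)^* = B^*(A^{-1})^* = B^*(A^*)^{-1}$. So $M + M^* = A^{-1}B + B^*(A^*)^{-1}$.

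Next I would clear denominators: multiply on the left by $A$ and on the right by $A^*$, both of which are invertible, so that $M + M^* = 0$ if and only if $A(M+M^*)A^* = 0$. Carrying out the multiplication gives $A\cdot A^{-1}B\cdot A^* + A\cdot B^*(A^*)^{-1}\cdot A^* = BA^* + AB^*$, which is exactly the mutual orthogonality hypothesis $AB^* + BA^* = 0$. Hence $A(M + M^*)A^* = 0$, and since $A$ and $A^*$ are invertible we conclude $M + M^* = 0$, so $M = A^{-1}B$ is skew-Hermitian.

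There is really no obstacle here; the only thing to be slightly careful about is the identity $(A^{-1})^* = (A^*)^{-1}$, which is immediate since $A^*(A^{-1})^* = (A^{-1}A)^* = I_n$. Alternatively, one could avoid inverses on the conjugate-transpose side altogether by noting that $A^*$ being invertible means it suffices to check $A(M + M^*)A^*=0$; this is the cleanest route and the one I would write up.
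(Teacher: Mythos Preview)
Your proof is correct and is essentially the same computation as the paper's. The only cosmetic difference is that the paper invokes the preceding lemma (left-multiplication by an invertible matrix preserves mutual orthogonality) to pass from $A,B$ to $I_n, A^{-1}B$, which amounts to conjugating the relation $AB^*+BA^*=0$ by $A^{-1}$ on the left and $(A^*)^{-1}$ on the right---precisely the inverse of your step of multiplying $M+M^*$ by $A$ on the left and $A^*$ on the right.
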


\begin{proof} By Lemma \ref{MO_under_matrix_mult} above, $A^{-1}A=I_n$ and $A^{-1}B$ are mutually orthogonal.  Writing down the mutual orthogonality condition for these two matrices, we find that $A^{-1}B$ is skew-Hermitian.
\end{proof}

\begin{lemma} \label{lem:mo_to_ac}
The $g$ invertible matrices $A_1=I_n, A_2,\ldots,A_g\in\mathcal{A} \subseteq M_n(\C)$ are mutually orthogonal if and only if $A_i$ is skew-Hermitian for $i\geq 2$ and $A_2$, $\dots$, $A_g$ pairwise anticommute.
\end{lemma}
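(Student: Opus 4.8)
The plan is to reduce everything to the two-variable identities already in hand. Recall from Lemma~\ref{MO_yields_SH} that if $A$ and $B$ are mutually orthogonal and $A$ is invertible, then $A^{-1}B$ is skew-Hermitian; applying this with $A = A_1 = I_n$ shows directly that each $A_i$ ($i \ge 2$) is skew-Hermitian. So the content of the lemma is the equivalence, \emph{given} that $A_2,\dots,A_g$ are skew-Hermitian, between pairwise mutual orthogonality of $A_2,\dots,A_g$ and pairwise anticommutativity.

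First I would handle the forward direction. Suppose $A_1 = I_n, A_2, \dots, A_g$ are pairwise mutually orthogonal. For $i, j \ge 2$ with $i \ne j$, write out the mutual orthogonality condition $A_i A_j^* + A_j A_i^* = 0$ and substitute $A_i^* = -A_i$, $A_j^* = -A_j$ (valid since both are skew-Hermitian, as just established). This gives $-A_i A_j - A_j A_i = 0$, i.e. $A_i A_j + A_j A_i = 0$, which is exactly anticommutativity. For the converse, suppose each $A_i$ ($i \ge 2$) is skew-Hermitian and the $A_i$ pairwise anticommute. Mutual orthogonality of $A_1 = I_n$ with each $A_i$ is immediate: $I_n A_i^* + A_i I_n^* = A_i^* + A_i = -A_i + A_i = 0$. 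Mutual orthogonality of $A_i$ and $A_j$ for $i, j \ge 2$ distinct: $A_i A_j^* + A_j A_i^* = -A_i A_j - A_j A_i = -(A_i A_j + A_j A_i) = 0$ by anticommutativity. This closes the equivalence.

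There is essentially no obstacle here: the lemma is a bookkeeping translation between the bilinear ``$\ast$-form'' of mutual orthogonality and the cleaner symmetric form available once skew-Hermiticity lets us replace $A_i^*$ by $-A_i$. The one point that deserves a sentence is the use of Lemma~\ref{MO_yields_SH} to get skew-Hermiticity of the $A_i$ for free from mutual orthogonality with $A_1 = I_n$ — this is what makes the forward direction's substitution legitimate and is why the hypothesis $A_1 = I_n$ is included in the statement rather than an arbitrary invertible first matrix.
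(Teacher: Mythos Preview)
Your proof is correct and follows essentially the same approach as the paper's: both directions proceed by substituting $A_i^* = -A_i$ (using skew-Hermiticity, obtained from mutual orthogonality with $I_n$) to pass between the mutual orthogonality relation $A_iA_j^* + A_jA_i^* = 0$ and the anticommutation relation $A_iA_j + A_jA_i = 0$. The only cosmetic difference is that you invoke Lemma~\ref{MO_yields_SH} explicitly, whereas the paper just writes out the special case $A = I_n$ directly.
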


\begin{proof} 
Assume that $A_1=I_n, A_2,\ldots,A_g\in\mathcal{A} \subseteq M_n(\C)$ are mutually orthogonal. Since $I_n$ and $A_i$ are mutually orthogonal for $i\geq 2$, we find that 
$A_i$ is skew-Hermitian for $i\geq 2$.
In particular, for $i,j\geq 2, i\neq j$, we may replace $A_i^*$ by $-A_i$ and $A_j^*$  by $-A_j$ in the orthogonality relation to obtain the anticommuting relation   $A_iA_j+A_jA_i=0$.
Conversely, assume that $A_i$ is skew-Hermitian for $i\geq 2$ and $A_2$, $\dots$, $A_g$ pairwise anticommute.  We  clearly have 
$I_n A_i^*+A_i I_n=0$ for $i \ge 2$.  Using the \SH \ relation to replace the second factor in each summand of $A_iA_j + A_jA_i$ by the negative of its conjugate transpose, we find that the $A_i$, for $i = 2, \dots, g$ are mutually orthogonal.
\end{proof}

Our first result is the following:
\begin{theorem} \label {thm:max_in_groups} 
Assume that the code $X = \displaystyle\sum_{i=1}^{2l} s_i A_i$ admits fast decodability, and  
let $k = \displaystyle\min_{1\le i \le g}{n_i}$, where $n_i = |\Gamma_i|$, the $\Gamma_i$  as in Definition \ref{deffd}.
Then $n_1 + \cdots + n_g \le n^2 + k$.
\end{theorem}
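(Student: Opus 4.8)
The plan is to reduce everything to a statement about skew-Hermitian matrices that pairwise anticommute, and then count dimensions inside the $n^2$-dimensional $\R$-vector space $SH_n$ of skew-Hermitian matrices. First I would observe that since $X$ is fast decodable, the basis matrices $A_u$ ($u\in\Gamma_i$) and $A_v$ ($v\in\Gamma_j$) are mutually orthogonal whenever $i\neq j$ ($1\le i<j\le g$), by Definition \ref{deffd}. Pick any one group, say $\Gamma_1$, and any single matrix $B\in\{A_u : u\in\Gamma_1\}$; $B$ is invertible by hypothesis. By Lemma \ref{MO_under_matrix_mult}, after multiplying every $A_i$ on the left by $B^{-1}$, we may assume without loss of generality that $\Gamma_1$ contains the index of $I_n$. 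Then by Lemma \ref{MO_yields_SH} (or directly Lemma \ref{lem:mo_to_ac}), every matrix $A_v$ with $v\notin\Gamma_1$ is skew-Hermitian, since it is mutually orthogonal to $I_n\in\Gamma_1$.

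Next I would exploit the fact that the matrices in the groups $\Gamma_2,\dots,\Gamma_g$ are not just skew-Hermitian but pairwise mutually orthogonal \emph{across} groups, hence (being skew-Hermitian) pairwise anticommuting across groups, again by Lemma \ref{lem:mo_to_ac}. The key linear-algebra point is this: consider the real span inside $SH_n$ of the $n_1$ matrices $\{B^{-1}A_u : u\in\Gamma_1'\}$ for a \emph{single} fixed group among $\Gamma_2,\ldots,\Gamma_g$ of smallest size $k$ — wait, more precisely, I want to count the total dimension. The matrices $\{B^{-1}A_i : i\in\Gamma_2\cup\cdots\cup\Gamma_g\}$ all lie in $SH_n$, and they are $\R$-linearly independent (they are linearly independent by hypothesis, being a subset of a basis, and left multiplication by the invertible $B^{-1}$ preserves linear independence). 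That already gives $n_2+\cdots+n_g \le \dim_\R SH_n = n^2$, which is close but not quite the claim, since $\Gamma_1$ was singled out with its $I_n$ removed only once. The refinement to get the "$+k$" is to choose $\Gamma_1$ to be a group of \emph{minimal} size; then $n_1 = k$, and $n_1 + (n_2 + \cdots + n_g) \le k + n^2$, which is exactly $n_1+\cdots+n_g \le n^2+k$.

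Let me restate the logic cleanly for the write-up: let $k=\min_i n_i$ and relabel so that $\Gamma_1$ achieves this minimum, i.e. $n_1 = k$. Fix $B = A_w$ for some $w\in\Gamma_1$. Replacing each $A_i$ by $B^{-1}A_i$ (legitimate by Lemma \ref{MO_under_matrix_mult}, and it preserves $\R$-linear independence since $B^{-1}$ is invertible), we may assume $A_w = I_n$. For every $i$ in a group other than $\Gamma_1$, $A_i$ is mutually orthogonal to $A_w=I_n$, hence skew-Hermitian by Lemma \ref{MO_yields_SH}. So the $n_2+\cdots+n_g$ matrices $\{A_i : i\notin\Gamma_1\}$ all lie in $SH_n$; they remain $\R$-linearly independent (subset of the transformed basis), so $n_2+\cdots+n_g \le \dim_\R SH_n = n^2$. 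Adding $n_1 = k$ to both sides yields $n_1+\cdots+n_g \le n^2 + k$.

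The main obstacle — or rather the only subtlety to be careful about — is the bookkeeping: ensuring that the left-multiplication trick is applied globally to \emph{all} basis matrices so that $\R$-linear independence is preserved (Lemma \ref{MO_under_matrix_mult} and invertibility of $B$ handle this), and noticing that one must single out a group of minimal size to extract the extra "$+k$" rather than losing it. No hard estimate is involved; the content is the dimension count $\dim_\R SH_n = n^2$ from Section \ref{secn_prelim} together with Lemmas \ref{MO_under_matrix_mult} and \ref{MO_yields_SH}.
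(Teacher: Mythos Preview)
Your proof is correct and follows essentially the same approach as the paper: multiply on the left by the inverse of a matrix from one group, observe that the matrices in the remaining groups become skew-Hermitian, and count against $\dim_\R SH_n = n^2$. The only cosmetic difference is that the paper carries out the argument for an arbitrary group $\Gamma_i$ to obtain $(n_1+\cdots+n_g)-n_i\le n^2$ for every $i$ and then minimizes, whereas you relabel at the outset so that $\Gamma_1$ already has minimal size; your justification of linear independence (invertible left multiplication preserves it) is arguably cleaner than the paper's appeal to Lemma~\ref{lem:mo_implies_li}.
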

\begin{remark} \label {rem_less_if_k>2} In fact, we'll see later that if $k \ge 2$, then the sum $n_1 + \cdots + n_g \le n^2 + k-1$.
\end{remark}

We immediately get a high lower bound on the decoding complexity for full-rate codes from this theorem:
\begin{corollary} \label{rem_immediate_bound} 
The decoding complexity of a full-rate code of $n\times n$ matrices is at least $\Al^{n^2}$.  
 \begin{proof}
 Since a full-rate code has exactly $2n^2$ basis matrices, this theorem shows that the subset $\Gamma_{g+1}$ in Remark \ref{rem:deffd} must be 
of size at least $n^2 - k$, where $k = \displaystyle\min_{1\le i \le g}{n_i}$.  Having conditioned the symbols corresponding to $\Gamma_{g+1}$, decoding the first $g$ groups of symbols in parallel has a decoding complexity at least $\Al^k$, therefore the decoding complexity of the entire code must be at least $$\Al^{n^2-k}\cdot \Al^k = \Al^{n^2}.$$  \end{proof}
\end{corollary}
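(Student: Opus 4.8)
The substance of the final assertion lies in Theorem \ref{thm:max_in_groups}; once that is in hand, Corollary \ref{rem_immediate_bound} drops out by the elementary count indicated there. So I describe how I would prove Theorem \ref{thm:max_in_groups}.

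The plan is to use, one group at a time, right-multiplication by the conjugate transpose of a single matrix drawn from that group. Fix an index $j$ with $1\le j\le g$ and choose any $v\in\Gamma_j$ (possible since $n_j=|\Gamma_j|\ge 1$). Consider the $\R$-linear map $\psi\colon M_n(\C)\to M_n(\C)$ given by $\psi(X)=XA_v^*$. By the standing hypotheses $A_v$ is invertible, hence so is $A_v^*$, so $\psi$ is an $\R$-linear \emph{bijection}. Now let $i$ be any other index with $1\le i\le g$, $i\ne j$, and let $u\in\Gamma_i$. By Definition \ref{deffd} the matrices $A_u,A_v$ are mutually orthogonal, i.e. $A_uA_v^*+A_vA_u^*=0$; since $(A_uA_v^*)^*=A_vA_u^*$, this says precisely that $\psi(A_u)=A_uA_v^*$ is skew-Hermitian. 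Thus $\psi$ carries the real subspace $W_j:=\mathrm{span}_{\R}\{A_u : u\in\Gamma_i,\ i\ne j,\ 1\le i\le g\}$ into the space $SH_n$ of skew-Hermitian matrices, which by the facts recalled in Section \ref{secn_prelim} has $\R$-dimension $n^2$. As $\psi$ is injective, $\dim_{\R}W_j\le n^2$.

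The remaining step turns this into a statement about cardinalities. The matrices $\{A_u : u\in\Gamma_1\cup\cdots\cup\Gamma_g\}$ form a subset of the basis matrices $A_1,\dots,A_{2l}$, which are $\R$-linearly independent by hypothesis, so $\dim_{\R}W_j=\sum_{1\le i\le g,\ i\ne j}n_i$. Hence $\sum_{1\le i\le g,\ i\ne j}n_i\le n^2$ for every $j\in\{1,\dots,g\}$. Choosing $j$ to realize the minimum, $n_j=k=\min_{1\le i\le g}n_i$, and adding $n_j$ to both sides gives $n_1+\cdots+n_g=n_j+\sum_{i\ne j}n_i\le k+n^2$, which is the claim. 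Corollary \ref{rem_immediate_bound} is then immediate: a full-rate code has $2n^2$ basis matrices, so $n_{g+1}=2n^2-\sum_{i\le g}n_i\ge n^2-k$, and conditioning $\Gamma_{g+1}$ and then decoding the $g$ groups in parallel costs at least $\Al^{\,n^2-k}\cdot\Al^{\,k}=\Al^{\,n^2}$.

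I do not anticipate a genuine obstacle: the proof is short once one spots that $X\mapsto XA_v^*$ is the right map, the only real content being the identity $(A_uA_v^*)^*=A_vA_u^*$ together with the definition of mutual orthogonality. The two points needing care are (i) the passage from $\dim_{\R}W_j$ to $\sum_{i\ne j}n_i$, which is exactly where the standing $\R$-linear independence of the $A_i$ is used, and (ii) invertibility of $A_v$, which is where the other standing hypothesis enters. The same circle of ideas should give the sharper bound of Remark \ref{rem_less_if_k>2}: after rescaling so that the chosen $A_v\in\Gamma_j$ becomes $I_n$ (legitimate by Lemma \ref{MO_under_matrix_mult}), every matrix in the other groups is skew-Hermitian; if $k\ge 2$ then $\Gamma_j$ contains some $A_u$ not a scalar, and were $W_j$ to exhaust $SH_n$, testing $A_uA_w^*+A_wA_u^*=0$ against $A_w=\imath I_n$ and then against all $A_w\in SH_n$ would force $A_u$ to commute with all of $M_n(\C)$, hence $A_u\in\R I_n$, contradicting linear independence; so $\dim_{\R}W_j\le n^2-1$ and $n_1+\cdots+n_g\le n^2+k-1$.
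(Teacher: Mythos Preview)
Your proof is correct and follows essentially the same approach as the paper: both arguments push the matrices from the other $g-1$ groups into the $n^2$-dimensional space $SH_n$ via an invertible $\R$-linear map, then count. The only cosmetic difference is that the paper uses left multiplication by $A_v^{-1}$ (invoking Lemma \ref{MO_yields_SH}) whereas you use right multiplication by $A_v^*$, which reads the skew-Hermitian property of $A_uA_v^*$ directly off the definition of mutual orthogonality; your handling of linear independence via injectivity of $\psi$ is arguably cleaner than the paper's appeal to Lemma \ref{lem:mo_implies_li}, and your sketch for Remark \ref{rem_less_if_k>2} matches the content of Lemma \ref{what_if_rest_n^2}.
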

We will show later that the bound is actually higher: it is $\Al^{n^2+1}.$ 
\begin{corollary} \label{corol:num_grps_atmost_2} A full-rate code cannot be $g$-group decodable for $g \geq 3$.
\end{corollary}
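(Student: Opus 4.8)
The plan is to derive this directly from Theorem~\ref{thm:max_in_groups} by unwinding the definitions; no serious machinery is needed beyond that theorem and a counting argument.

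First I would translate the hypothesis. By Definition~\ref{defn_g_grp_decod}, a $g$-group decodable code is one for which $\Gamma_{g+1}$ in Definition~\ref{deffd} is empty, so $n_{g+1}=0$ and the sets $\Gamma_1,\dots,\Gamma_g$ already partition $\{1,\dots,2l\}$. Hence $n_1+\cdots+n_g = 2l$. For a full-rate code we have $l=n^2$, so this sum equals $2n^2$.

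Next I would feed this into Theorem~\ref{thm:max_in_groups}, which gives $n_1+\cdots+n_g \le n^2 + k$ with $k=\min_{1\le i\le g} n_i$. Combining with the previous paragraph yields $2n^2 \le n^2 + k$, i.e.\ $k \ge n^2$; since every $n_i \ge k$, each of the $g$ groups has size at least $n^2$. Finally, counting sizes: $g\,n^2 \le n_1 + \cdots + n_g = 2n^2$, so $g \le 2$. Therefore $g$-group decodability is impossible for $g\ge 3$, which is the claim.

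I do not expect a real obstacle here: the only point requiring care is the bookkeeping, namely that "$g$-group decodable'' forces $n_{g+1}=0$ (so that all $2l$ indices lie in the first $g$ groups) and that "full-rate'' means $2l = 2n^2$. As a side remark, the same argument shows that the only way a full-rate code could be $2$-group decodable is with $n_1=n_2=n^2$; one could then invoke Remark~\ref{rem_less_if_k>2} (the improved bound $n^2+k-1$ when $k\ge 2$) to exclude even that case, but this refinement is not needed for the corollary as stated.
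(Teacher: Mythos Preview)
Your proof is correct and follows essentially the same route as the paper: both use Theorem~\ref{thm:max_in_groups} together with $n_1+\cdots+n_g=2n^2$ to deduce $k\ge n^2$, and then conclude $g\le 2$. Your write-up simply makes the bookkeeping (that $g$-group decodability forces $n_{g+1}=0$, and the final counting $g\,n^2\le 2n^2$) more explicit than the paper does.
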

\begin{proof}
For, if a code is $g$-group decodable, then, written in the notation of Theorem \ref{thm:max_in_groups}, we have $2n^2 = n_1 + \cdots +n_g \leq n^2+k$, by the theorem. So $n^2 \leq k$, the number of elements in the smallest block, implying there can be at most $2$ blocks. 
\end{proof}
We will see later that $2$-group decodability is also not possible for full-rate codes.

We now prove the theorem. 
\begin{proof}[Proof of Theorem \ref{thm:max_in_groups}] Let us denote the basis matrices in the groups $\Gamma_i$ ($i=1,\dots, g$) by $A_{i,j}$, $j = 1, \dots, n_i$. Multiplying the matrices on the left by any one $A_{i,j}^{-1}$ (recall from the beginning of Section \ref{secn_sys_model} that we assume that the basis matrices are invertible), we replace one of the matrices in the $i$-th block by the identity matrix $I_n$, and as for the modified matrices in the  other blocks, they are now orthogonal to $I_n$ by Lemma \ref{MO_under_matrix_mult} above. By Lemma \ref {MO_yields_SH} above, the  modified matrices  $A_{i,j}^{-1} A_{k,l}$ in the remaining blocks are all skew-Hermitian as well.  Since the remaining matrices $A_{i,j}^{-1} A_{k,l}$ are also $\R$-linearly independent by Lemma \ref{lem:mo_implies_li}, and since the dimension of the space of skew-Hermitian $n\times n$ matrices over $\R$ is $n^2$ (Section \ref {secn_prelim}), we find that for each $i$, $(n_1 + \cdots + n_g) - n_i \le n^2$. The result now follows immediately.

\end{proof}

Our next few results will help us sharpen the bounds on decoding complexity we obtain from Theorem \ref{thm:max_in_groups} (see Corollary \ref{rem_immediate_bound}).

\begin{theorem} \label{maxSHMO}
There can be at most $n^2-1$ $\R$-linearly independent matrices in $M_n(\C)$ that are both \SH  \ and \MO.
\end{theorem}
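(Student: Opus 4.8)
The plan is to bound the dimension of the real vector space $V$ of matrices that are simultaneously skew-Hermitian and mutually orthogonal to each other. By Lemma~\ref{MO_yields_SH} together with Lemma~\ref{lem:mo_to_ac}, after fixing one such matrix $A_1$ and multiplying everything on the left by $A_1^{-1}$ (which by Lemma~\ref{MO_under_matrix_mult} preserves mutual orthogonality, and which one checks preserves the skew-Hermitian property up to the standard normalization), we are reduced to the following cleaner situation: a collection of matrices $I_n = B_1, B_2, \ldots, B_m$ that are $\R$-linearly independent, each $B_i$ ($i \ge 2$) skew-Hermitian, and the $B_i$ ($i \ge 2$) pairwise anticommuting. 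So it suffices to show that a set of pairwise anticommuting skew-Hermitian matrices in $M_n(\C)$, together with $I_n$, spans a subspace of dimension at most $n^2 - 1$; equivalently, the anticommuting skew-Hermitian family itself has size at most $n^2 - 2$, so that adjoining $I_n$ gives at most $n^2-1$.

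The key step is to exploit the anticommutation to produce a ``missing direction.'' Here is the mechanism I would use. Suppose $B_2, \ldots, B_m$ are skew-Hermitian and pairwise anticommute. Since $\imath B_i$ is Hermitian and $(\imath B_i)^2 = -B_i^2$ is positive semidefinite (in fact, if $B_i$ is invertible, $-B_i^2$ is positive definite), rescaling we may assume $B_i^2 = -I_n$ for $i \ge 2$ — although for the dimension count I must be careful, since rescaling changes the span; better to argue directly. The cleaner route: consider the real span $W$ of $B_2, \ldots, B_m$ inside the $n^2$-dimensional space $SH_n$ of skew-Hermitian matrices. I claim $\imath I_n \notin W$ and moreover $W$ does not contain all of $SH_n$, because the anticommuting relations force a genuine constraint. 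Concretely, pick any invertible $B$ in the span (generic elements are invertible since the $A_i$ are, though one may need a short argument that the span contains an invertible element — the original $A_i^{-1}A_j$ are invertible). Multiplying by $B^{-1}$ converts the problem to: $I_n$ together with matrices that are both skew-Hermitian and square to $-I_n$ and pairwise anticommute — this is exactly a Clifford-algebra / Radon–Hurwitz configuration. The Radon–Hurwitz–Eckmann bound (cited as \cite{Ecm} in the introduction) then says the number of such anticommuting unitary square-root-of-$-I_n$ matrices is at most $\rho(n) - 1$ where $\rho$ is the Radon–Hurwitz function, which is far below $n^2 - 2$; so in particular the span has dimension well under $n^2 - 1$. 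However, since the theorem only claims the weak bound $n^2 - 1$, I would give the self-contained argument rather than invoke the sharp bound.

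For the self-contained argument I would proceed as follows. Normalize so that $B_1 = I_n$ and each $B_i$ ($i\ge 2$) is skew-Hermitian with $B_i^2 = -I_n$ (possible after scaling each $B_i$ individually, since $-B_i^2$ is a positive scalar multiple of $I_n$ only if $B_i^2$ is scalar — wait, this is not automatic, so instead:) observe that since $B_2$ is invertible, left-multiplying the whole family by $B_2^{-1}$ and using Lemmas~\ref{MO_under_matrix_mult}, \ref{MO_yields_SH}, \ref{lem:mo_to_ac}, we may assume $B_2 = I_n$ as well — but $B_1$ was already $I_n$, contradicting linear independence unless $m \le $ something. This is the crux and the main obstacle: extracting the correct count from the anticommutation without circular normalization. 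The clean resolution is to work with the Hermitian matrices $C_i = \imath B_i$: these are Hermitian, pairwise \emph{anticommute}, and are linearly independent together with $I_n$; each $C_i$ has $C_i^2$ Hermitian positive-definite and commuting with the other $C_j$? No — $C_i^2$ commutes with $C_j$ iff $C_i$ does, which fails. The genuinely robust argument: the map $M \mapsto \Tr(M)$ vanishes on every $B_i$ with $i \ge 2$ (a skew-Hermitian matrix that anticommutes with an invertible skew-Hermitian matrix has trace zero, since $\Tr(B_i) = \Tr(B_2 B_i B_2^{-1}) = \Tr(-B_i) = -\Tr(B_i)$ using $B_2 B_i = -B_i B_2$), but $\Tr(\imath I_n) = \imath n \ne 0$; hence $\imath I_n$, which is a nonzero skew-Hermitian matrix, is \emph{not} in the span of $B_2, \ldots, B_m$. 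Therefore $\{B_2, \ldots, B_m, \imath I_n\}$ is a linearly independent set of skew-Hermitian matrices, so $m - 1 + 1 = m \le \dim SH_n = n^2$, giving $m \le n^2$. Combined with $B_1 = I_n$ being independent from all of these (it is Hermitian, not skew-Hermitian, so certainly independent), we get... $m \le n^2$ matrices $B_i$ total, i.e.\ the original count is $\le n^2$, which is not yet the claimed $n^2 - 1$. To squeeze out the last unit I would note that $\imath I_n$ itself is skew-Hermitian and one checks it is mutually orthogonal to every $A_i$ (since $\imath I_n \cdot A^* + A \cdot (\imath I_n)^* = \imath A^* - \imath A^* = 0$ when $A$ is skew-Hermitian — indeed $(\imath I_n)^* = -\imath I_n$, so the expression is $\imath A^* + A(-\imath) \cdot$... recompute: $\imath A^* - \imath A = \imath(A^* - A) = \imath(-2A)$ if $A$ skew-Hermitian, nonzero). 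So $\imath I_n$ cannot be adjoined to a maximal \SHMO\ family unless it is already in it; the right statement is that the family sits inside the $(n^2)$-dimensional $SH_n$ but is further constrained. The cleanest finish: since $B_1 = I_n$ is in the family and $I_n$ is \emph{not} skew-Hermitian, the hypothesis ``both \SH\ and \MO'' means we should \emph{not} have normalized to include $I_n$ — rather, the family $A_1, \ldots, A_m$ is entirely inside $SH_n$, all pairwise mutually orthogonal, hence (fixing $A_1$, multiplying by $A_1^{-1}$) the matrices $A_1^{-1}A_i$ are skew-Hermitian and pairwise anticommuting and include $I_n$; by the trace argument $\imath I_n \notin \mathrm{span}$, so that span has dimension $\le n^2 - 1$, and since $A_1^{-1}(\cdot)$ is a linear isomorphism the original $A_1, \ldots, A_m$ span a space of dimension $\le n^2 - 1$, i.e.\ $m \le n^2 - 1$. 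The main obstacle, as the above groping shows, is bookkeeping the normalization correctly so that the trace obstruction ``$\imath I_n$ is missing'' lands on the right space; once that is pinned down the bound $n^2 - 1$ is immediate.
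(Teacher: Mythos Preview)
Your trace idea is a genuinely nice alternative to the paper's argument, but as written your final bookkeeping is off by one and the proof only yields $m\le n^2$, not $m\le n^2-1$.

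Here is the precise gap. After multiplying by $A_1^{-1}$ you have the family $I_n=B_1,B_2,\ldots,B_m$ with $B_2,\ldots,B_m$ skew-Hermitian and pairwise anticommuting. Your trace argument correctly shows $\Tr(B_i)=0$ for $i\ge 2$ (each $B_i$ anticommutes with some invertible $B_j$), hence $\imath I_n\notin\mathrm{span}_\R(B_2,\ldots,B_m)$, and so that span has dimension $\le n^2-1$. But that span has dimension $m-1$, not $m$: the linear isomorphism $X\mapsto A_1^{-1}X$ carries $\mathrm{span}(A_1,\ldots,A_m)$ onto $\mathrm{span}(I_n,B_2,\ldots,B_m)$, which is strictly larger than $\mathrm{span}(B_2,\ldots,B_m)$ since $I_n$ is Hermitian and nonzero. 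So your inequality reads $m-1\le n^2-1$, i.e.\ $m\le n^2$, which is just the trivial bound $\dim SH_n=n^2$.

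The fix is simply to \emph{not normalize}. Because the $A_i$ are already both skew-Hermitian and mutually orthogonal, they already pairwise anticommute: $A_iA_j^*+A_jA_i^*=0$ with $A_i^*=-A_i$, $A_j^*=-A_j$ gives $A_iA_j+A_jA_i=0$ directly. Now your trace argument applies to the original family: for $m\ge 2$ each $A_i$ anticommutes with some invertible $A_j$, so $\Tr(A_i)=\Tr(A_j^{-1}A_iA_j)=\Tr(-A_i)$, whence $\Tr(A_i)=0$. Since $\Tr(\imath I_n)=\imath n\ne 0$, the skew-Hermitian matrix $\imath I_n$ lies outside $\mathrm{span}_\R(A_1,\ldots,A_m)$, so $m+1\le\dim_\R SH_n=n^2$ and $m\le n^2-1$.

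This repaired argument is different from (and arguably cleaner than) the paper's. The paper argues by contradiction: assuming $n^2$ such matrices exist, they form an $\R$-basis of $SH_n$, so $\imath I_n=\sum a_jA_j$; expanding $\imath I_n\cdot A_1^*$ and separating Hermitian from skew-Hermitian summands forces $A_1$ to be a scalar multiple of $\imath I_n$, which is then ruled out separately. Your approach replaces this case analysis and Hermitian/skew-Hermitian splitting with the single observation that anticommuting invertible matrices are traceless, which immediately excludes $\imath I_n$ from the span.
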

\begin{proof} For, suppose to the contrary that $A_1, \dots, A_{n^2}$ were $\R$-linearly independent, \SH, and \MO. The matrix $\imath I_n$ is \SH.  Suppose first that one of these $A_i$, say $A_1$, is an $\R$-multiple of $\imath I_n$. This is already a contradiction, since $A_1 A_2^*$ is \SH \ by the mutual orthogonality condition, but $A_1 A_2^*$ is a real multiple of  $\imath A_2^*$ and  is therefore Hermitian.  Now suppose that no $A_i$ is an $\R$-multiple of $\imath I_n$. The matrix $\imath I_n$, being \SH, can be written as a linear combination of these matrices $A_i$ since they form a basis for the \SH \ matrices, so $\imath I_n = \displaystyle\sum a_j A_j$ for real $a_j$. Now $A_1$ is not a real multiple of $\imath I_n$ by assumption. Consider $\imath I_n A_1^*$.  This is Hermitian.  On the other hand, $( \displaystyle\sum a_j A_j) A_1^* = a_1 A_1A_1^* + ( \displaystyle\sum a_jA_j)A_1^*$, where this second sum runs from $j = 2$ onwards.  But for $j = 2$ onwards, $A_j A_1^*$ is \SH \ by the mutual orthogonality condition, while both $\imath A_1^*$ and $a_1 A_1A_1^*$ are Hermitian.  For this to happen, $( \displaystyle\sum a_j A_j) A_1^* $, where the sum is over $j \ge 2$, must be zero, and $\imath A_1^* $ must equal $a_1 A_1A_1^*$. On canceling $A_1^*$ (recall our assumption that the basis matrices are invertible), we find that $A_1$ is a multiple of $\imath I_n$, contradiction.
\end{proof}

\begin{example} In the $2\times 2$ matrices $M_2(\C)$ over the complex numbers $\C$, consider the three matrices $A_1 = \left(
\begin{array}{cc}
 \imath & 0     \\
 0 & -\imath     
\end{array}
\right)$,
$A_2 = 
\left(
\begin{array}{cc}
 0 & -1     \\
 1 & 0     
\end{array}
\right)$, and 
$ A_3 = 
\left(
\begin{array}{cc}
 0 & -\imath     \\
 -\imath & 0     
\end{array}
\right)$. These three matrices are $\R$-linearly independent, skew-Hermitian, and pairwise mutually orthogonal matrices. Together with the identity matrix $I = 
\left(
\begin{array}{cc}
 1& 0     \\
 0 & 1     
\end{array}
\right)
$, 
they form a $\C$-basis for $M_2(\C)$, and as can be checked, no $\C$-linear combination of $I$, $A_1$, $A_2$, and $A_3$ is both skew-Hermitian and mutually orthogonal to $A_1$, $A_2$, and $A_3$.  Thus, the $2^2-1$ matrices $A_1$, $A_2$, and $A_3$ exemplify the contention of this theorem.
\end{example}

We get a quick corollary from this that we will sharpen considerably in the next section:
\begin{corollary}[See Corollary \ref{cor_max_groups_Greg} in Section \ref {AzAlg}] \label {max_g} For
{a code generated by invertible $n \times n$ matrices}, the maximum number of groups $g$ in notation of Definition \ref{deffd} is $n^2$.
\end{corollary}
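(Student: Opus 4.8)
The plan is to reduce the statement to Theorem \ref{maxSHMO} by a normalization argument. Suppose the code $X = \sum_{i=1}^{2l} s_i A_i$ admits fast decodability, with groups $\Gamma_1, \dots, \Gamma_g$ (and possibly $\Gamma_{g+1}$) as in Definition \ref{deffd}. First I would pick one basis matrix $B_i$ from each group $\Gamma_i$, $i = 1, \dots, g$; this is possible since each $n_i = |\Gamma_i| \ge 1$. Each $B_i$ is invertible by the standing hypothesis on the basis matrices, and since $B_i$ and $B_j$ lie in distinct groups for $i \ne j$, they are mutually orthogonal.

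Next I would left-multiply all of $B_1, \dots, B_g$ by $B_1^{-1}$. By Lemma \ref{MO_under_matrix_mult} the matrices $B_1^{-1}B_1 = I_n,\ B_1^{-1}B_2,\ \dots,\ B_1^{-1}B_g$ are again pairwise mutually orthogonal (and invertible). Since $I_n$ is now mutually orthogonal to each $B_1^{-1}B_i$ for $i \ge 2$, Lemma \ref{MO_yields_SH} (equivalently, the first half of Lemma \ref{lem:mo_to_ac}) shows that each $B_1^{-1}B_i$, $i \ge 2$, is skew-Hermitian. Thus $B_1^{-1}B_2, \dots, B_1^{-1}B_g$ are $g-1$ invertible matrices in $M_n(\C)$ that are simultaneously skew-Hermitian and pairwise mutually orthogonal; by Lemma \ref{lem:mo_implies_li} they are also $\R$-linearly independent.

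Finally I would invoke Theorem \ref{maxSHMO}, which bounds the number of $\R$-linearly independent matrices in $M_n(\C)$ that are both skew-Hermitian and mutually orthogonal by $n^2 - 1$. Hence $g - 1 \le n^2 - 1$, i.e., $g \le n^2$, as claimed.

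There is no genuine obstacle here: the proof is a direct concatenation of the lemmas on mutual orthogonality collected in this section with Theorem \ref{maxSHMO}, and the only point needing a moment's care is that the chosen representatives $B_i$ are invertible, so that Lemmas \ref{MO_yields_SH} and \ref{lem:mo_implies_li} apply — which is exactly the hypothesis of the corollary. It is worth adding that this bound cannot be improved by soft arguments alone: the $n = 2$ example exhibiting $I_2, A_1, A_2, A_3$ already realizes $g = 2^2 = 4$. The substantial improvement for general $n$, and in particular for matrices arising from a division algebra, is what the Azumaya-algebra techniques of Section \ref{AzAlg} deliver.
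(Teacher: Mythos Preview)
Your proof is correct and follows essentially the same approach as the paper: pick one representative from each group, normalize by left-multiplying by the inverse of one of them to obtain $g-1$ skew-Hermitian, pairwise mutually orthogonal, $\R$-linearly independent matrices, and then invoke Theorem \ref{maxSHMO}. The paper phrases this as a contradiction (assume $g > n^2$, obtain $n^2$ such matrices) whereas you argue directly that $g-1 \le n^2-1$, but the content is identical.
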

\begin{proof} If the number of groups is more than $n^2$, then we can find $n^2+1$ matrices that are $\R$-linearly independent and \MO. Multiplying this set on the left by the inverse of one of them (as in the proof of Theorem \ref {thm:max_in_groups} above), we find $n^2$ \SH \ and \MO \  $\R$-linearly independent matrices, a contradiction.
\end{proof}

\begin{lemma} \label{what_if_rest_n^2}
{If any $g-1$ of the groups $\Gamma_1, \ldots, \Gamma_g$ from Definition \ref{deffd} together have at least $n^2$ matrices in them, then they have exactly $n^2$ elements in them, while the remaining group can only have one matrix in it.}
\end{lemma}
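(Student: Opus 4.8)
The plan is to reuse the normalization device from the proof of Theorem~\ref{thm:max_in_groups}. Relabel so that the $g-1$ groups under consideration are $\Gamma_1,\dots,\Gamma_{g-1}$ and the remaining group is $\Gamma_g$; since each of $\Gamma_1,\dots,\Gamma_g$ is nonempty (their cardinalities are $\ge 1$), I fix a matrix $A_{g,1}\in\Gamma_g$ and multiply every basis matrix on the left by $A_{g,1}^{-1}$. Write $C_1,\dots,C_N$ for the modified matrices coming from $\Gamma_1,\dots,\Gamma_{g-1}$, where $N=n_1+\cdots+n_{g-1}\ge n^2$ by hypothesis. Each original matrix in these groups lies in a different group from $A_{g,1}$ and is therefore mutually orthogonal to it, so by Lemma~\ref{MO_under_matrix_mult} each $C_i$ is mutually orthogonal to $A_{g,1}^{-1}A_{g,1}=I_n$, hence skew-Hermitian by Lemma~\ref{MO_yields_SH}. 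Left multiplication by the fixed invertible matrix $A_{g,1}^{-1}$ is an $\R$-linear automorphism of $M_n(\C)$, so the $C_i$ are still $\R$-linearly independent; since $\dim_\R SH_n=n^2$ (Section~\ref{secn_prelim}), this forces $N\le n^2$, and together with $N\ge n^2$ we conclude $N=n^2$. This proves the first assertion and, crucially, shows that $C_1,\dots,C_{n^2}$ is an $\R$-basis of $SH_n$.

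For the second assertion I would argue by contradiction: suppose $n_g\ge 2$, pick $A_{g,2}\in\Gamma_g$ distinct from $A_{g,1}$, and set $B=A_{g,1}^{-1}A_{g,2}$, an invertible matrix. Since $A_{g,2}$ lies in a different group from every matrix of $\Gamma_1,\dots,\Gamma_{g-1}$, Lemma~\ref{MO_under_matrix_mult} shows $B$ is mutually orthogonal to each $C_i$. The pairing $(X,Y)\mapsto XY^*+YX^*$ is $\R$-bilinear (conjugate transpose is $\R$-linear), so $B$ is mutually orthogonal to every matrix in the $\R$-span of the $C_i$, i.e.\ to all of $SH_n$. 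Substituting $\imath I_n\in SH_n$ into $B(\imath I_n)^*+(\imath I_n)B^*=0$ gives $B^*=B$, so $B$ is Hermitian; then writing the mutual orthogonality relation of $B$ with an arbitrary $C\in SH_n$ and using $C^*=-C$, $B^*=B$ yields $BC=CB$ for all skew-Hermitian $C$. Finally, $SH_n$ spans $M_n(\C)$ over $\C$ (since $M_n(\C)=SH_n\oplus\imath SH_n$ as $\R$-vector spaces, from Section~\ref{secn_prelim}), so $B$ commutes with all of $M_n(\C)$ and is therefore scalar, $B=\lambda I_n$; invertibility gives $\lambda\neq 0$ and the Hermitian property gives $\lambda\in\R$. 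But then $A_{g,2}=\lambda A_{g,1}$ with $\lambda\in\R^\times$, contradicting the $\R$-linear independence of the basis matrices $A_i$. Hence $n_g=1$.

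I do not expect a genuine obstacle here, but the one point that needs care is that mutual orthogonality is \emph{not} assumed within a single group, so Theorem~\ref{maxSHMO} cannot be applied directly to the two matrices of $\Gamma_g$: the contradiction establishing $n_g=1$ must instead go through the centrality of $B$ in $M_n(\C)$ rather than through any self-orthogonality relation. The dimension count on $SH_n$ (not the sharper count of Theorem~\ref{maxSHMO}) is exactly what the first half of the argument requires, and the second half leverages the fact that a full basis of $SH_n$ sits among the modified matrices.
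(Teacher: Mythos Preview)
Your proof is correct and follows essentially the same route as the paper's: normalize by left-multiplying by the inverse of an element of the ``remaining'' group, use the skew-Hermitian dimension count to get exactly $n^2$ in the other groups, and then show that a second element $B$ of the remaining group would have to be central and Hermitian, hence a real scalar, contradicting linear independence. Your presentation is slightly cleaner in that you separate the two assertions (the paper runs both inside a single contradiction argument) and invoke $\R$-bilinearity of $(X,Y)\mapsto XY^*+YX^*$ directly rather than writing $\imath I_n$ explicitly as a combination of the $C_i$, but the substance is identical.
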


\begin{proof} Say the last $g-1$ groups, for simplicity, together have at least $n^2$ matrices, and 
suppose that the first group has at least two elements, call them $A$ and $B$.  By multiplying throughout by $A^{-1}$, we can assume that the two elements are $I$ and $B$.  Note that after multiplying by $A^{-1}$, because of the mutual orthogonality condition, the matrices in the remaining groups all become \SH \ (as in the proof of Theorem \ref {thm:max_in_groups} above).  Because there are at least $n^2$ \SH \ ($\R$-linearly independent) matrices, we find that there must be exactly $n^2$ of them because the dimension of the \SH \ matrices is $n^2$.  Call these $n^2$ matrices $C_1, \dots, C_{n^2}$.  We must have $\imath I_n$ in the linear span of these $C_i$ because $\imath I_n$ is also \SH. Thus, $\imath I_n = \displaystyle\sum a_i C_i$.  Now multiply on the right by $B^*$, where $B$ is as above.  Each of the products $C_i B^*$ is \SH \ because of the mutual orthogonality condition that requires  $C_iB^*+BC_i^*=0$.  Thus,  $\imath B^*$ is also \SH. It follows from this that $B^*$ is Hermitian, i.e., $B$ is Hermitian.  But now, we consider $C_i B^*$ for any $i$.   The mutual orthogonality condition says that this is \SH, so it equals $ -(B C_i^*)$, and since $C_i^*$ is \SH, this equals $BC_i$.  On the other hand, we just saw that $B$ is Hermitian, so $C_i B^* = C_i B$.  Thus, $B$ commutes with all $C_i$, i.e, with all  \SH \ matrices.  But this means $B$ commutes with all the Hermitian matrices as well, because every Hermitian matrix is of the form $\imath$ times a \SH \ matrix.  Thus, $B$ commutes with all matrices, and is Hermitian, so it must be a real scalar matrix.  But this violates the fact that $I_n$ and $B$ were two linearly independent matrices in the first group.
\end{proof}

\begin{corollary} \label{what_if_k_ge_2} 
If, as in the notation of Definition \ref{deffd}, $n_i \ge 2$ for any $i$, then the total number of matrices in the $g$ groups is at most $n^2+n_i-1$. In particular, if $k = \min n_i \ge 2$, then the total number is at most $n^2-1+k$.
d
\end{corollary}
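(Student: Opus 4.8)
The plan is to deduce Corollary \ref{what_if_k_ge_2} directly from Lemma \ref{what_if_rest_n^2}, using Theorem \ref{thm:max_in_groups} as a backstop. Suppose $n_i \ge 2$ for some index $i$; without loss of generality relabel so that $\Gamma_i = \Gamma_1$, i.e., the group with at least two matrices is the first one. I want to bound $N := n_1 + \cdots + n_g$, the total number of matrices in the $g$ groups, by $n^2 + n_1 - 1$.

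First I would consider the collection of the \emph{other} $g-1$ groups, $\Gamma_2, \ldots, \Gamma_g$, which together contain $N - n_1$ matrices. There are two cases. If $N - n_1 \ge n^2$, then Lemma \ref{what_if_rest_n^2} applies verbatim (with the roles arranged so that $\Gamma_1$ is the ``remaining group''): it forces $\Gamma_1$ to contain exactly one matrix, contradicting $n_1 \ge 2$. Hence this case cannot occur, and we must be in the other case: $N - n_1 \le n^2 - 1$, which immediately gives $N \le n^2 + n_1 - 1$, exactly the claimed bound. The ``in particular'' statement then follows: if $k = \min_j n_j \ge 2$, apply the bound with $n_i = k$ (choosing $i$ to be an index achieving the minimum, which satisfies $n_i = k \ge 2$), yielding $N \le n^2 + k - 1$.

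One subtlety to address carefully is that Lemma \ref{what_if_rest_n^2} is phrased for ``any $g-1$ of the groups,'' so I should make sure its proof genuinely treats an arbitrary singled-out group as the ``remaining'' one and does not secretly use that it is the last one; reading the proof, it only uses the mutual orthogonality between the singled-out group and the union of the other $g-1$ groups together with invertibility of the basis matrices, so this is fine — the labelling ``last $g-1$ groups'' there is purely cosmetic. The only genuine content is thus repackaging: the contrapositive of Lemma \ref{what_if_rest_n^2} says precisely that if one group has $\ge 2$ elements then the remaining $g-1$ groups together have $\le n^2 - 1$ elements.

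I do not anticipate a real obstacle here — the corollary is essentially a restatement of Lemma \ref{what_if_rest_n^2} in additive form. The only thing to be careful about is the edge case $g = 2$: then ``the other $g-1$ groups'' is a single group $\Gamma_2$, and the argument still goes through, giving $n_1 + n_2 \le n^2 + n_1 - 1$, i.e. $n_2 \le n^2 - 1$; symmetrically, if it is $\Gamma_2$ that has $\ge 2$ elements we get $n_1 \le n^2 - 1$. This is consistent with (and slightly sharpens, when $k \ge 2$) the bound $n_1 + \cdots + n_g \le n^2 + k$ of Theorem \ref{thm:max_in_groups}, which is the content of Remark \ref{rem_less_if_k>2}.
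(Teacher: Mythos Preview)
Your argument is correct and is essentially identical to the paper's: both use Lemma \ref{what_if_rest_n^2} in contrapositive form to conclude that if the $i$-th group has $n_i \ge 2$ matrices, then the remaining $g-1$ groups together contain fewer than $n^2$ matrices, hence the total is at most $n^2 + n_i - 1$. Your additional remarks on the symmetry of Lemma \ref{what_if_rest_n^2} and the $g=2$ edge case are sound but not needed beyond what the paper already assumes.
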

\begin{proof} Since the $i$-th group has size $n_i \ge 2$, the remaining groups must have less than $n^2$ matrices in them, or else, the lemma above will be violated. It follows that there at most $n^2+n_i - 1$ matrices in the $g$ groups.
\end{proof}

We are now ready to sharpen the results we got in Corollary \ref{rem_immediate_bound}.

\begin{theorem} \label {thm:bad_fd_complexity} The decoding complexity of a full-rate space time code $X = \displaystyle\sum_{i=1}^{2n^2} s_i A_i$ is not better than $\Al^{n^2+1}$, where $\Al$ is the size of the effective real constellation.
\end{theorem}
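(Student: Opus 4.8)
The plan is to reduce the assertion to a single inequality among the block sizes and then read it off from the refined counting results already proved. By Definition \ref{def:decod_complexity} the decoding complexity is $\Al^{n_{g+1}+M}$ with $M=\max_{1\le i\le g}n_i$, so it suffices to prove that the exponent satisfies $n_{g+1}+M\ge n^2+1$. Since the code is full rate, $n_1+\cdots+n_g+n_{g+1}=2n^2$, so $n_{g+1}+M\ge n^2+1$ is equivalent to
\begin{equation}\label{eq:plan_target}
n_1+\cdots+n_g\ \le\ n^2+M-1 .
\end{equation}
(If the code is not fast decodable at all, its decoding complexity is the brute-force $\Al^{2n^2}$, and $2n^2\ge n^2+1$; so we may assume fast decodability and focus on \eqref{eq:plan_target}.)

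To establish \eqref{eq:plan_target} I would split on the value of $M$. If $M\ge 2$, choose an index $i_0$ with $n_{i_0}=M$; then Corollary \ref{what_if_k_ge_2}, applied with this $i_0$, says the $g$ groups together contain at most $n^2+n_{i_0}-1=n^2+M-1$ matrices, which is precisely \eqref{eq:plan_target}. If $M=1$, then every group $\Gamma_i$ with $i\le g$ is a singleton, so $n_1+\cdots+n_g=g$, and Corollary \ref{max_g} bounds the number of groups by $g\le n^2=n^2+M-1$, so \eqref{eq:plan_target} holds here as well. In both cases $n_{g+1}=2n^2-(n_1+\cdots+n_g)\ge n^2-M+1$, hence the decoding complexity is at least $\Al^{n_{g+1}+M}\ge\Al^{n^2+1}$.

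The only genuinely delicate case is $M=1$, where all the parallelizable groups are singletons: the sharpened count of Corollary \ref{what_if_k_ge_2} is then vacuous, and one must instead invoke the bound $g\le n^2$ on the number of mutually orthogonal groups, which in turn rests on the Radon-Hurwitz-Eckmann-type Theorem \ref{maxSHMO} (at most $n^2-1$ matrices can be simultaneously \SH\ and \MO\ and $\R$-linearly independent). Everything else is bookkeeping with the full-rate identity, so the actual write-up should be only a few lines. As a by-product, the argument pinpoints when the bound is sharp, namely $M=1$ together with $g=n^2$ and $n_{g+1}=n^2$, which is the configuration realized by the Silver code of Example \ref{ex:bound_is_strict}.
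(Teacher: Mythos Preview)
Your proof is correct and follows essentially the same route as the paper's: the same case split on whether some $n_i\ge 2$ (your $M\ge 2$) versus all $n_i=1$, invoking Corollary \ref{what_if_k_ge_2} in the former case and Corollary \ref{max_g} in the latter. Your reformulation via the single target inequality $n_1+\cdots+n_g\le n^2+M-1$ is a tidy way to organize the bookkeeping, but the substance is identical. (One small caveat: your closing remark that the Silver code realizes the configuration $M=1$, $g=n^2$, $n_{g+1}=n^2$ is not something the paper verifies, and you should not assert it without checking the actual group structure; but this is outside the proof proper.)
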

\begin{proof}
Consider the basis matrices $A_i$: if there are at least two \MO \ groups, then, by Definition \ref{deffd}, the code is fast decodable, and by Theorem \ref{fd_means_nice_T} the $R$ matrix that comes from $T=T(H)$ will have the form (\ref{fast_decod_matrix}). Consider the integers $n_i$, notation as in Defintion \ref{deffd}. If any $n_i \ge 2$, then by Corollary \ref {what_if_k_ge_2}, the total number of matrices in the $g$ groups is at most $n^2 + n_i -1$.  Thus, the matrix $N_{g+1}$ in  (\ref{fast_decod_matrix}) will be of size at least $(n^2 -n_i+1) \times (n^2-n_i+1)$. Exactly as in the proof of Corollary \ref{rem_immediate_bound}, we find that the decoding complexity must be at least $\Al^{n^2-n_i+1}\cdot \Al^{n_i} =\Al^{n^2+1}$. If on the other hand all $n_i=1$, then we have $g$ groups of size $1$ each.  By  Corollary \ref{max_g}, $g \le n^2$, so $N_{g+1}$ is at least of size $n^2 \times n^2$. Thus, there are at least $n^2$ variables corresponding to $N_{g+1}$ that need to be conditioned, and then, the $g$ blocks are decoded in parallel, with complexity $\Al$ each.  Thus the decoding complexity is at least $\Al^{n^2}\cdot \Al = \Al^{n^2+1}$. 
\end{proof}
\begin{example}\label{ex:bound_is_strict} \textit{Silver Code:} This $2\times2$ code for four complex signal elements $s_1$, $s_2$, $s_3$, $s_4$ is given by $X(s_1, s_2) + TX(z_1,z_2)$, where for any $a$ and $b$, $X(a,b)=
\left(
\begin{array}{cc}
 a & -b^*     \\
 b & a^*     
\end{array}
\right)
$, and $T = \left(
\begin{array}{cc}
 1 & 0     \\
 0 & -1     
\end{array}
\right)$. The signal elements $s_3$ and $s_4$ are related to $z_1$ and $z_2$ by $(z_1, z_2)^T = M(s_3, s_4)^T$, where $M = \dfrac{1}{\sqrt{7}}\left(
\begin{array}{cc}
 1+\imath & -1 + 2\imath    \\
 1+2\imath & 1-\imath     
\end{array}
\right)$.
This code has a decoding complexity of at most $|S|^5$ (see \cite{JR} for instance). This example thus shows that our bound $n^2+1$ is strict. Moreover, Theorem \ref{thm:bad_fd_complexity}  shows that the Silver code cannot have a lower lattice decoding complexity than the known $|S|^5$.
\end{example}

\begin{theorem} \label {thm:no_g_grp_decod} It is not possible to arrange for the full-rate space-time code $X = \displaystyle\sum_{i=1}^{2n^2} s_i A_i$ to have  $g$-group decodability for any $g$.
\end{theorem}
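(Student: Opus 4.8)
The plan is to obtain a contradiction purely from the structural facts already proved for full-rate codes. First I would note that $g$-group decodability means, by Definition~\ref{defn_g_grp_decod}, that $\Gamma_{g+1}=\emptyset$, so $n_1+\cdots+n_g=2l=2n^2$. Feeding this into Theorem~\ref{thm:max_in_groups}, which says $n_1+\cdots+n_g\le n^2+k$ with $k=\min_{1\le i\le g}n_i$, gives $2n^2\le n^2+k$, i.e.\ $k\ge n^2$; hence every $n_i\ge n^2$. Since $g\ge 2$ and the $n_i$ sum to $2n^2$ while each is at least $n^2$, the only option is $g=2$ and $n_1=n_2=n^2$. (In particular this re-derives, and slightly sharpens, Corollary~\ref{corol:num_grps_atmost_2}: $g\ge 3$ is already impossible.)

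It then remains to kill the case $g=2$, $n_1=n_2=n^2$, and here Lemma~\ref{what_if_rest_n^2} does the job directly: taking the ``$g-1$ groups'' of that lemma to be the single group $\Gamma_2$ (which has $n^2$ matrices, hence at least $n^2$), the lemma forces the remaining group $\Gamma_1$ to contain exactly one matrix. But we just showed $n_1=n^2$, and $n^2\ge 2$ for $n\ge 2$ --- a contradiction. Hence no full-rate code on $n\ge 2$ antennas is $g$-group decodable for any $g$. (The hypothesis $n\ge 2$ is genuinely needed: for $n=1$, full rate means two real symbols, and the two ``matrices'' $1$ and $\imath$ are mutually orthogonal scalars, giving a trivial $2$-group decomposition; one should read the statement with $n\ge2$, the only case of interest for space--time coding.)

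Because Theorem~\ref{thm:max_in_groups} and Lemma~\ref{what_if_rest_n^2} are already in hand, there is essentially no obstacle left; the real work sits inside Lemma~\ref{what_if_rest_n^2}. Were one to want a proof that does not cite it, the crux --- and the main difficulty --- would be to redo its key computation in the present special situation: after left-multiplying all $2n^2$ basis matrices by $A^{-1}$ for some $A\in\Gamma_1$, Lemmas~\ref{MO_under_matrix_mult} and \ref{MO_yields_SH} make the $n^2$ modified $\Gamma_2$-matrices skew-Hermitian and $\R$-linearly independent, hence a basis of $SH_n$; a second matrix $B\in\Gamma_1$ (which exists since $n_1=n^2\ge 2$) is mutually orthogonal to all of $SH_n$, and testing this against $\imath I_n$ forces $B$ Hermitian, after which the same relation forces $B$ to commute with every skew-Hermitian, hence with every, matrix, so $B$ is a real scalar matrix --- contradicting the $\R$-linear independence of the two matrices in $\Gamma_1$.
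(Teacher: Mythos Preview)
Your proof is correct and follows essentially the same route as the paper: both arguments first reduce to the case $g=2$ via Theorem~\ref{thm:max_in_groups} (the paper through its Corollary~\ref{corol:num_grps_atmost_2}, you by re-deriving it and pinning down $n_1=n_2=n^2$), and then invoke Lemma~\ref{what_if_rest_n^2} to force one group to have a single element, contradicting $n\ge 2$. The only cosmetic difference is that the paper uses pigeonhole to get one group of size $\ge n^2$ rather than first nailing both sizes to $n^2$.
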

\begin{proof} We have already seen in Corollary \ref{corol:num_grps_atmost_2} that $g$-group decodability is not possible for $g\ge 3$.  For $g=2$, note that one of two groups must have at least $n^2$ matrices. It follows from Lemma \ref{what_if_rest_n^2} that this group must have exactly $n^2$ elements and the other group must have only one element.  Since $n\ge 2$ in the space-time block code paradigm, $1+n^2 < 2n^2$, and $2$-group decodability is hence impossible.
\end{proof}
\begin{remark} \label {rem_indec_lattice} In a different language (see Remark \ref{rem_g_grp_orthog_sum}), Theorem \ref {thm:no_g_grp_decod} says that the transmitted lattice of a full-rate space-time code does not split off as an orthogonal sum of smaller dimensional lattices generated by the canonical basis vectors.
\end{remark}
%*************************************************************************************************************************%
%
% Azumaya Algebras 
%
%*************************************************************************************************************************%
\section{Azumaya Algebras and Bounds on the Number of Groups}\label{AzAlg}

In this section, we will delve into the arithmetic of central-simple algebras, using machinery from commutative ring theory and Azumaya algebras, to determine significantly small upper bounds on the number of orthogonal sublattices generated by the basis vectors of the transmitted lattice $T = T(H)$, or what is the same, the number of blocks $g$ of the $R$ matrix in Equation (\ref{fast_decod_matrix}).  We had already derived an upper bound of $n^2$ for full-rate codes in Corollary \ref {max_g}, but as we will see, this bound is too high. In fact, the bound behaves more like $\log_2(n)$ (see Theorem \ref{Greg_hauptsatz} for a precise statement).   The bound we derive in this section will be independent of the code rate ($l$).  Since the matrices in  distinct groups are pairwise mutually orthogonal, we will derive our bound by answering the following question: How many $\R$-linearly independent pairwise mutually orthogonal matrices can we find in $M_n(\C)$?  In fact, we will actually answer a broader question: Let $k\subset {\mathbb{C}}$ be a number field, let $\mathcal{A}$ be a central simple $k$-subalgebra of $M_n({\mathbb{C}})$. How many $\R$-linearly independent pairwise mutually orthogonal matrices can we find in the subalgebra $\mathcal{A} \subseteq M_n(\C)$?
(Of course, by Lemma \ref{lem:mo_implies_li}, we may drop the requirement that the matrices be $\R$-linearly independent.)

As in the earlier sections, we will assume that our pairwise orthogonal matrices are all invertible. Note that if a matrix $A\in \mathcal{A} \subseteq M_n(\C)$ is invertible as an element of $M_n(\C)$, its inverse must actually lie in $\mathcal{A}$. This is because $A^{-1}$ can be obtained from the minimal polynomial of $A$ over $k$ as follows: if the minimal polynomial is $A^t + k_{t-1} A^{t-1} + \cdots k_1A + k_0$, then $k_0 \neq 0$ because $A$ is invertible as a matrix, so the inverse of $A$ can be written by factoring out $A$ as $(-1/k_0)( A^{t-1} + k_{t-1}A^{t-2} + \cdots + k_1)$. The inverse of $A$  hence lives in the subalgebra $k[A] \subseteq \mathcal{A}$.)

All the $k$-algebras we consider will be implicitly assumed to be finite-dimensional over $k$. Various background facts about commutative rings and Azumaya algebras are collected in Appendices \ref{app_comm_alg} and \ref{app_azu} respectively. We will assume basic knowledge of central simple algebras (see \cite {BOText} for instance).

Lemmas \ref {MO_under_matrix_mult}, \ref {MO_yields_SH}, and \ref{lem:mo_to_ac} show us that the existence of  (invertible) \MO \ matrices $A_i$, $i=1, \dots, m$  is equivalent (upon replacing the $A_i$ by say $A_1^{-1}A_i$) to the existence of matrices $C_i = A_1^{-1}A_i$, $i=2, \dots, m$ which are skew-Hermitian and anticommute pairwise.

So, focusing on the necessary anticommuting condition  above, we study the following question. (In the sequel, $\Ac^\times$ will refer to the invertible elements of $\Ac$.)

{\bf Question. }Let $k$ be a number field, and let $\Ac$ be a central simple $k$-algebra. How many elements $u_1,\ldots,u_{r}\in \Ac^\times$ which pairwise anticommute can we find?

We now investigate this question.

Once and for all, we fix a central simple $k$-algebra $\Ac$, and we assume to have elements $u_1,\ldots,u_r\in \Ac^\times$ such that $u_iu_j+u_ju_i=0 \ \mbox{ for all }i\neq j$,
for some $r\geq 2.$ For the moment, we only assume that $k$ is any field of characteristic different from $2$.

Notice that $u_i$ and $u_j^2$ commute for all $i,j$. Indeed, this is clear if $i=j$, and if $i\neq j$, we have 
$u_iu_j^2=-u_j u_iu_j=u_j^2 u_i$.

This implies that $u_i^2,u_j^2$ commute for all $i,j$. 
Consequently, the $k$-algebra $$R=k[u_1^2,u_1^{-2},\ldots,u_r^2,u_r^{-2}]$$ 
is a commutative $k$-subalgebra of $\Ac$. (Of course, as remarked in the second paragraph of this section, the $k$-algebra generated by $u_i^2$ will already contain $u_i^{-2}$, but we choose to include the $u_i^{-2}$ in the generators of $R$ to emphasize that the $u_i^2$ are units in $R$, a fact we will need below.)

Notice also that for any $u_{i_1},\ldots,u_{i_k}$, we have $(u_{i_1}\cdots u_{i_k})^2=\pm u_{i_1}^2\cdots u_{i_k}^2\in R^\times$.

We recall the definition of the algebra $(a,b)_R$ from Part \ref{ex_quat_over_R}  of Examples \ref{ex_Az_alg}
 in Appendix \ref {app_azu}:  given a commutative ring $R$ and $a$, $b$ in $R^\times$, $(a,b)_R$ is the $R$-algebra generated by two elements $e$ and $f$ subject to the relations $e^2=a$, $f^2=b$, and $fe = -ef$. It has the matrix realization described in Appendix \ref {app_azu}.

\begin{lemma} \label {lemma_lotsa_quats}
Let $r=2s$ or $2s+1$. Keeping notation above, $\Ac$ contains a subring isomorphic to  $$(a_1,b_1)_R\otimes_R\cdots\otimes_R (a_s,b_s)_R,$$
for some $a_p,b_p\in R^\times.$
\end{lemma}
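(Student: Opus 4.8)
\medskip\noindent\textbf{Proof strategy.} The plan is to pair the units $u_1,\ldots,u_{2s}$ (discarding $u_{2s+1}$ when $r=2s+1$) and to build the $p$-th quaternion factor out of the $p$-th pair $\{u_{2p-1},u_{2p}\}$, after twisting the generators so that distinct factors centralize one another. Concretely, set $P_p=u_1u_2\cdots u_{2p-2}$ (with $P_1=1$) and, for $p=1,\ldots,s$, put
$$e_p=P_p\,u_{2p-1}=u_1\cdots u_{2p-1},\qquad f_p=P_p\,u_{2p}=u_1\cdots u_{2p-2}\,u_{2p}.$$
Each $u_i$ with $i\le 2p-2$ anticommutes with both $u_{2p-1}$ and $u_{2p}$, so $P_p$ commutes with $u_{2p-1}$ and with $u_{2p}$; hence $e_p^2=P_p^2 u_{2p-1}^2$ and $e_pf_p=P_p^2 u_{2p-1}u_{2p}=-f_pe_p$. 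Using the identity $(u_{i_1}\cdots u_{i_k})^2=\pm u_{i_1}^2\cdots u_{i_k}^2\in R^\times$ recorded above, $a_p:=e_p^2$ and $b_p:=f_p^2$ lie in $R^\times$. Thus $e_p,f_p$ satisfy exactly the defining relations of $(a_p,b_p)_R$.

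Next I would establish the cross-commutation: for $p\neq p'$, the generators of the $p$-th pair commute with those of the $p'$-th pair. The elementary fact needed is that, for a product $g=u_{i_1}\cdots u_{i_k}$ with distinct indices, moving another $u_q$ to the left past the factors of $g$ gives $gu_q=(-1)^{k}u_qg$ if $q\notin\{i_1,\ldots,i_k\}$ and $gu_q=(-1)^{k-1}u_qg$ if $q\in\{i_1,\ldots,i_k\}$. Now each of $e_p,f_p$ is a product of $k=2p-1$ (odd) of the $u_i$ whose index set lies inside $\{1,\ldots,2p\}$, whereas for $p<p'$ each of $e_{p'},f_{p'}$ is a product of $u_i$'s whose index set contains $\{1,\ldots,2p'-2\}\supseteq\{1,\ldots,2p\}$; so when $g\in\{e_p,f_p\}$ is moved past $h\in\{e_{p'},f_{p'}\}$, the number of anticommutations incurred equals the number of indices occurring in $h$ but not in $g$, namely $(2p'-1)-(2p-1)=2(p'-p)$, which is even, so $g$ and $h$ commute. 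Finally, each $u_i$ commutes with every $u_j^2$ (as already noted), hence each $e_p,f_p$ centralizes the commutative ring $R$.

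With these relations in hand, define the $R$-algebra homomorphism
$$\varphi\colon (a_1,b_1)_R\otimes_R\cdots\otimes_R(a_s,b_s)_R\longrightarrow \Ac$$
sending the generators of the $p$-th tensor factor to $e_p$ and $f_p$; it is well defined precisely because of the within-pair relations, the between-pair commutations, and the $R$-linearity just checked. The remaining point, which I expect to be the real crux, is that $\varphi$ is injective. Here I would invoke Azumaya theory: since $2,a_p,b_p\in R^\times$, each $(a_p,b_p)_R$ is an Azumaya $R$-algebra (Appendix \ref{app_azu}), so $D:=(a_1,b_1)_R\otimes_R\cdots\otimes_R(a_s,b_s)_R$ is Azumaya over $R$ with center exactly $R$. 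Because $\varphi$ restricts on $R\subseteq D$ to the inclusion $R\hookrightarrow \Ac$, we get $\ker\varphi\cap R=0$; but the two-sided ideals of an Azumaya algebra $D$ are precisely the sets $I\cdot D$ for ideals $I$ of $R$, with $(I\cdot D)\cap R=I$, so $\ker\varphi=(\ker\varphi\cap R)\cdot D=0$. Hence $\varphi$ is injective and its image is a subring of $\Ac$ isomorphic to $(a_1,b_1)_R\otimes_R\cdots\otimes_R(a_s,b_s)_R$, as desired. The only delicate points are the sign bookkeeping in the cross-commutation step and making sure the ideal correspondence applies to our $R$ (which need be neither reduced nor semisimple) — but that correspondence holds over an arbitrary commutative base ring, so no extra hypothesis on $R$ is needed.
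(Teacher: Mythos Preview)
Your proof is correct and follows essentially the same approach as the paper. Your generators $e_p=u_1\cdots u_{2p-1}$ and $f_p=u_1\cdots u_{2p-2}u_{2p}$ are literally the paper's $\alpha_p=u_{I_p}$ and $\beta_p=u_{J_p}$ with $I_p=\{1,\ldots,2p-1\}$ and $J_p=\{1,\ldots,2p-2,2p\}$; the paper verifies the quaternion and cross-commutation relations via the uniform sign formula $u_Iu_J=(-1)^{|I||J|-|I\cap J|}u_Ju_I$, while you do the equivalent index-counting by hand, and your injectivity argument via the ideal correspondence for Azumaya algebras is exactly the content of the paper's Lemma~\ref{inj} together with Remark~\ref{reminj}.
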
  
  
\begin{proof}
If $I$ is any subset of $\{1,\ldots, n\}$, set $u_I=\ds \prod_{i\in I}u_i$. It is then easy to check that for all $I,J$, we have $u_Iu_J=(-1)^{\vert I\vert\cdot \vert J\vert -\vert I\cap J\vert}u_J u_I$.

For $p=1,\ldots,s$, 
set $$I_p=\{1,\ldots,2p-1\}, \quad J_p=\{1,\ldots,2p-2,2p\}.$$
We then have $\vert I_p\vert =\vert J_p\vert =2p-1, \vert I_p\cap J_p\vert =2p-2$, and for all $1\leq p< q\leq s$, we have 
$\vert I_p\cap I_q\vert=\vert J_p\cap J_q\vert=\vert I_p\cap J_q\vert =\vert I_q\cap J_p\vert=2p-1$.
 
Now set $$\alpha_p=u_{I_p}, \quad \beta_p=u_{J_p}.$$

Notice that $a_p=\alpha_p^2,b_p=\beta_p^2\in R^\times$. 
Moreover, for all $p=1,\ldots,s$, we have 
$\alpha_p\beta_p=$ $u_{I_p}u_{J_p}=(-1)^{(2p-1)^2-(2p-2)}u_{J_p}u_{I_p}$  $=-u_{J_p}u_{I_p}$  $=-\beta_p\alpha_p$.
Thus, for all $p=1,\ldots,s$, we have an $R$-algebra morphism $\varphi_p:(a_p,b_p)_R\to \Ac$, which maps the generators $e_p$ and $f_p$ onto $\alpha_p$ and $\beta_p$ respectively.

Now for all $1\leq p<q\leq s$, we have 
$\alpha_p\alpha_q=$ $u_{I_p}u_{I_q}=$ $(-1)^{(2p-1)(2q-1)-(2p-1)}u_{J_p}u_{I_p}$ $=\alpha_q\alpha_p$.
Similarly, we have $\beta_p\beta_q=\beta_q\beta_p$.
We also have $\alpha_p\beta_q=$ $u_{I_p}u_{J_q}=$ $(-1)^{(2p-1)(2q-1)-(2p-1)}u_{J_q}u_{I_p}=$ $\beta_q\alpha_p$.
Similarly, we have $\alpha_q\beta_p=\beta_p\alpha_q$.

It follows that $\varphi_1,\ldots,\varphi_s$ have pairwise commuting images. Thus, they induce an $R$-algebra morphism $$(a_1,b_1)_R\otimes_R\cdots\otimes_R (a_s,b_s)_R\to \Ac.$$
By Lemma \ref{inj} and Remark \ref{reminj}, this morphism is injective.
\end{proof}

We may now give a full answer to the previous question. 
\begin{theorem} \label {thm: max_Greg_anticommute}
Let $k$ be a number field, and let $\Ac$ be a central simple $k$-algebra. Let $u_1,\ldots,u_r$ ($r\ge 2$) be invertible elements in  $\Ac$ which pairwise anticommute. Then we have $$r\leq 2\nu_2\left(\frac{\deg(\Ac)}{{\rm ind}(\Ac)}\right)+2 \mbox{ if $r$ is even}$$ and $$r\leq 2 \nu_2\left(\frac{\deg(\Ac)}{{\rm ind}(\Ac)}\right)+3 \mbox{ if $r$ is odd},$$ where $\nu_2$ denotes the $2$-adic value of an integer, i.e., the highest power of $2$ that divides that integer.

In particular, if $\Ac$ is a central division $k$-algebra, then $r=2,3.$
\end{theorem}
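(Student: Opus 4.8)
The plan is to leverage Lemma \ref{lemma_lotsa_quats}, which already does the hard structural work: given $r = 2s$ or $2s+1$ pairwise-anticommuting units $u_1,\ldots,u_r$ in $\Ac$, we obtain an injection of the $R$-algebra $Q = (a_1,b_1)_R \otimes_R \cdots \otimes_R (a_s,b_s)_R$ into $\Ac$, where $R = k[u_1^2,u_1^{-2},\ldots,u_r^2,u_r^{-2}]$ is a commutative subalgebra of $\Ac$. The strategy is then purely dimension/index bookkeeping: each quaternion factor $(a_p,b_p)_R$ has rank $4$ over $R$, so $Q$ has rank $4^s$ over $R$, and this must be compatible with the fact that $Q$ embeds in the central simple $k$-algebra $\Ac$ of degree $n = \deg(\Ac)$. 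The key is to extract from this a bound forcing $4^s \mid$ (something controlled by $\deg(\Ac)/\mathrm{ind}(\Ac)$ up to a power of $2$), and then read off $s \le \nu_2(\deg(\Ac)/\mathrm{ind}(\Ac)) + 1$.

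More precisely, first I would base-change to a maximal ideal: since $R$ is a commutative finite-dimensional $k$-algebra, pick a maximal ideal $\mm \subset R$ with residue field $F = R/\mm$, a finite field extension of $k$ (hence a number field). The image of $Q$ in $\Ac$ does not obviously reduce mod $\mm$, so instead I would work with the $F$-algebra $\ov{Q} = Q \otimes_R F = (\ov{a}_1,\ov{b}_1)_F \otimes_F \cdots \otimes_F (\ov{a}_s,\ov{b}_s)_F$, a tensor product of quaternion $F$-algebras, hence a central simple $F$-algebra of degree $2^s$ (each quaternion algebra is degree $2$, tensor product of central simple algebras over a field is central simple with multiplicative degree). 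Now I need to relate $\ov Q$ back to $\Ac$. The cleanest route: $R$ sits inside $\Ac$, and by the theory of Azumaya algebras (Appendix \ref{app_azu}), $\Ac \otimes_k R$ — or rather the centralizer constructions — let one compare. Actually the more direct classical fact is the double centralizer theorem over the field $F$: the composite $F \hookrightarrow \ov{Q}$, after suitably extending scalars on $\Ac$, embeds $\ov Q$ into $\Ac \otimes_k F$, a central simple $F$-algebra of degree $n$. By the standard constraint on subalgebras (a central simple $F$-subalgebra $\ov Q \subseteq \Ac_F$ forces $\deg(\ov Q) \mid \deg(\Ac_F) = n$, and moreover $\Ac_F \cong \ov Q \otimes_F C$ where $C$ is the centralizer), we get $2^s \mid n$ and, refining via indices, $\mathrm{ind}(\ov Q) \mid \mathrm{ind}(\Ac_F) \mid \mathrm{ind}(\Ac)$, while $\deg(\ov Q)/\mathrm{ind}(\ov Q) = 2^s/\mathrm{ind}(\ov Q)$ is a power of $2$ dividing $n/\mathrm{ind}(\Ac_F)$, whose $2$-adic valuation is at most $\nu_2(\deg(\Ac)/\mathrm{ind}(\Ac))$. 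Chasing the powers of $2$ through $\deg(\ov Q) = \mathrm{ind}(\ov Q)\cdot(\text{matrix size})$ and $\deg(\Ac_F) = \mathrm{ind}(\Ac_F)\cdot(\text{matrix size})$ yields $s \le \nu_2(\deg(\Ac)/\mathrm{ind}(\Ac)) + 1$, i.e. $2s \le 2\nu_2(\deg(\Ac)/\mathrm{ind}(\Ac)) + 2$, which is exactly the even bound; the odd bound $r = 2s+1 \le 2s+1 \le 2\nu_2(\cdots)+3$ follows immediately.

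The main obstacle I anticipate is the passage from the $R$-algebra embedding $Q \hookrightarrow \Ac$ to a clean statement about a central simple $F$-algebra inside a central simple $F$-algebra — one must be careful that $R \subseteq \Ac$ need not be contained in a maximal commutative subfield, that $\ov Q$ is genuinely central simple over $F$ (not just over some quotient), and that the index inequality $\mathrm{ind}(\ov Q) \mid \mathrm{ind}(\Ac)$ survives the base change $k \to F$ (it does, since index can only drop under scalar extension). Handling this likely requires invoking that each $a_p, b_p \in R^\times$ maps to a nonzero element of $F$ so the quaternion symbols are well-defined over $F$, plus the Azumaya-theoretic fact from the appendix that the tensor product $(a_1,b_1)_R \otimes_R \cdots \otimes_R (a_s,b_s)_R$ is an Azumaya $R$-algebra of constant rank $4^s$, so its reduction mod any maximal ideal is central simple of degree $2^s$. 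For the final clause: if $\Ac$ is a central division $k$-algebra then $\deg(\Ac) = \mathrm{ind}(\Ac)$, so $\nu_2(\deg(\Ac)/\mathrm{ind}(\Ac)) = \nu_2(1) = 0$, giving $r \le 2$ if $r$ even and $r \le 3$ if $r$ odd, hence $r \in \{2,3\}$, and these are achieved (e.g. the quaternions), completing the proof.
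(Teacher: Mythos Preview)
Your proposal has the right starting point (Lemma~\ref{lemma_lotsa_quats}) but the subsequent reduction has two genuine gaps, and the second one is fatal to the approach as written.

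First, the passage you yourself flag as ``the main obstacle'' is not resolved: the injection $Q \hookrightarrow \Ac$ of $R$-algebras does \emph{not} yield an embedding $\ov{Q} = Q \otimes_R F \hookrightarrow \Ac \otimes_k F$. Tensoring over $k$ gives $Q \otimes_k F \hookrightarrow \Ac \otimes_k F$, but $\ov{Q}$ is a \emph{quotient} of $Q \otimes_k F$, not a subalgebra. One can recover $\ov{Q}$ as a corner $e(Q\otimes_k F)e$ for an idempotent $e$ of $R\otimes_k F$, but $e$ is not central in $\Ac\otimes_k F$, so $e(\Ac\otimes_k F)e$ is merely Brauer-equivalent to $\Ac_F$ and of possibly smaller degree. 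Second, and more seriously, even granting an embedding of $\ov{Q}$ into a central simple $F$-algebra Brauer-equivalent to $\Ac_F$, your index bookkeeping fails. The claim $\mathrm{ind}(\ov{Q}) \mid \mathrm{ind}(\Ac_F)$ is false in general (a division quaternion algebra embeds in $M_4(F)$ via its regular representation), and the implicit claim that $\nu_2(n/\mathrm{ind}(\Ac_F)) \le \nu_2(n/\mathrm{ind}(\Ac))$ has the inequality reversed: since index can only drop under scalar extension, $\mathrm{ind}(\Ac_F) \mid \mathrm{ind}(\Ac)$, whence $\nu_2(n/\mathrm{ind}(\Ac_F)) \ge \nu_2(n/\mathrm{ind}(\Ac))$. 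Passing to $F$ thus loses precisely the index information about $\Ac$ over $k$ that the bound requires.

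The paper's proof avoids both issues by never leaving $k$. It invokes Proposition~\ref{prodquat} (whose proof packages the Azumaya/Hensel machinery together with the number-field fact that exponent equals index) to show that $(a_1,b_1)_R \otimes_R \cdots \otimes_R (a_s,b_s)_R$ is isomorphic \emph{as an $R$-algebra} to $M_{2^{s-1}}(R) \otimes_R (c,d)_R$. Since $M_{2^{s-1}}(k) \subseteq M_{2^{s-1}}(R)$, this exhibits $M_{2^{s-1}}(k)$ as a $k$-subalgebra of $\Ac$ itself. The centralizer theorem over $k$ then gives $\Ac \cong_k M_{2^{s-1}}(k)\otimes_k \Ac'$ with $\Ac'$ Brauer-equivalent to $\Ac$, so $2^{s-1}$ divides $\deg(\Ac)/\mathrm{ind}(\Ac)$ directly, with no base change and no loss of index information.
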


\begin{remark} See Appendix \ref{app_RHE_connections} for how this result above compares with the classical Hurwitz-Radon-Eckmann bound on anticommuting matrices.
\end{remark}

\begin{proof}
We may assume $r>2$ if $r$ is even, and $r > 3$ if $r$ is odd, since otherwise this is trivial.
Write $r=2s$ or $r=2s+1$, so $s \ge 2$.
By the previous lemma, $\Ac$ contains an $R$-algebra isomorphic to $$(a_1,b_1)_R\otimes_R\cdots\otimes_R (a_s,b_s)_R,$$
for some $a_p,b_p\in R^\times.$ By Proposition \ref{prodquat}  in Appendix \ref {app_azu} applied $s-1$ times (note that $s-1 \ge 1$ by assumption), this $R$-algebra is isomorphic to $$M_{2^{s-1}}(R)\otimes_R (c,d)_R\cong_R (M_{2^{s-1}}(k)\otimes_k R)\otimes_R (c,d)_R$$ for some $c,d\in R^\times.$ 
Hence $\Ac$ contains a $k$-subalgebra isomorphic to $M_{2^{s-1}}(k)$. The centralizer theorem then implies that $$\Ac\cong_k M_{2^{s-1}}(k)\otimes_k \Ac',$$ for some central simple $k$-algebra $\Ac'$, which is Brauer-equivalent to $\Ac$ by definition. Therefore, we may write $$\Ac\cong_k M_\ell(D), \quad \Ac'\cong_k M_t(D),$$ where $D$ is a central division $k$-algebra.
Thus, we get $$M_\ell(D)\cong_k M_{2^{s-1}t}(D),$$
and then $2^{s-1}t=\ell=\frac{\deg(\Ac)}{{\rm ind}(\Ac)}.$
The desired result follows easily.
\end{proof}

\begin{remark}
If $\Ac$ is a central simple $k$-algebra of odd degree, then $\Ac$ does not contain pairwise anticommuting invertible elements. 

Indeed, if $u_1$ and $u_2$ anticommute, then we have $$\Nrd_\Ac(u_1u_2)=\Nrd_\Ac(u_1)\Nrd_\Ac(u_2)=\Nrd_\Ac(-u_2u_1)=-\Nrd_\Ac(u_2)\Nrd_\Ac(u_1),$$
where the last equality arises from the fact that $\Nrd_\Ac(-1) = -1$ since $\Ac$ has odd degree.
Hence, $\Nrd_\Ac(u_1)\Nrd_\Ac(u_2)=0$. But the reduced norm of an invertible element of $\Ac$ is non-zero, hence a contradiction.
\end{remark}

Hence the previous bounds are not always sharp.
However they may be sharp in certain cases as the following example shows, which proves that these bounds are the best possible ones.

\begin{example}\label{exsharp}
Let $\ell\geq 0$ be an integer, let $Q=(a,b)_k$ be a division quaternion $k$-algebra, and let $\Ac=M_{2^\ell}(k)\otimes_k Q$.

In order to avoid mixing notation, we will denote exceptionally by $\odot$ the Kronecker product of matrices. If $t\geq 0$ is an integer, we denote by $M^{\odot t}$ the Kronecker product of $t$ copies of $M$, where $M^{\odot 0}$ is the identity matrix by convention.

Let $$H_1=\begin{pmatrix}1 & 0 \cr  0 & -1\end{pmatrix} \ \mbox{ and } \ H_{-1}=\begin{pmatrix}0 & -1 \cr  1 & 0\end{pmatrix}.$$
For $p=1,\ldots,\ell,$ set $$U_{2p-1}=H_1^{\odot (p-1)}\odot H_1H_{-1}\odot I_2^{\odot (\ell-p)} \ \mbox{and } \  
U_{2p}=H_1^{\odot (p-1)}\odot H_{-1}\odot I_2^{\odot (\ell-p)}.$$

The properties of the Kronecker product and the fact that $H_1H_{-1}=-H_{-1}H_1$, show that $U_1,\ldots,U_{2p}$ are invertible matrices of $M_{2^\ell}(k)$ which pairwise anticommute.

Now let $e$ and $f$ be the generators of $Q$. Then it is easy to check that the $2\ell+3$ invertible elements 

$$U_1\otimes 1,\ldots,U_{2\ell}\otimes 1, U_{1}\cdots U_{2\ell}\otimes e,U_{1}\cdots U_{2\ell}\otimes f, U_{1}\cdots U_{2\ell}\otimes ef\in \Ac$$ pairwise anticommute.   

Notice for later use that $U_{2p-1}$ is symmetric and $U_{2p}$ is skew-symmetric for $p=1,\ldots,\ell$. Notice also that $U_1\cdots U_{2\ell}$ is symmetric, as a straightforward computation shows. 
\end{example}

As a corollary, we also get an answer to our main problem.

\begin{corollary} \label {cor_max_groups_Greg}
Let $k$ be a number field, let $\mathcal{A}$ be a central simple $k$-subalgebra of $M_n({\mathbb{C}})$.
Assume that we have $g$ non-zero matrices $A_1,\ldots,A_g\in\mathcal{A}^\times$ $(g\geq 2)$ such that $$A_i^*A_j+A_j^*A_i=0 \ \mbox{for all }i\neq j.$$
Then $g\leq  2\nu_2(\frac{\deg(\Ac)}{{\rm ind}(\Ac)})+3$ if $g$ is odd, and $g\leq  2\nu_2(\frac{\deg(\Ac)}{{\rm ind}(\Ac)})+4$ if $g$ is even. 

In particular, if $\Ac$ is a central division $k$-algebra, then $g \leq 4.$ 
\end{corollary}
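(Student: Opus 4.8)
The plan is to deduce this corollary from Theorem~\ref{thm: max_Greg_anticommute} by the same reduction that was used throughout Section~\ref{secn_mo_shmo}: replace the mutually orthogonal matrices by anticommuting ones. Concretely, suppose $A_1, \ldots, A_g \in \Ac^\times$ satisfy $A_i^*A_j + A_j^*A_i = 0$ for all $i \neq j$. The statement as written involves $A_i^*A_j$ rather than $A_iA_j^*$, so first I would note that $A_i^*A_j + A_j^*A_i = 0$ is exactly the condition that $A_i^*, A_j^*$ are mutually orthogonal in the sense of Definition~\ref{defn_MO} (taking conjugate transposes: $(A_i^*)(A_j^*)^* + (A_j^*)(A_i^*)^* = A_i^*A_j + A_j^*A_i = 0$). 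So without loss of generality I am looking at $g$ invertible, pairwise mutually orthogonal matrices, which moreover lie in $\Ac$ (note $\Ac$ need not be closed under $*$, but the hypothesis is stated directly for the $A_i$, and one only needs the $A_i$ themselves in $\Ac$ for what follows).

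Next I would apply the reduction of Lemmas~\ref{MO_under_matrix_mult}, \ref{MO_yields_SH}, and \ref{lem:mo_to_ac}. Multiplying all the matrices on the left by $A_1^{-1}$, which lies in $\Ac$ by the remark in the second paragraph of Section~\ref{AzAlg} (an invertible element of $\Ac$ has its inverse in $k[A]\subseteq\Ac$), we may assume $A_1 = I_n$. By Lemma~\ref{MO_yields_SH} the matrices $C_i := A_1^{-1}A_i$ for $i \geq 2$ are skew-Hermitian, and by Lemma~\ref{lem:mo_to_ac} the $C_2, \ldots, C_g$ pairwise anticommute. These are $g-1$ invertible pairwise anticommuting elements of $\Ac$. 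If $g - 1 \geq 2$, Theorem~\ref{thm: max_Greg_anticommute} applies with $r = g-1$: when $g$ is odd, $r = g-1$ is even, so $r \leq 2\nu_2(\deg(\Ac)/\mathrm{ind}(\Ac)) + 2$, giving $g \leq 2\nu_2(\deg(\Ac)/\mathrm{ind}(\Ac)) + 3$; when $g$ is even, $r = g-1$ is odd, so $r \leq 2\nu_2(\deg(\Ac)/\mathrm{ind}(\Ac)) + 3$, giving $g \leq 2\nu_2(\deg(\Ac)/\mathrm{ind}(\Ac)) + 4$. If $g - 1 < 2$, i.e. $g = 2$, the claimed bounds hold trivially since $\nu_2 \geq 0$.

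For the final assertion, if $\Ac$ is a central division $k$-algebra then $\deg(\Ac) = \mathrm{ind}(\Ac)$, so $\nu_2(\deg(\Ac)/\mathrm{ind}(\Ac)) = \nu_2(1) = 0$, and the bounds become $g \leq 3$ ($g$ odd) and $g \leq 4$ ($g$ even), hence $g \leq 4$ in all cases. I do not anticipate a genuine obstacle here — the corollary is a direct translation. The one point requiring a little care is the bookkeeping of parities (the shift by one when passing between $g$ and $r = g-1$ swaps the even and odd cases), and making sure the edge case $g=2$ is handled separately since Theorem~\ref{thm: max_Greg_anticommute} requires $r \geq 2$. Both are routine.
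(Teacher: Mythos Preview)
Your approach is exactly the paper's: reduce the $g$ mutually orthogonal matrices to $g-1$ pairwise anticommuting invertible elements of $\Ac$ and invoke Theorem~\ref{thm: max_Greg_anticommute}, with the parity shift $r=g-1$ handled just as you do. The paper's proof is the single sentence ``By Lemma~\ref{lem:mo_to_ac}, the existence of $g$ such matrices implies the existence of $g-1$ invertible elements of $\Ac$ which pairwise anticommute; now apply the previous theorem,'' so your write-up is in fact more detailed.

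One small slip worth tightening: because the hypothesis is $A_i^*A_j+A_j^*A_i=0$ (so it is the $A_i^*$ that are mutually orthogonal in the sense of Definition~\ref{defn_MO}), Lemmas~\ref{MO_yields_SH} and~\ref{lem:mo_to_ac} applied verbatim yield that $(A_1^*)^{-1}A_i^*=(A_iA_1^{-1})^*$ are skew-Hermitian and pairwise anticommute, hence so are the $A_iA_1^{-1}\in\Ac$ --- not the $C_i=A_1^{-1}A_i$ you wrote. (In fact $A_1^{-1}A_i$ need not be skew-Hermitian under this hypothesis, though a direct check shows they \emph{do} still anticommute, which is all Theorem~\ref{thm: max_Greg_anticommute} requires.) This is a cosmetic fix and the paper's own proof glosses over the same point; your argument goes through with $A_iA_1^{-1}$ in place of $A_1^{-1}A_i$.
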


\begin{proof}
By Lemma \ref{lem:mo_to_ac}, the existence of $g$ such matrices implies the existence of $g-1$ invertible elements of $\mathcal{A}$ which pairwise anticommute. Now apply the previous theorem to conclude.
\end{proof}

The next example shows that these bounds may be sharp.

\begin{example}
Let $k\subset \rr$, and let $U_1,\ldots, U_{2\ell}\in M_{2^\ell}(k)\subset M_{2^\ell}(\rr) $ be the matrices introduced in Example \ref{exsharp}. Set $Q=(-1,-1)_k$, so that $Q$ is a division $k$-algebra.

The multiplication matrices of $e$ and $f$ with respect to the $k(i)$-basis $(1,e)$ of $Q$ (viewed as a right $k(i)$-vector space) are the skew-Hermitian matrix $iH_1$ and the hermitian matrix $H_{-1}$ respectively. Notice that $iH_1H_{-1}$ is skew-Hermitian. The results of Example \ref{exsharp} show that the matrices $$U_1\odot I_2,\ldots, U_{2\ell}\odot I_2, U_1\cdots U_{2\ell}\odot (iH_1),U_1\cdots U_{2\ell}\odot H_{-1}, U_1\cdots U_{2\ell}\odot (iH_1H_{-1})$$ pairwise anticommute.

Each of these matrices are hermitian or skew-Hermitian. Multiplying by $i$ the appropriate matrices yields a set of $2\ell+3$ skew-Hermitian matrices which pairwise anticommute. More precisely, one may check that the matrices 
$$U_{2p-1}\odot I_2, U_{2p}\odot (iI_2), p=1,\ldots,\ell,$$
$$U_1\cdots U_{2\ell}\odot (iH_1),U_1\cdots U_{2\ell}\odot (iH_{-1}), U_1\cdots U_{2\ell}\odot (iH_1H_{-1})$$
are skew-Hermitian matrices which pairwise anticommute. Adding the identity matrix then gives rise to a set of $2\ell+4$ mutually orthogonal matrices.
\end{example}

It is worth rewording the result in Corollary \ref{cor_max_groups_Greg} in the language of our space-time code. We have the following:
\begin{theorem} \label {Greg_hauptsatz} 
If the space-time code $X = \displaystyle\sum_{i=1}^{2l} s_i A_i$  is fast-decodable, then the number of groups $g$ in (\ref{fast_decod_matrix}) is at most $2 \nu_2(n) + 4$.  If we assume that the $A_i$ are chosen from some $k$-central simple algebra $\Ac \subseteq M_n(\C)$, where $k$ is some number field, then, this upper bound drops to $g \leq 2\nu_2(\frac{\deg(\Ac)}{{\rm ind}(\Ac)})+4$. In particular, if the $A_i$ are chosen from a $k$-central division algebra, then $g \le 4$.
\end{theorem}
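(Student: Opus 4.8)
The plan is to simply translate Corollary~\ref{cor_max_groups_Greg} into the language of the space-time code via the dictionary established in Section~\ref{secn_fast_decod}, and then specialize. First I would recall that by Theorem~\ref{fd_means_nice_T} (and the summarizing Corollary~\ref{thm:FDMOequiv}), the number of blocks $g$ appearing in the matrix $R$ of~(\ref{fast_decod_matrix}) equals the number of nonempty mutually orthogonal groups $\Gamma_1,\ldots,\Gamma_g$ in Definition~\ref{deffd}. Picking one basis matrix $A_{i} \in \Gamma_i$ for each $i=1,\dots,g$, the defining property of the partition says that $A_u A_v^* + A_v A_u^* = 0$ for $u \in \Gamma_i$, $v\in \Gamma_j$ with $i\ne j$; in particular the chosen representatives $A_{1},\dots,A_{g}$ are pairwise mutually orthogonal invertible matrices in $M_n(\C)$. (Recall from the start of Section~\ref{secn_sys_model} that we assume the $A_i$ are invertible.)

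Next I would apply Corollary~\ref{cor_max_groups_Greg} directly. For the most general statement, take $\Ac = M_n(\C)$ viewed as a central simple algebra over $k=\C$ itself, of degree $n$ and index $1$; then $\nu_2(\deg(\Ac)/\mathrm{ind}(\Ac)) = \nu_2(n)$, and Corollary~\ref{cor_max_groups_Greg} gives $g \le 2\nu_2(n)+3$ if $g$ is odd and $g\le 2\nu_2(n)+4$ if $g$ is even; in either case $g \le 2\nu_2(n)+4$, which is the first assertion. If instead the $A_i$ are all drawn from a $k$-central simple subalgebra $\Ac\subseteq M_n(\C)$ for some number field $k$, the representatives $A_{1},\dots,A_{g}$ lie in $\Ac^\times$ and are pairwise mutually orthogonal, so Corollary~\ref{cor_max_groups_Greg} applies verbatim to give $g \le 2\nu_2(\deg(\Ac)/\mathrm{ind}(\Ac))+4$. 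Finally, when $\Ac$ is a central division $k$-algebra we have $\deg(\Ac)=\mathrm{ind}(\Ac)$, so $\nu_2(\deg(\Ac)/\mathrm{ind}(\Ac)) = \nu_2(1) = 0$, and the bound collapses to $g\le 4$.

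There is essentially no obstacle here: the theorem is a corollary of Corollary~\ref{cor_max_groups_Greg}, and the only point requiring a word of care is that Corollary~\ref{cor_max_groups_Greg} is phrased for matrices $A_i$ satisfying $A_i^*A_j + A_j^*A_i = 0$, whereas the groups in Definition~\ref{deffd} are defined via $A_uA_v^* + A_vA_u^* = 0$; but these two conditions are equivalent (they are conjugate-transposes of one another), so passing between them is harmless. The one genuinely substantive input, already proved earlier, is Lemma~\ref{lem:mo_to_ac}, which converts $g$ mutually orthogonal invertible matrices into $g-1$ pairwise anticommuting invertible elements — this is what lets Corollary~\ref{cor_max_groups_Greg} reduce to Theorem~\ref{thm: max_Greg_anticommute}, whose proof (via the embedded tensor product of quaternion algebras from Lemma~\ref{lemma_lotsa_quats}) is the real content. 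At the level of this last theorem, however, everything is bookkeeping with $\nu_2$ and the identity $\mathrm{ind}(\Ac) = \deg(\Ac)$ for division algebras.
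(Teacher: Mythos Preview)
Your approach is the paper's: the theorem is explicitly presented there as a ``rewording'' of Corollary~\ref{cor_max_groups_Greg}, with no separate proof given, and you carry out exactly the intended translation by choosing one representative matrix from each group $\Gamma_i$ and invoking the corollary.

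One technical point deserves more care. For the first assertion you take $k=\C$ and $\Ac = M_n(\C)$, but $\C$ is not a number field, so Corollary~\ref{cor_max_groups_Greg} and Theorem~\ref{thm: max_Greg_anticommute} do not literally apply as stated. The underlying argument does go through over $\C$: the only place the number-field hypothesis is used is in Proposition~\ref{prodquat}, where one needs that a tensor product of two quaternion algebras over $R/\mm$ is isomorphic to $M_2$ tensor a quaternion, and this holds trivially over $\C$ since every quaternion algebra over an algebraically closed field is split. But you should say this explicitly rather than invoke the corollary outside its stated hypotheses. (The paper itself glosses over this point, so your write-up is no worse than the original in this respect; it would in fact be an improvement to spell it out.)
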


We get an immediate corollary:
\begin{corollary} \label {cor_cheap_result}
The decoding complexity of a fast decodable space-time code $X = \displaystyle\sum_{i=1}^{2l} s_i A_i$ where the $A_i$ are chosen from a division algebra is at least $\Al^{\lceil l/2 \rceil}$.
\end{corollary}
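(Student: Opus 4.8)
The plan is to combine the structural bound from Theorem~\ref{Greg_hauptsatz} with an elementary averaging estimate on the block sizes. Recall from Definition~\ref{def:decod_complexity} that the decoding complexity of the code is $\Al^{E}$, where $E := n_{g+1} + \max_{1\le i\le g} n_i$ with $n_i = |\Gamma_i|$, the $\Gamma_i$ as in Definition~\ref{deffd}; recall also that $n_1 + \cdots + n_g + n_{g+1} = 2l$ and that $g\ge 2$ with $n_1,\ldots,n_g\ge 1$. So it suffices to show that $E \ge \lceil l/2\rceil$.

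First I would note that the maximum of $g$ nonnegative numbers is at least their average, so $\max_{1\le i\le g} n_i \ge \tfrac{1}{g}(n_1 + \cdots + n_g) = \tfrac{2l - n_{g+1}}{g}$. Since the $A_i$ are chosen from a division algebra, the last assertion of Theorem~\ref{Greg_hauptsatz} gives $g \le 4$; and since $g\ge 2$ and $n_1,\ldots,n_g\ge 1$ we have $0 \le n_{g+1} \le 2l - 2 < 2l$, so $\tfrac{2l - n_{g+1}}{g} \ge \tfrac{2l - n_{g+1}}{4}$. Combining,
\[
E \;\ge\; n_{g+1} + \frac{2l - n_{g+1}}{4} \;=\; \frac{3\,n_{g+1}}{4} + \frac{l}{2} \;\ge\; \frac{l}{2}.
\]
Finally, since $E = n_{g+1} + \max_i n_i$ is a sum of nonnegative integers it is itself an integer, so $E \ge l/2$ forces $E \ge \lceil l/2\rceil$, and hence the decoding complexity is at least $\Al^{\lceil l/2\rceil}$.

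I do not expect any genuine obstacle here: all the hard work sits inside Theorem~\ref{Greg_hauptsatz} (equivalently Corollary~\ref{cor_max_groups_Greg}), whose proof uses the theory of Azumaya algebras to pin the number of mutually orthogonal groups at no more than $4$ in the division-algebra case. What remains is a one-line inequality; the only point worth spelling out is the integrality of the exponent, which is what lets us replace the lower bound $l/2$ by $\lceil l/2\rceil$.
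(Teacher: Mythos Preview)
Your proof is correct and follows the same approach as the paper: invoke $g\le 4$ from Theorem~\ref{Greg_hauptsatz} and then apply an averaging/pigeonhole bound on the block sizes. In fact your argument is slightly more careful than the paper's, which asserts $\max_{1\le i\le g} n_i \ge \lceil 2l/4\rceil$ without explicitly accounting for $n_{g+1}$; your chain $E \ge n_{g+1} + (2l-n_{g+1})/4 \ge l/2$ together with the integrality step handles this cleanly.
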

\begin{proof} At least one of the groups $\Gamma_i$ ($i=1, \dots, g$) in Definition \ref{deffd} must be of size at least $\lceil 2l/4 \rceil$, as $g \le 4$ when the $A_i$ are chosen from a division algebra. Thus, the decoding complexity is at least $\Al^{n_{g+1} + \lceil 2l/4 \rceil} \ge \Al^{\lceil l/2 \rceil}$.
\end{proof}

\appendices
\section{Commutative Algebra} \label {app_comm_alg} 

We collect here some useful results in commutative algebra. We start with the notion of an Artin ring.

\begin{defi}
A commutative ring $R$ is an {\it Artin} ring if every descending chain of ideals $I_0 \supset I_1 \supset I_2 \supset \cdots$ of $R$ is stationary, i.e., there exists $n > 0$ such that $I_n = I_{n+1} = I_{n+2} = \cdots$.
\end{defi}

\begin{example}\label{exartin}
If $k$ is a field, any finite-dimensional commutative $k$-algebra $R$ is an Artin ring.
Indeed, any ideal is in particular a finite-dimensional $k$-subspace of $R$, so it cannot exist a strictly decreasing chain of ideals. 
\end{example}

\begin{theorem}\cite[Ch.8, Thm 8.5]{At}
Any Artin ring is Noetherian, that is every ideal is finitely generated.
\end{theorem}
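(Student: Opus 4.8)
The plan is to prove this by the classical argument (the one behind the cited reference): first extract three structural consequences of the descending chain condition, then assemble them into a finite filtration of $R$ whose successive quotients are finite-dimensional vector spaces, and finally read off the ascending chain condition from that filtration. Throughout, write $R$ for the Artin ring and $\mathfrak{N}$ for its nilradical.

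First I would establish the three structural facts, each a short consequence of DCC. \emph{(i) Every prime ideal of $R$ is maximal.} If $\mathfrak{p}$ is prime then $R/\mathfrak{p}$ is an Artin domain, and in any Artin domain the chain $(x) \supseteq (x^2) \supseteq \cdots$ stabilizes, so $x^n = x^{n+1}y$ for some $n$, and cancelling $x^n$ gives $xy = 1$; hence $R/\mathfrak{p}$ is a field and $\mathfrak{p}$ is maximal, and in particular $\mathfrak{N}$ equals the Jacobson radical. \emph{(ii) $R$ has only finitely many maximal ideals.} Pick, using DCC, a finite intersection $\mathfrak{m}_1 \cap \cdots \cap \mathfrak{m}_r$ of maximal ideals that is minimal for inclusion; then any maximal ideal $\mathfrak{m}$ satisfies $\mathfrak{m} \supseteq \mathfrak{m}_1 \cap \cdots \cap \mathfrak{m}_r$, and being prime it contains some $\mathfrak{m}_i$, hence equals it. \emph{(iii) $\mathfrak{N}$ is nilpotent.} The chain $\mathfrak{N} \supseteq \mathfrak{N}^2 \supseteq \cdots$ stabilizes to an ideal $\mathfrak{a} = \mathfrak{N}^k$ with $\mathfrak{a}^2 = \mathfrak{a}$; if $\mathfrak{a} \neq 0$, choose by DCC an ideal $\mathfrak{b}$ minimal among those with $\mathfrak{b}\mathfrak{a} \neq 0$, note that any $x \in \mathfrak{b}$ with $x\mathfrak{a} \neq 0$ forces $\mathfrak{b} = (x)$ and then $x\mathfrak{a} = (x)$ (since $(x\mathfrak{a})\mathfrak{a} = x\mathfrak{a} \neq 0$ and $x\mathfrak{a} \subseteq (x)$), so $x = xy$ with $y \in \mathfrak{N}$ nilpotent, whence $x = xy = \cdots = 0$, a contradiction; thus $\mathfrak{N}^k = 0$.

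Next I would assemble these. With $\mathfrak{m}_1, \ldots, \mathfrak{m}_r$ the maximal ideals, $(\mathfrak{m}_1 \cdots \mathfrak{m}_r)^k \subseteq \mathfrak{N}^k = 0$, so there is a finite descending chain of ideals
\[
R \supseteq \mathfrak{m}_1 \supseteq \mathfrak{m}_1\mathfrak{m}_2 \supseteq \cdots \supseteq \mathfrak{m}_1\cdots\mathfrak{m}_r \supseteq \mathfrak{m}_1^2\mathfrak{m}_2\cdots\mathfrak{m}_r \supseteq \cdots \supseteq (\mathfrak{m}_1\cdots\mathfrak{m}_r)^k = 0,
\]
each successive quotient of which is annihilated by some $\mathfrak{m}_i$ and is therefore a vector space over the field $R/\mathfrak{m}_i$. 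Each such quotient inherits DCC from $R$ (a descending chain of $R$-submodules pulls back to a descending chain of ideals of $R$), so it is a finite-dimensional vector space, hence Noetherian as an $R$-module. Since an extension of a Noetherian module by a Noetherian module is Noetherian, stepping up the chain from the bottom shows $R$ is a Noetherian $R$-module; i.e., every ideal of $R$ is finitely generated.

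The main obstacle is Step (iii): the nilpotence of $\mathfrak{N}$ is the one place where DCC is used in a genuinely non-obvious way, through the minimal-counterexample ideal $\mathfrak{b} = (x)$, and everything else is essentially bookkeeping. I would also be careful to use consistently that ``Artin ring'' here means a commutative ring satisfying DCC on \emph{ideals} (the definition just stated), since that is exactly what legitimizes applying the chain conditions to $R$ and to each quotient appearing in the filtration.
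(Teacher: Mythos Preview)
Your proof is correct and is precisely the classical argument from the cited reference \cite[Ch.~8]{At}. Note, however, that the paper itself does not give a proof of this theorem at all: it merely states the result with a citation to Atiyah--Macdonald and uses it as a black box (specifically, to deduce Corollary~\ref{artinnil} via Nakayama's lemma). So there is nothing in the paper to compare your argument against; what you have written is the standard proof behind the citation.
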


\begin{corollary}\label{artinnil}
Let $R$ be a local Artin ring, with maximal ideal $\mm$. Then there exists $n\geq 1$ such that $\mm^n=0$.
\end{corollary}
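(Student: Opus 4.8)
\textbf{Proof proposal for Corollary \ref{artinnil}.}

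The plan is to first extract a stabilizing point from the descending chain of powers of $\mm$, and then show that this stable power is actually zero using the hypothesis that $R$ is local. First I would consider the chain $\mm \supseteq \mm^2 \supseteq \mm^3 \supseteq \cdots$. Since $R$ is Artin, this chain is stationary, so there is an $n \geq 1$ with $\mm^n = \mm^{n+1} = \mm^{n+2} = \cdots$; in particular $\mm^{2n} = \mm^n$. Write $I = \mm^n$, so that $I^2 = I$ and $\mm I = I$. The goal is reduced to proving $I = 0$.

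For this I would argue by contradiction, assuming $I \neq 0$, and use the descending chain condition a second time. Consider the (nonempty, since $R\cdot I = I \neq 0$) collection of ideals $J$ of $R$ with $IJ \neq 0$; by the Artin property this collection has a minimal element $J_0$. Pick $x \in J_0$ with $Ix \neq 0$; then $(x) \subseteq J_0$ and $I(x) \neq 0$, so minimality forces $J_0 = (x)$. Now $Ix \subseteq (x)$ and $I(Ix) = I^2 x = Ix \neq 0$, so again by minimality $Ix = (x)$. Hence $x = ax$ for some $a \in I \subseteq \mm$, i.e. $(1-a)x = 0$. Since $R$ is local with maximal ideal $\mm$ and $a \in \mm$, the element $1-a$ is a unit, so $x = 0$, contradicting $Ix \neq 0$. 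Therefore $I = \mm^n = 0$.

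An alternative, shorter route would invoke Nakayama's lemma directly: the Noetherianity theorem quoted just above guarantees that $I = \mm^n$ is finitely generated, $\mm$ equals the Jacobson radical of $R$ because $R$ is local, and $\mm I = I$; Nakayama then yields $I = 0$ immediately. I would likely present the self-contained contradiction argument as the main proof and mention the Nakayama shortcut as a remark, since the paper is collecting commutative-algebra background and may prefer not to assume Nakayama without statement.

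The only mild subtlety — hardly an obstacle — is making sure the descending chain condition is applied correctly the second time (to the family of annihilator-type ideals, which requires checking it is nonempty) and that localness is used exactly where needed, namely to conclude $1-a$ is invertible. Everything else is routine.
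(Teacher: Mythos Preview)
Your proposal is correct. The paper's own proof is precisely your ``alternative, shorter route'': it stabilizes the chain at $\mm^n = \mm \cdot \mm^n$, invokes the Noetherian theorem quoted just above to get finite generation, and applies Nakayama (using that $\mm$ is the Jacobson radical of the local ring $R$) to conclude $\mm^n = 0$.

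Your main argument is a genuinely different, more self-contained route: rather than appealing to Noetherianity and Nakayama, you use the descending chain condition a second time on the family $\{J : IJ \neq 0\}$ to produce a minimal $J_0 = (x)$ with $Ix = (x)$, and then localness gives the unit $1-a$ and the contradiction. This is essentially the Atiyah--Macdonald argument for nilpotence of the nilradical in an Artin ring, and it has the advantage of not relying on the Noetherian theorem or on stating Nakayama. The paper's version is shorter because it is willing to quote both of those results; your version trades brevity for self-containment. Either is perfectly acceptable here, and since you already flag the Nakayama shortcut, your write-up in fact subsumes the paper's proof.
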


\begin{proof}
By assumption, the descending chain of ideals $\mm\supset\mm^2\supset\cdots\supset \mm^n\supset\cdots$ is stationary, hence there exists $n\geq 1$ such that $\mm^{n+1}=\mm \cdot \mm^{n}=\mm^n$. Since $R$ is Noetherian by the previous theorem, $\mm$ is finitely generated, and since $R$ is local with unique maximal ideal $\mm$,    $\mm^n=0$  by Nakayama's lemma.
\end{proof}

We also have the following result.

\begin{theorem}\cite[Ch.8, Thm. 8.7]{At}\label{struc}
Any Artin ring is isomorphic to the direct product of finitely many Artin local rings. In particular, an Artin ring has finitely many maximal ideals.
\end{theorem}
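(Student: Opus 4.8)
The plan is to follow the classical three-step route: first pin down the maximal ideals, then show the nilradical is nilpotent, and finally split $R$ via the Chinese Remainder Theorem into the asserted product of local Artin rings.

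First I would dispose of the \emph{finitely many maximal ideals} clause, which is really the starting point of the whole argument. Suppose, for contradiction, that $R$ had an infinite sequence of distinct maximal ideals $\mm_1,\mm_2,\ldots$, and form the descending chain $\mm_1\supseteq \mm_1\cap\mm_2\supseteq\mm_1\cap\mm_2\cap\mm_3\supseteq\cdots$. The Artinian hypothesis makes this chain stationary, so for some $N$ we get $\mm_1\cap\cdots\cap\mm_N\subseteq\mm_{N+1}$. As $\mm_{N+1}$ is prime, it must contain one of the $\mm_i$ with $i\le N$, and maximality forces $\mm_i=\mm_{N+1}$, contradicting distinctness. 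Hence there are only finitely many maximal ideals $\mm_1,\ldots,\mm_r$; I write $\mathfrak{N}=\mm_1\cap\cdots\cap\mm_r$ for their intersection, which coincides with the nilradical because every prime of an Artin ring is maximal.

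The main step --- and the one I expect to be the real obstacle --- is to show that $\mathfrak{N}$ is \emph{nilpotent}, i.e. $\mathfrak{N}^n=0$ for some $n$; this is the global analogue of Corollary \ref{artinnil} and is proved by a similar minimal-ideal argument. Applying the descending chain condition to $\mathfrak{N}\supseteq\mathfrak{N}^2\supseteq\cdots$ yields an ideal $\mathfrak{a}=\mathfrak{N}^n=\mathfrak{N}^{n+1}$ satisfying $\mathfrak{a}=\mathfrak{N}\mathfrak{a}=\mathfrak{a}^2$. If $\mathfrak{a}\neq 0$, the nonempty family of ideals $\mathfrak{b}$ with $\mathfrak{a}\mathfrak{b}\neq 0$ has, by the minimal condition, a minimal member, which one checks can be taken principal, say $(x)$ with $x\mathfrak{a}\neq 0$; minimality applied to $x\mathfrak{a}\subseteq (x)$ then forces $x\mathfrak{a}=(x)$, so $x=xy$ for some $y\in\mathfrak{a}\subseteq\mathfrak{N}$. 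Iterating gives $x=xy^m$, and nilpotency of $y$ makes $x=0$, a contradiction; hence $\mathfrak{a}=0$.

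With these two facts in hand the conclusion is routine. The distinct maximal ideals $\mm_1,\ldots,\mm_r$ are pairwise comaximal, hence so are their $n$-th powers, and comaximality gives $\bigcap_i\mm_i^n=\prod_i\mm_i^n=(\prod_i\mm_i)^n=\mathfrak{N}^n=0$. The Chinese Remainder Theorem then yields $R\cong\prod_{i=1}^r R/\mm_i^n$. Finally each factor $R/\mm_i^n$ is Artinian as a quotient of $R$, and is local: the image of $\mm_i$ is nilpotent there (it is killed by the $n$-th power), so every prime of $R/\mm_i^n$ contains $\mm_i/\mm_i^n$, making $\mm_i/\mm_i^n$ the unique maximal ideal. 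This exhibits $R$ as a finite product of local Artin rings, completing the proof.
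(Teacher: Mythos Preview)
Your proof is correct and is essentially the classical Atiyah--Macdonald argument. Note, however, that the paper does not give its own proof of this theorem: it simply cites \cite[Ch.~8, Thm.~8.7]{At} and uses the statement as a black box in Appendix~\ref{app_comm_alg}. So there is nothing to compare your approach against; you have supplied the standard proof that the paper omits.

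One tiny remark on the write-up: in the chain of equalities $\bigcap_i\mm_i^{\,n}=\prod_i\mm_i^{\,n}=(\prod_i\mm_i)^n=\mathfrak{N}^{\,n}$, the step $\prod_i\mm_i=\mathfrak{N}$ is itself a consequence of the pairwise comaximality of the $\mm_i$ (product equals intersection for comaximal ideals). You clearly know this, but since you invoke comaximality explicitly only for the $\mm_i^{\,n}$, it would read more cleanly to say so for the $\mm_i$ as well, or simply to note $\prod_i\mm_i\subseteq\bigcap_i\mm_i=\mathfrak{N}$, which is all that is needed.
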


We now define Hensel rings.

\begin{defi}
A commutative ring $R$ is a {\it Hensel} ring if $R$ is local, with maximal ideal $\mm$, and for any monic polynomial $f\in R[X]$ such that $\ov{f}=\ov{g}_0 \ov{h}_0\in R/\mm[X]$ for some coprime monic polynomials $\ov{g}_0,\ov{h}_0\in R/\mm[X]$, there exists coprime monic polynomials $g,h\in R[X]$ such that $f=gh$ and $\ov{g}=\ov{g}_0,\ov{h}=\ov{h}_0$.
\end{defi}

The following result is well-known.

\begin{proposition}\label{artinhensel}
Any local Artin ring is a Hensel ring.
\end{proposition}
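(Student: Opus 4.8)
The plan is to establish Hensel's lemma for $R$ by the classical successive-approximation argument, exploiting the nilpotence of $\mm$. First I would invoke Corollary~\ref{artinnil} to fix $n\geq 1$ with $\mm^n=0$; the point is that the ideal $\mm R[X]$ of $R[X]$ is then nilpotent, since $(\mm R[X])^n\subseteq \mm^n R[X]=0$, so that every polynomial of the form $1+m$ with $m\in\mm R[X]$ is a unit of $R[X]$. Write $d=\deg f$ and let $d_0=\deg\ov{g}_0$, $d_1=\deg\ov{h}_0$, so $d_0+d_1=d$ (here I use that $f$, $\ov{g}_0$, $\ov{h}_0$ are monic).

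Next I would run the induction. Choose arbitrary monic lifts $g_1,h_1\in R[X]$ of $\ov{g}_0,\ov{h}_0$ of degrees $d_0,d_1$, so that $f\equiv g_1h_1 \pmod{\mm R[X]}$. Assume inductively that for some $i\geq 1$ we have monic $g_i,h_i\in R[X]$ of degrees $d_0,d_1$, reducing mod $\mm$ to $\ov{g}_0,\ov{h}_0$, with $f\equiv g_ih_i\pmod{\mm^i R[X]}$. Seek corrections $\delta,\ve\in R[X]$ with coefficients in $\mm^i$ and $\deg\delta<d_0$, $\deg\ve<d_1$, and put $g_{i+1}=g_i+\delta$, $h_{i+1}=h_i+\ve$ (still monic of degrees $d_0,d_1$). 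Then
\[
f-g_{i+1}h_{i+1}=(f-g_ih_i)-(\delta h_i+\ve g_i)-\delta\ve ,
\]
and since the coefficients of $\delta\ve$ lie in $\mm^{2i}\subseteq\mm^{i+1}$, it is enough to solve $\delta h_i+\ve g_i\equiv f-g_ih_i \pmod{\mm^{i+1}R[X]}$.

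This congruence is the heart of the matter, and it is solved using coprimality of $\ov{g}_0,\ov{h}_0$ in the PID $(R/\mm)[X]$. Reducing coefficients mod $\mm^{i+1}$, one works over the $(R/\mm)$-vector space $\mm^i/\mm^{i+1}$: the residue of $f-g_ih_i$ is a polynomial there of degree $<d$, and one needs $\ov{\delta},\ov{\ve}$ of degrees $<d_0,<d_1$ with $\ov{\delta}\,\ov{h}_0+\ov{\ve}\,\ov{g}_0$ equal to it. Since $\gcd(\ov{g}_0,\ov{h}_0)=1$, the $(R/\mm)$-linear map $(\ov{\delta},\ov{\ve})\mapsto\ov{\delta}\,\ov{h}_0+\ov{\ve}\,\ov{g}_0$ between the spaces of such truncated polynomials is injective, hence bijective by a dimension count ($d_0+d_1=d$), and it stays bijective after base change to $\mm^i/\mm^{i+1}$; so the required $\ov{\delta},\ov{\ve}$ exist, and I lift them to $\delta,\ve\in\mm^i R[X]$ with the stated degree bounds. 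Iterating up to $i=n$ and using $\mm^n=0$ yields monic $g:=g_n$, $h:=h_n$ with $f=gh$ and $\ov{g}=\ov{g}_0$, $\ov{h}=\ov{h}_0$. For coprimality in $R[X]$: lift a Bézout identity $\ov{a}\,\ov{g}_0+\ov{b}\,\ov{h}_0=1$ to $a,b\in R[X]$; then $ag+bh=1-m$ with $m\in\mm R[X]$, which is a unit of $R[X]$ by the first paragraph, so $(g)+(h)=R[X]$.

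The only delicate point is the inductive correction step: one must keep the degrees of $g_i,h_i$ pinned at $d_0,d_1$ (which dictates the constraints $\deg\delta<d_0$, $\deg\ve<d_1$) and check that under exactly these constraints the linear system is solvable — and that is precisely where coprimality of $\ov{g}_0$ and $\ov{h}_0$ enters. Everything else is routine bookkeeping. One could alternatively bypass the iteration by noting that a local Artin ring is a complete Noetherian local ring and citing the standard fact that complete local rings are Henselian, but the argument above keeps the appendix self-contained.
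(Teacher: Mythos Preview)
Your proof is correct. The successive-approximation argument is carried out carefully: the degree bookkeeping is right (so the $g_i,h_i$ stay monic of the correct degrees and $f-g_ih_i$ has degree $<d$), the reduction of the correction step to a linear problem over the $R/\mm$-vector space $\mm^i/\mm^{i+1}$ is exactly what is needed, and the injectivity/dimension-count argument using coprimality of $\ov g_0,\ov h_0$ is clean. The coprimality of $g$ and $h$ at the end via nilpotence of $\mm R[X]$ is also fine.

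The paper, however, takes the shortcut you yourself mention in your last sentence: it observes that since $\mm$ is nilpotent (Corollary~\ref{artinnil}), $R$ coincides with its $\mm$-adic completion, hence is complete, and then simply cites the standard fact (Milne, \emph{\'Etale cohomology}, Prop.~4.5) that complete local rings are Henselian. So the paper's proof is two lines and an external reference, while yours unpacks the content of that reference in the special (and easier) case where the maximal ideal is nilpotent rather than merely separated. What you gain is self-containment---no appeal to an outside text---at the cost of a page of explicit lifting; what the paper gains is brevity at the cost of a black-box citation. Both are perfectly legitimate for an appendix of this kind.
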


\begin{proof}
Since the maximal ideal $\mm$ of a local ring is nilpotent by Corollary \ref{artinnil}, $R$ is canonically isomorphic to its $\mm$-completion, that is $R$ is complete. Since complete rings are Hensel rings by \cite[Prop. 4.5]{Mi}, we are done.
\end{proof}

%\appendix

\section{Azumaya Algebras} \label {app_azu}

We collect here some notions on Azumaya algebras that are needed in the paper. The word `algebra' implicitly means `associative algebra with unit'.

In this section, $R$ is a commutative ring with unit. We first define Azumaya $R$-algebras. The reader willing to learn more about Azumaya algebras will refer to \cite[III.5]{Kn}.

\begin{defi}
An {\it Azumaya} $R$-algebra is an $R$-algebra $A$, which is finitely generated as an $R$-module and such that $A\otimes_R R/\mathfrak{m}$ is a central simple $R/\mathfrak{m}$-algebra for every maximal ideal $\mathfrak{m}$ of $R$. 
\end{defi}

\begin{example}~ \label {ex_Az_alg}
\begin{enumerate}
\item Let $B$ be a central simple $k$-algebra, and let $R$ be a commutative $k$-algebra. Then $A=B\otimes_k R$ is an Azumaya $R$-algebra.
 
Indeed, since $B$ is finite dimensional over $k$, $B\otimes_k R$ is finitely generated as an $R$-module. Let $\mathfrak{m}$ be any maximal ideal of $R$. Since $R$ is a $k$-algebra, $k$ identifies to a subring of $R$, and we have a ring morphism $k\to R/\mathfrak{m}$ which is injective, since $k$ is a field. Hence $R/\mathfrak{m}$ is a field extension of $k$. 
Now we have $$A\otimes_R R/\mathfrak{m}=(B\otimes_k R)\otimes_R  
R/\mathfrak{m}\cong_{R/\mm} B\otimes_k R/\mathfrak{m}.$$
Since $B$ is a central simple $k$-algebra, $B\otimes_k R/\mathfrak{m}$ is a central simple $R/\mathfrak{m}$-algebra (see \cite[Corollary III.1.5 (2)]{BOText}) and we are done.

\item  If $A$ and $A'$ are Azumaya $R$-algebras, then $A\otimes_R A'$ is an Azumaya $R$-algebra. First, since $A$ and $A'$ are finitely generated as $R$-modules, so is $A\otimes_R A'$. Now for every maximal ideal $\mathfrak{m}$ of $R$,
we have $$(A\otimes_R A')\otimes_R R/\mathfrak{m}\cong_{R/\mm} (A\otimes_R R/\mathfrak{m})\otimes_{R/\mm} (A'\otimes_R R/\mathfrak{m}).$$
This last $R/\mathfrak{m}$-algebra is the product  of two central simple $R/\mathfrak{m}$-algebras be assumption, hence a central simple $R/\mathfrak{m}$-algebra by \cite[Corollary III.1.5 (1)]{BOText}.
 
\item  For all $n\geq 1$, $M_n(R)$ is an Azumaya $R$-algebra. Indeed, $M_n(R)$ is a finitely generated $R$-module, and for every maximal ideal $\mathfrak{m}$ of $R$, we have $$M_n(R)\otimes_R R/\mathfrak{m}\cong_{R/\mathfrak{m}}M_n(R/\mathfrak{m}),$$
which is central simple over $R/\mathfrak{m}$.

\item  \label {ex_quat_over_R} We will assume in this example that $R$ is such that for all maximal ideals $\mathfrak{m}$, $R/\mathfrak{m}$ is of characteristic not $2$. Let $a,b\in R^\times,$ and consider the $R$-submodule $(a,b)_R$ of $M_4(R)$ generated by the matrices 
$$I_4=\left(\begin{array}{cccc}1 & 0 & 0 & 0\cr 0 & 1 & 0 & 0 \cr 0& 0 & 1 & 0 \cr 0& 0&0&1 \end{array}\right) , e=\left(\begin{array}{cccc}0 & a & 0 & 0\cr 1 & 0 & 0 & 0 \cr 0& 0 & 0 & a \cr 0& 0&1&0 \end{array}\right) ,$$
$$ f=\left(\begin{array}{cccc}0 & 0 & b & 0\cr 0 & 0 & 0 & -b \cr 1& 0 & 0 & 0 \cr 0& -1&0&0 \end{array}\right) ,
ef=\left(\begin{array}{cccc}0 & 0 & 0 & -ab\cr 0 & 0 & b & 0 \cr 0& -a & 0 & 0 \cr 1& 0&0&0 \end{array}\right).$$ 
 
Straightforward computations show that these matrices are linearly independent over $R$, and that we have $$e^2=a,f^2=b,fe=-ef.$$
It easily follows that $(a,b)_R$ is a free $R$-module of rank $4$, which is an $R$-subalgebra of $M_4(R)$.
This $R$-algebra is denoted by $(a,b)_R$.

It can be viewed also as the $R$-algebra generated by two elements $e,f$ subject to the relations $$e^2=a,f^2=b, ef=-fe.$$

Then $(a,b)_R$ is an Azumaya $R$-algebra. Indeed, let $\mathfrak{m}$ be a maximal ideal of $R$. Since $a,b\in R^\times,$ $a$ and $b$ are non-zero elements of $R/\mathfrak{m}$. The explicit realization above shows easily that we have $$(a,b)_R \otimes_R R/\mathfrak{m}\cong_{R/\mathfrak{m}} (\ov{a}, \ov{b})_{R/\mathfrak{m}},$$ and it is well known that over a field of characteristic not $2$, the quaternion algebra generated by symbols $e$ and $f$ subject to $e^2=\ov{a},f^2=\ov{b}, ef=-fe$ is a central simple algebra. 
Hence the conclusion.
\end{enumerate} 
\end{example}

Azumaya algebras share common properties with central simple algebras. For example, we have the following lemma.

\begin{lemma}\label{inj}
Let $A$ and $B$ be two $R$-algebras. Assume that $A$ is an Azumaya $R$-algebra, and that $B$ is a faithful $R$-algebra, that is the $R$-algebra map $$\func{R}{B}{r}{r\cdot 1_B}$$ is injective. Then every $R$-algebra morphism $f:A\to B$ is injective.
\end{lemma}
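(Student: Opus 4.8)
The plan is to realize $\ker f$ as a two-sided ideal of $A$ that meets the central subring $R$ only in $0$, and then to invoke the fact that an Azumaya algebra admits no such nonzero ideal.

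First, $A$ is itself faithful over $R$: the composite $R\to A\xrightarrow{\,f\,}B$, $r\mapsto r\cdot 1_A\mapsto r\cdot 1_B$, is the structural map of $B$ and hence injective, so $R\to A$, $r\mapsto r\cdot 1_A$, is injective; we regard $R$ as a central subring of $A$. Since $f$ is an $R$-algebra morphism, $K:=\ker f$ is a two-sided ideal of $A$, and $K\cap R$ is the kernel of the restriction $f|_R$, which is the (injective) structural map of $B$; hence $K\cap R=0$.

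The key input is the standard correspondence for Azumaya algebras: for $A$ Azumaya over $R$, the assignment $I\mapsto I\cap R$ is an inclusion-preserving bijection from two-sided ideals of $A$ onto ideals of $R$, with inverse $\mathfrak a\mapsto A\mathfrak a=\mathfrak a A$; in particular the only two-sided ideal of $A$ trivially intersecting $R$ is $0$. Applied to $I=K$ this gives $K=A(K\cap R)=0$, i.e. $f$ is injective. This correspondence may be quoted from \cite[III.5]{Kn}, where it is obtained from Morita theory through the defining isomorphism $A\otimes_R A^{\mathrm{op}}\xrightarrow{\sim}\mathrm{End}_R(A)$: two-sided ideals of $A$ are precisely the $A\otimes_R A^{\mathrm{op}}$-submodules of $A$, and the $R$-progenerator $A$ carries these onto the $R$-submodules of $R$.

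For a self-contained argument one localizes: $K=0$ iff $K_{\mathfrak m}=0$ for every maximal ideal $\mathfrak m$ of $R$, where $A_{\mathfrak m}$ is Azumaya over the local ring $R_{\mathfrak m}$ and $K_{\mathfrak m}$ is a two-sided ideal with $K_{\mathfrak m}\cap R_{\mathfrak m}=0$. Modulo the maximal ideal, $A_{\mathfrak m}/\mathfrak m A_{\mathfrak m}$ is central simple over the residue field, so the image of $K_{\mathfrak m}$ there is $0$ or everything; ``everything'' forces $A_{\mathfrak m}=K_{\mathfrak m}$ by Nakayama on the finitely generated module $A_{\mathfrak m}/K_{\mathfrak m}$, hence $B_{\mathfrak m}=0$ and $R_{\mathfrak m}=0$, absurd; so $K_{\mathfrak m}\subseteq\mathfrak m A_{\mathfrak m}$. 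I expect the main obstacle to be the final step, namely upgrading $K_{\mathfrak m}\subseteq\mathfrak m A_{\mathfrak m}$ to $K_{\mathfrak m}=\mathfrak m K_{\mathfrak m}$ so that Nakayama kills $K_{\mathfrak m}$; this is the one point where the separability of the Azumaya algebra is genuinely needed rather than mere fibrewise simplicity, and it is exactly what the ideal-correspondence theorem encapsulates. In the situation where this lemma is used, $R$ is a finite-dimensional $k$-algebra and hence Artinian (Example \ref{exartin}), so the finite-generation properties needed above are automatic; in full generality I would simply cite \cite{Kn}.
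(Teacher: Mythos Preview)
Your proposal is correct and takes essentially the same approach as the paper. Both arguments rest on the ideal correspondence for Azumaya algebras---every two-sided ideal of $A$ has the form $IA$ for a unique ideal $I\lhd R$---and then kill $I$ using faithfulness of $B$; the paper cites \cite[Corollary 3.7]{Dem} for this correspondence while you cite \cite{Kn}, and you compute $I=K\cap R=0$ before invoking the bijection whereas the paper first writes $K=IA$ and then observes $I\cdot 1_B=0$. These are the same proof up to packaging; your additional localization sketch (with its honestly flagged gap at the Nakayama step) is extra and not present in the paper.
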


\begin{proof}
Let $A,B$ and $f:A\to B$ as in the statement of the lemma. Then $\ker(f)$ is a two-sided ideal, hence an $A$-$A$-bimodule. By \cite[Ch. III, Theorem 5.1.1. (2)]{Kn}, $A$ is central, that is the $R$-algebra map $$\func{R}{Z(A)}{r}{r\cdot 1_A}$$ is an isomorphism, and separable, meaning that $A$ is a projective module for the natural $A\otimes_R A^{op}$-module structure induced by the multiplication map. By \cite[Corollary 3.7]{Dem}, there exists an ideal $I$ of $R$ such that $\ker(f)=I\cdot A$. 
Since $\ker(f)=I\cdot A$, for all $x\in I$, we have $$0_B=f(x\cdot 1_A )=x\cdot f(1_A)=x\cdot 1_B.$$
By assumption on $B$, we get $x=0$. Thus $I=0$, and $\ker(f)=0$.
\end{proof}

\begin{remark}\label{reminj}
If $B$ is any ring, and $R$ is a commutative subring of $B$, then the product law endows $B$ with the structure of an $R$-algebra satisfying the condition of the previous lemma, since for any $r\in R$, we have $r\cdot 1_B=r1_B=r$.
\end{remark}

The following result was proven in \cite[Theorem 32]{Az}, and will be useful to prove the next proposition.

\begin{theorem}
Let $R$ be a Hensel ring, with unique maximal ideal $\mathfrak{m}$. For every central simple $R/\mathfrak{m}$-algebra $B$, there exists an Azumaya $R$-algebra $A$, unique up to $R$-isomorphism, such that $A\otimes_R R/\mathfrak{m}\cong_{R/\mm} B.$
\end{theorem}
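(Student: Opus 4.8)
\textit{Proof proposal.}
I would treat this as a deformation (infinitesimal lifting) problem along the nilpotent filtration of $R$; this handles at least the case the paper actually invokes, namely $R$ a local Artin ring (Example \ref{exartin}, Proposition \ref{artinhensel}), for which $\mm^{N}=0$ for some $N$ by Corollary \ref{artinnil}. Set $R_i=R/\mm^{i+1}$, so $R_0=k=R/\mm$ and $R_{N-1}=R$, and each surjection $R_i\twoheadrightarrow R_{i-1}$ has kernel $\mm^i/\mm^{i+1}$, a $k$-vector space which is a \emph{square-zero} ideal of $R_i$ (because $\mm^{2i}\subseteq\mm^{i+1}$ for $i\ge 1$). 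The whole theorem then reduces to one step: given a flat $R_{i-1}$-algebra $A_{i-1}$, finitely generated as an $R_{i-1}$-module, with $A_{i-1}\otimes_{R_{i-1}}k\cong B$, there is an $R_i$-algebra $A_i$, flat and finitely generated as a module (hence free, since $R_i$ is local), unique up to $R_i$-isomorphism, with $A_i\otimes_{R_i}R_{i-1}\cong A_{i-1}$.

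For that single step I would quote the standard deformation theory of associative algebras over a commutative base: the obstruction to extending the flat deformation $A_{i-1}$ across the square-zero extension $R_i\to R_{i-1}$ lives in a Hochschild cohomology group $\mathrm{HH}^{3}_{k}\!\big(B,\,B\otimes_k(\mm^i/\mm^{i+1})\big)$, and, when nonempty, the set of isomorphism classes of such extensions is a torsor under $\mathrm{HH}^{2}_{k}\!\big(B,\,B\otimes_k(\mm^i/\mm^{i+1})\big)$. The essential input is that a central simple $k$-algebra $B$ is a \emph{separable} $k$-algebra, i.e.\ projective as a module over $B\otimes_k B^{\mathrm{op}}$, so $\mathrm{HH}^{n}_{k}(B,M)=0$ for every $n\ge 1$ and every bimodule $M$. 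Hence the obstruction vanishes (so $A_i$ exists) and the torsor is a single point (so $A_i$ is unique up to isomorphism). Since $A_i$ is finitely generated over the Noetherian local ring $R_i$ with $A_i\otimes_{R_i}k\cong B$ central simple, $A_i$ is an Azumaya $R_i$-algebra by definition.

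Iterating up the chain $R_0\leftarrow R_1\leftarrow\cdots\leftarrow R_{N-1}=R$, starting from $A_0=B$, produces the Azumaya $R$-algebra $A=A_{N-1}$ with $A\otimes_R R/\mm\cong_{R/\mm}B$, giving existence. Uniqueness runs down the same ladder: if $A$ and $A'$ both lift $B$, then by downward induction $A\otimes_R R_i\cong A'\otimes_R R_i$ for all $i$, the inductive step being exactly the uniqueness-of-lift statement above, and at $i=N-1$ this yields $A\cong_R A'$.

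I expect the main obstacle to be twofold. First, making the obstruction calculus precise — setting up Hochschild cohomology over a commutative base, identifying the obstruction/torsor groups, and invoking the separability vanishing — which is where one should either cite a reference or reprove the needed special case. Second, a genuinely Henselian but non-Artin $R$ is not covered by this ladder, since $\mm$ need not be nilpotent; there one must pass to the $\mm$-adic completion and then algebraize, invoking an approximation result (Artin approximation, or Elkik's theorem on lifting smooth affine schemes over a Henselian pair) together with smoothness of $\mathrm{PGL}_n$. An alternative, more conceptual packaging that sidesteps the explicit computation is to classify Azumaya $R$-algebras of rank $n^2$ by $\mathrm{PGL}_n$-torsors (\'etale/fppf descent) and to quote that, $\mathrm{PGL}_n$ being smooth, the specialization map $H^1_{\text{\'et}}(R,\mathrm{PGL}_n)\to H^1_{\text{\'et}}(k,\mathrm{PGL}_n)$ is a bijection for $R$ Henselian local; decomposing $B=M_t(D)$ and applying this degree by degree gives existence and uniqueness simultaneously. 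Either way, the crux is the same: over a Henselian base, smoothness/separability of the relevant structure annihilates all the obstructions.
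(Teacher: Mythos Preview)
The paper does not actually prove this theorem: it is quoted verbatim as \cite[Theorem 32]{Az} and used as a black box in the proof of Proposition \ref{artin}. So there is no in-paper argument to compare against; only Azumaya's original 1951 proof.

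Your proposal is correct, and for the Artin case---which is the only case the paper invokes, via Proposition \ref{artinhensel}---it is complete: the square-zero ladder $R_0\leftarrow R_1\leftarrow\cdots\leftarrow R_{N-1}=R$ is set up properly, the deformation-theoretic identification of obstructions and isomorphism classes with $\mathrm{HH}^3$ and $\mathrm{HH}^2$ is the standard one, and the vanishing $\mathrm{HH}^n_k(B,M)=0$ for $n\ge 1$ when $B$ is separable over $k$ is exactly the right input. Your discussion of the general Henselian case (completion plus approximation, or the $\mathrm{PGL}_n$-torsor viewpoint using smoothness to get the bijection $H^1_{\text{\'et}}(R,\mathrm{PGL}_n)\to H^1_{\text{\'et}}(R/\mm,\mathrm{PGL}_n)$) is also accurate, and you are right to flag it as the nontrivial step beyond Artin rings.

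By way of comparison, Azumaya's 1951 argument predates both the Hochschild deformation formalism and \'etale cohomology; it proceeds by direct algebraic means---lifting idempotents and matrix units through the nilradical and exploiting the structure theory of maximally central algebras. Your route is therefore genuinely different: more conceptual, more readily generalized (the torsor statement immediately gives the analogue for any smooth affine group scheme), but with heavier prerequisites. For the purposes of this paper either is overkill, since only the Artin case is needed and there the nilpotent-ladder argument you give is entirely self-contained once one grants the Hochschild vanishing for separable algebras.
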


\begin{proposition}\label{artin}
Let $R$ be an Artin ring, and $A$, $B$ be Azumaya $R$-algebras. Then $A\cong_R B$ if and only if $A\otimes_R R/\mm\cong_{R/\mm} B\otimes_R R/\mm$ for every maximal ideal $\mm$ of $R$.
\end{proposition}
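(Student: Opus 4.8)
The plan is to prove the forward direction ($A \cong_R B$ implies the isomorphism modulo every maximal ideal) by simply tensoring, and to concentrate all the real work on the converse. Both algebras being Azumaya over the commutative ring $R$, an $R$-isomorphism $A \to B$ base-changes along $R \to R/\mm$ to an $R/\mm$-isomorphism $A\otimes_R R/\mm \to B\otimes_R R/\mm$; this direction is essentially a one-line functoriality argument.

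For the converse, the key reduction is to invoke Theorem \ref{struc} from Appendix \ref{app_comm_alg}: an Artin ring $R$ decomposes as a finite product $R \cong R_1 \times \cdots \times R_m$ of local Artin rings $R_i$, each with maximal ideal $\mm_i$, and the maximal ideals of $R$ are precisely the preimages of the $\mm_i$. Under this decomposition, any $R$-module (in particular an $R$-algebra) decomposes correspondingly as a product $A \cong A_1 \times \cdots \times A_m$ with $A_i = A \otimes_R R_i$, and the $A_i$ are Azumaya $R_i$-algebras; moreover $A_i \otimes_{R_i} R_i/\mm_i \cong A \otimes_R R/\mm$ for the corresponding maximal ideal. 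So it suffices to treat the case where $R$ is a \emph{local} Artin ring with maximal ideal $\mm$: assuming $A \otimes_R R/\mm \cong_{R/\mm} B \otimes_R R/\mm$, we want $A \cong_R B$. Now by Proposition \ref{artinhensel}, a local Artin ring is a Hensel ring, so the uniqueness clause of the theorem of \cite{Az} quoted just before this proposition applies directly: given the central simple $R/\mm$-algebra $C := A \otimes_R R/\mm$, there is an Azumaya $R$-algebra, unique up to $R$-isomorphism, whose reduction mod $\mm$ is $C$. Both $A$ and $B$ are such algebras, hence $A \cong_R B$.

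The main obstacle, and the one place where care is genuinely needed, is the bookkeeping in the product decomposition step: one must check that base change is compatible with the product decomposition of $R$ (so that an Azumaya algebra over a product is a product of Azumaya algebras over the factors), that an $R$-isomorphism of the products is the same data as a tuple of $R_i$-isomorphisms of the factors, and that the maximal ideals of $R = \prod R_i$ are in natural bijection with $\bigsqcup_i \{\mm_i\}$ in a way that matches the reductions. These are all standard facts about finite products of rings, but they need to be stated cleanly so that the reduction to the local case is airtight. Once that reduction is in hand, the local case is immediate from Henselity plus the cited uniqueness theorem, with no computation required.

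I would structure the write-up as: (1) the easy direction by tensoring; (2) the reduction, via Theorem \ref{struc}, to $R$ local Artin, with the product-decomposition compatibilities spelled out; (3) the local case, citing Proposition \ref{artinhensel} to get Henselity and then the uniqueness part of the \cite{Az} theorem to conclude. This keeps the proof short and isolates the only delicate point (step 2) from the parts that are formal.
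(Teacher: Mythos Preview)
Your proposal is correct and follows essentially the same route as the paper's proof: reduce via Theorem~\ref{struc} to a finite product of local Artin rings, check the compatibilities so that it suffices to treat each local factor, and then invoke Proposition~\ref{artinhensel} together with the uniqueness clause of the theorem of \cite{Az} to conclude in the local case. The paper spells out the product-decomposition bookkeeping (your step~2) a bit more explicitly, but the argument is otherwise identical.
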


\begin{proof}
One implication is trivial. To prove the other one, notice that by Theorem \ref{struc}, we have a ring isomorphism $$\varphi: R\overset{\sim}{\to} R_1\times \cdots \times R_s,$$ for some local Artin rings $R_1,\ldots,R_s.$ 
We then have a 1-1-correspondence between the set of Azumaya $R$-algebras $A$ and the set of tuples $(A_1,\ldots,A_s)$, where $A_i$ is an Azumaya $R_i$-algebra, which is given by 
$$\begin{array}{ccc}A&\longmapsto& (A\otimes_R R_1,\ldots, A\otimes_R R_s)\cr (A_1\times \cdots \times A_s)\otimes_{R_1\times\cdots\times R_s} R&\longmapsfrom & (A_1,\ldots,A_s).\end{array}$$

Moreover, $A\cong_R B$ if and only if
$A\otimes_R R_i\cong_{R_i} B\otimes R_i$ for $i=1,\ldots,s.$

Let $\mm'_i$ be the maximal ideal of $R_i$. Then the ideal 
$$\mm_i=\varphi^{-1}(R_1\times\cdots\times R_{i-1}\times\mm'_i\times R_{i+1}\times \cdots\times  R_s)$$ is a maximal ideal of $R$, and the canonical projection $R\to R_i$ induces a ring isomorphism $$R/\mm_i\overset{\sim}{\to} R_i/\mm'_i.$$
This yields $$A\otimes_R R/\mm\cong_{R_i/\mm'_i}(A\otimes_R R_i)\otimes_{R_i}R_i/\mm'_i.$$
Hence, by assumption we get $$(A\otimes_R R_i)\otimes_{R_i}R_i/\mm'_i\cong_{R_i/\mm'_i} (B\otimes_R R_i)\otimes_{R_i}R_i/\mm'_i.$$

Since $R_i$ is a local Artin ring, it is a Hensel ring by Proposition \ref{artinhensel}. The previous theorem then shows that $A\otimes_R R_i\cong_{R_i} B\otimes R_i$. Since this is true for all $i=1,\ldots,s$, we get $A\cong_R B$ as required.
\end{proof}

As a consequence, we get the following proposition, which will be crucial for our coding considerations.

\begin{proposition}\label{prodquat}
Let $k$ be a number field, and let $R$ be a finite-dimensional commutative $k$-algebra. For all $a,b,a',b'\in R^\times$, there exist $c,d\in R^\times$ such that $$(a,b)_R\otimes_R (a',b')_R\cong_R M_2(R)\otimes_R (c,d)_R.$$
\end{proposition}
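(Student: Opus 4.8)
The plan is to mimic the classical fact that a tensor product of two quaternion algebras over a field is either a division algebra or a $4\times 4$ matrix algebra, and more precisely the identity $(a,b)\otimes(a',b') \cong M_2\otimes(aa', \text{something})$ which over a field follows from the ``common slot'' lemma. Over a general commutative ring this is false, but the point is that $R$ here is a \emph{finite-dimensional commutative $k$-algebra}, hence an Artin ring (Example \ref{exartin}), and Proposition \ref{artin} reduces isomorphism questions about Azumaya $R$-algebras to the residue fields $R/\mm$, where $R/\mm$ is a finite extension of the number field $k$.

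Concretely, I would proceed as follows. First, observe that both sides of the desired isomorphism are Azumaya $R$-algebras: $(a,b)_R$ and $(a',b')_R$ are Azumaya by Part \ref{ex_quat_over_R} of Examples \ref{ex_Az_alg} (the residue fields have characteristic $\ne 2$ since they are number fields), $M_2(R)$ is Azumaya by Part 3, and tensor products of Azumaya algebras are Azumaya by Part 2. The difficulty, though, is that I want to produce a \emph{single} pair $c,d\in R^\times$ that works; I cannot just invoke Proposition \ref{artin} residue-field-by-residue-field, because the common-slot element depends on the field. So the real content is: exhibit explicit $c,d\in R^\times$ (built algebraically from $a,b,a',b'$) and then verify the isomorphism after reduction mod every maximal ideal.

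The key algebraic step is the common-slot construction performed \emph{over $R$}. Inside $B=(a,b)_R\otimes_R(a',b')_R$, with generators $e,f$ for the first factor and $e',f'$ for the second, I want to find an element $z\in B$ with $z^2 = -bb' \in R^\times$ (the product of $\pm$ of the two ``$b$-slots''; the sign can be absorbed) that anticommutes with a suitable unit, so that $z$ together with, say, $f\otimes e'$ generates a copy of $(-bb', \cdot)_R$; its centralizer is then forced to be a copy of $(c,d)_R$ for the remaining slots, and the Azumaya centralizer theory (or a direct dimension count: $B$ is free of rank $16$ over $R$) gives $B\cong M_2(R)\otimes_R (c,d)_R$. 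The standard choice is $z = f\otimes 1 - 1\otimes f'$ type element, or rather one uses that $f\otimes e'$ and $e\otimes e'$ anticommute with $f\otimes 1$; I would set $c = aa'$ (from combining the two $a$-slots) and $d = -bb'$ or the appropriate sign, and write down the four generating matrices explicitly much as in Part \ref{ex_quat_over_R}. Having the explicit element, the isomorphism $B\to M_2(R)\otimes_R(c,d)_R$ is a map of Azumaya $R$-algebras; to check it is an isomorphism I apply Proposition \ref{artin}: it suffices that for every maximal ideal $\mm$ of $R$, the reduction $B\otimes_R R/\mm \to M_2(R/\mm)\otimes_{R/\mm} (\ov c,\ov d)_{R/\mm}$ is an isomorphism of $R/\mm$-algebras. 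But over the field $R/\mm$ this is exactly the classical common-slot identity $(\ov a,\ov b)\otimes(\ov a',\ov b')\cong M_2\otimes(\ov a\ov a', -\ov b\ov b')$ for quaternion algebras (valid since $\mathrm{char}\,R/\mm\ne 2$), and since both sides are central simple of the same dimension, the explicit algebra map, being nonzero on a simple algebra, is injective and hence an isomorphism.

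The main obstacle I anticipate is bookkeeping the signs and verifying that the chosen $c,d$ are genuinely units in $R$ (they are, being $\pm$ products of units $a,b,a',b'$) and that the explicit $R$-algebra map is well-defined, i.e.\ the named generators really do satisfy the quaternion relations $e^2=c$, $f^2=d$, $ef=-fe$ over $R$ and commute correctly with the $M_2(R)$ part — this is the ``straightforward computation'' analogue of the one in Part \ref{ex_quat_over_R}, carried out inside the rank-$16$ free module $B$. Once the map is written down, invoking Proposition \ref{artin} together with the classical field case is immediate, so the heart of the argument is the explicit common-slot element and the relation-checking.
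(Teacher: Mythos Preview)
Your strategy has a genuine gap at its core: the ``classical common-slot identity'' you invoke,
\[
(\ov a,\ov b)\otimes(\ov a',\ov b')\;\cong\; M_2\otimes(\ov a\ov a',\,-\ov b\ov b'),
\]
does not exist. Over a field $F$ of characteristic $\ne 2$ the valid identities require an \emph{actual} shared slot, e.g.\ $(a,b)_F\otimes_F(a',b)_F\cong M_2(F)\otimes_F(aa',b)_F$. With four independent parameters there is no universal polynomial formula for $c,d$: for instance over $\Q$ take $(a,b)=(-1,-1)$ and $(a',b')=(-1,3)$; your proposed $(aa',-bb')=(1,3)\cong M_2(\Q)$, so your right-hand side is $M_4(\Q)$, whereas the left-hand side has Brauer class $[(-1,-1)]+[(-1,3)]=[(-1,-3)]$, which is ramified at $\infty$ and hence nontrivial. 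Consequently your ``explicit common-slot element'' in $B$ cannot do what you want; indeed over a general field a biquaternion algebra can be a division algebra of index $4$, so $B$ need not contain any copy of $M_2$ at all.

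What makes the proposition true is precisely the arithmetic input you never use: $R/\mm$ is a \emph{number field}, and over number fields exponent equals index. Since a tensor of two quaternions has exponent $\le 2$, it has index $\le 2$, hence is $M_2\otimes(\ov c_\mm,\ov d_\mm)$ for \emph{some} $\ov c_\mm,\ov d_\mm$ depending on $\mm$. The paper then exploits that $R$ is Artin, hence has only finitely many maximal ideals, and uses the Chinese Remainder Theorem to choose single lifts $c,d\in R$ reducing to the prescribed $\ov c_\mm,\ov d_\mm$ at every $\mm$; these are units because they lie in no maximal ideal. Finally Proposition~\ref{artin} upgrades the residue-field isomorphisms to an $R$-isomorphism. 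So the two ingredients you are missing are (i) the class-field-theoretic fact exponent $=$ index, and (ii) CRT to globalize the residue-wise choices. Your framework via Proposition~\ref{artin} is sound, but the explicit-element route cannot be completed.
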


\begin{proof}
Notice first that $R$ is an Artin ring by Example \ref{exartin}. Let $\mm$ be a maximal ideal of $R$. Notice that  $R/\mm$ is an extension of $k$ of finite degree, the $k$-vector space structure being given by the map $k\to R\to R/\mm.$ Hence $R/\mm$ is a number field (and $(a,b)_R$, etc., are Azumaya algebras over $R$). Since the exponent and index of central simple algebras over a number field must be equal, and since the exponent of the tensor product of two quaternion algebras over  $R/\mm$ is at most $2$, the tensor product is of the form $M_2(B)$, where $B$ is either a division algebra of index $2$, and hence expressible as a quaternion algebra, or else, $B$ is itself $M_2( R/\mm)$, which is expressible as the quaternion $(1,1)_ {R/\mm}$. In either case, 
therefore, there exists $\ov{c}_\mm,\ov{d}_\mm\in (R/\mm)^\times$ such that $$\begin{array}{lll}((a,b)_R\otimes_R(a',b')_R)\otimes_R R/\mm & \cong_{R/\mm}& (\ov{a},\ov{b})_{R/\mm}\otimes_{R/\mm}(\ov{a'},\ov{b'})_{R/\mm} \cr 
& \cong_{R/\mm}& M_2(R/\mm)\otimes_{R/\mm}(\ov{c}_\mm,\ov{d}_\mm)_{R/\mm}\end{array}.$$

Since $R$ has finitely many maximal ideals by Theorem \ref{struc}, the Chinese Remainder Theorem
shows that there exist $c,d\in R$ such that $$c\equiv c_\mm \mod \mm \ \mbox{ and } \ d\equiv d_\mm \mod \mm$$ for all maximal ideals $\mm$ of $R$. Notice that $c,d\in R^\times$, since they do not belong to any maximal ideal of $R$ by construction.

For any maximal ideal $\mm$ of $R$, we then get $$\begin{array}{lll}((a,b)_R\otimes_R(a',b')_R)\otimes_R R/\mm & \cong_{R/\mm}&  M_2(R/\mm)\otimes_{R/\mm}(\ov{c},\ov{d})_{R/\mm} \cr & \cong_{R/\mm} & (M_2(R)\otimes_R (c,d)_R)\otimes_R R/\mm\end{array}.$$
Now apply the previous proposition to conclude.
\end{proof}

%\appendix
\section {Connections between Theorem \ref{thm: max_Greg_anticommute} and the Hurwitz-Radon-Eckmann bound} \label {app_RHE_connections} 

In \cite {Ecm}, Eckmann provided a solution to the complex version of the Hurwitz-Radon problem (and also described the solution of the original Hurwitz-Radon problem concerning real matrices). Eckmann showed that the maximum number of $n\times n$ complex matrices $A_i$ that satisfy  
\begin{enumerate}
\item $A_i A_j + A_j A_i = 0$ for all $i\neq j$,
\item $A_i^2 = -I_n$, and 
\item \label{HRE_unitary} $A_i A_i^* = I_n$
\end{enumerate}
is $2t+1$, where $t = \nu_2(n)$, i.e., the highest power of $2$ that divides $n$.  (The original Hurwitz-Radon problem asked for the maximum number of real matrices satisfying these conditions, but with Condition \ref{HRE_unitary} replaced with orthogonality: $A_i A_i^t = I_n$.)

First note that if a matrix satisfies any two of the following three conditions: 
\begin{equation}
\begin{array}{l}
\displaystyle A_i^2 = -I_n \\
\displaystyle A_i A_i^* = I_n \\
\displaystyle A_i^* = -A_i
\end{array} 
\label{HRE_equiv_cond}
\end{equation}
then it automatically satisfies the third (this is easy to see).
If we now compare the hypotheses of Theorem \ref {thm: max_Greg_anticommute} with those of the generalized Hurwitz-Radon problem, we see that Theorem  \ref {thm: max_Greg_anticommute} generalizes the Hurwitz-Radon-Ekmann bound in two ways: it does not impose any of the three conditions above in (\ref{HRE_equiv_cond}) and only considers pairwise anti commutativity, and secondly, it considers the situation where the matrices arise from the embedding of some $k$-central simple algebra, $k$ a number field, in $M_n(\C)$.  Since Theorem  \ref {thm: max_Greg_anticommute} provides a bound of $2t+3$, we find that the conditions in (\ref{HRE_equiv_cond}) drop the possible number by $2$.

%*************************************************************************************************************************%
%
% REFERENCES
%
%*************************************************************************************************************************%\input{"bib.tex"}

\end{document}